\begin{document}

\title{The Best of Many Worlds: Dual Mirror Descent for\\ Online Allocation Problems}


\author{Santiago Balseiro\thanks{Columbia University Business School (\texttt{srb2155@columbia.edu})}
\and
Haihao Lu\thanks{University of Chicago Booth School of Business (\texttt{haihao.lu@chicagobooth.edu})}
\and
Vahab Mirrokni \thanks{Google Research (\texttt{mirrokni@google.com})}}

\date{This version: \today \\ Previous versions: November 18, 2020 and July 9, 2021 \\[2em]
 Forthcoming in \emph{Operations Research}
}
\maketitle



\begin{abstract}

Online allocation problems with resource constraints are central problems in revenue management and online advertising. In these problems, requests arrive sequentially during a finite horizon and, for each request, a decision maker needs to choose an action that consumes a certain amount of resources and generates reward. The objective is to maximize cumulative rewards subject to a constraint on the total consumption of resources. In this paper, we consider a data-driven setting in which the reward and resource consumption of each request are generated using an input model that is unknown to the decision maker. 
    
{We design a general class of algorithms that attain good performance in various input models without knowing which type of input they are facing. In particular, our algorithms are asymptotically optimal under independent and identically distributed inputs as well as various non-stationary stochastic input models, and they attain an asymptotically optimal fixed competitive ratio when the input is adversarial. Our algorithms operate in the Lagrangian dual space: they maintain a dual multiplier for each resource that is updated using online mirror descent. By choosing the reference function accordingly, we recover the dual sub-gradient descent and dual multiplicative weights update algorithm. The resulting algorithms are simple, fast, and do not require convexity in the revenue function, consumption function and action space, in contrast to existing methods for online allocation problems. We discuss applications to network revenue management, online bidding in repeated auctions with budget constraints, online proportional matching with high entropy, and personalized assortment optimization with limited inventory.}

\bigskip

\noindent\emph{Keywords:} online allocation problems, data-driven algorithms, dual mirror descent, nonconvexity, nonstationarity, stochastic input, adversarial input.

\end{abstract}

    
    


\nvspace
\setstretch{1.5}
\section{Introduction}






A central problem in revenue management and online advertising is the online allocation of requests subject to resource constraints. In revenue management, for example, firms such as hotels and airlines need to decide, when a request for a room or a flight arrives, whether to accept or decline the request~\citep{TalluriVanRyzin2004}. In search advertising, each time a user makes a search, the search engine has an opportunity to show an advertisement next to the organic search results~\citep{Mehta2007JACM}. For each arriving user, the website collects bids from various advertisers who are interested in showing an ad and then needs to decide, in real time, which ad to show to the user. Such decisions are not made in isolation because resources are limited: hotels have a limited number of rooms, planes have a limited number of seats, and advertisers have limited budgets.

In this paper, we study allocation problems with non-linear reward functions, non-linear consumption functions, and potentially integral action space. Requests arrive sequentially during a finite horizon and, for each request, the decision maker needs to choose an action that consumes certain amount of resources and generates a reward. The objective of the decision maker is to maximize cumulative rewards subject to a constraint on the total consumption of resources. The reward  and resource consumption functions of each request are learnt by the decision maker before making a decision. For example, airlines know the fare requested by the consumer before deciding whether to sell the ticket and search engines know advertisers' bids before deciding which ad to show. The decision maker, however, does not get to observe the reward and consumption functions of future requests until their arrival. 
Thus motivated, we consider a data-driven setting in which the reward and resource consumption function of each request are generated from an input model that is unknown to the decision maker. 

The objective of this paper is to design algorithms that attain good performance relative to the best allocation with the benefit of hindsight (also referred as the offline optimum). In particular, our goal is to design \emph{fast} and \emph{robust} algorithms that attain \emph{asymptotic good performance} for \emph{generic} online allocation problems on various inputs model while being \emph{oblivious to the input model}, i.e., without knowing in advance which input model they are facing. {In doing so, our hope is to design algorithms that  attain the best of ``many'' worlds.}

\nnvspace
\subsection{Main Contributions}

We design a general class of algorithms that operate in the Lagrangian dual space and present a flexible framework to analyze their performance under various settings.

If the optimal dual variables were known in advance, the decision maker could, in principle, use these dual variables to price resources and decompose the problem across time periods. In practice, however, the optimal dual variables depend on the entire sequence of requests and are not known to the decision maker in advance. Our algorithms circumvent this issue by maintaining a dual multiplier for each resource, which is updated after each request using online mirror descent. Actions are then taken using the estimated dual variables as a proxy for the opportunity cost of consuming resources. By choosing the reference function in mirror descent accordingly, we recover dual sub-gradient descent and dual multiplicative weights update, which are two popular algorithms used in practice. {Furthermore, our analysis can be easily extended to study other online dual algorithms.}

From the computational perspective, our algorithms are efficient; in many cases the dual variables can be updated after each request in linear time. This is in sharp contrast to most existing algorithms, which require periodically solving large convex optimization problems or knowing bounds on the value benchmark~(see Section~\ref{sec:related} for a literature review). In many applications, such as online advertising, a massive number of decisions need to made in milliseconds and solving large optimization problems is not operationally feasible.

Our algorithms have minimal requirements on the reward functions, the consumption functions and the action space. While most of the previous works on this topic require linear or concave reward functions and linear consumption in decision variables, our analysis is flexible and works for non-concave reward functions, non-linear consumption functions, and integral action space (see Section~\ref{sec:assumption} for details).   Such flexibility allows us to handle more applications, such as assortment optimization, without introducing an exponential number of auxiliary variables. {\color{black} Computationally, our algorithms require that the single-period Lagrangian problem obtained by dualizing the constraints can be efficiently solved, which, as we discuss, is possible in many applications.} Furthermore, our algorithm is robust to noise and corruptions in the observations.

We study the performance of our algorithms on different input models as a function of the number of time periods $T$, when resources are scaled proportionally to the length of the horizon. Our algorithms obtain good performance under various input models without knowing which input model they are facing:
\begin{itemize}
    \item When the input is stochastic and requests are independently and identically drawn (i.i.d.) from a distribution that is unknown to the decision maker, our algorithms attain regret $O(T^{1/2})$, where regret is measured as the difference between the rewards attained by the optimal allocation with the benefit of hindsight and the cumulative rewards collected by the decision maker (Theorem~\ref{thm:master}). Because no algorithm can attain regret lower than $\Omega(T^{1/2})$ under our minimal assumption (Lemma~\ref{lem:lower_bound}), these two results imply that our algorithms attain the optimal order of regret.
    
    \item On the other extreme, when requests are adversarially chosen, no algorithm can attain vanishing regret, but our algorithms are shown to attain a fixed competitive ratio, i.e., they guarantee a fixed fraction of the optimal allocation in hindsight (Theorem~\ref{thm:adversial-no-regularizer}). Our competitive ratios are tight and no algorithm can obtain better competitive ratios without further assumptions on the input.
    
    \item We also consider non-stationary stochastic input models that fill the gap between i.i.d.~input, which is too optimistic, and adversarial input, which is too pessimistic. These inputs are more realistic in many applications. In particular, we show that our algorithms attain regret $\tilde O(T^{1/2})$ when the input is ergodic (Theorem~\ref{thm:ergodic}), $O(T^{1/2}$) regret when the input exhibits seasonality (Theorem~\ref{thm:periodic}), or vanishing regret when an adversary corrupts $o(T)$ number of requests (Theorem~\ref{thm:independent}). 
    
\end{itemize}  



A requisite for obtaining good primal performance is not depleting resources too early; otherwise, the decision maker could miss good future opportunities. Our algorithms have a natural self-correcting feature that prevents them from depleting resources too early. By design, they target to consume a constant number of resources per period to deplete resources exactly at the end of the horizon. When a request consumes more (less) resources than the target, the corresponding dual variable is increased (decreased). Because resources are then priced higher (lower), future actions are chosen to consume resources more conservatively (aggressively). As a result, using the update rule of the dual variables, we can show that our algorithms never deplete resources too early (Proposition~\ref{thm:stopping_time}). Interestingly, this result holds for every sample path regardless of the input model. To the best of our knowledge, this result is new to the online allocation literature and can be of interest for practitioners as, for example, advertisers have a preference for their ads to be delivered smoothly over time so as to maximize reach~\citep{Bhalgat2012smooth,Lee2013smooth,Xu2015smart}.

We then discuss applications to four central problems in revenue management and online advertising: online linear programming, bidding in repeated auctions with budgets, online matching with high entropy, and personalized assortment optimization problems with limited inventories. In all these applications our algorithms yield new results or match existing results with simpler, more efficient implementations. {\color{black} We conclude the paper by discussing several extensions of our work and presenting numerical experiments that validate our theoretical results.}

\nnvspace

\subsection{Related Work}\label{sec:related}

Online allocation problems have been extensively studied in computer science and operations research literature. We overview the related literature below.

\nnvspace
\paragraph{Stochastic inputs.} Early work on online allocation with stochastic input models focus on the case when the reward and resource consumption functions are linear in the decision variable, in particular in the so-called \emph{random permutation model}. In the random permutation model, an adversary first selects a sequence of requests which are then presented to the decision maker in random order. This model is more general than our stochastic i.i.d. setting in which requests are drawn independently and at random from an unknown distribution. \citet{DevanurHayes2009} study the AdWords problem and present a dual training algorithm with two phases: a training phase in which data is used to estimate the dual variables by solving a linear program and an exploitation phase in which actions are taken using the estimated dual variables. Their algorithm can be shown to obtain regret of order $O(T^{2/3})$. \citet{Feldman2010} present similar training-based algorithms for more general linear online allocation problems with similar regret guarantees. 

Pushing these ideas one step further, \citet{Agrawal2014OR}, \citet{Devanur2019near} and \citet{kesselheim2014primal} consider primal- and/or dual-based algorithms that dynamically update decisions by periodically solving a linear program using the data collected so far. These more sophisticated algorithms improve upon previous work by obtaining regret of order $O(T^{1/2})$ and better dependencies on the number of resources. 
{\citet{gupta2016experts} extend the results to the random permutation model.} \citet{AgrawalDevanue2015fast} study general online allocation problems and  present a similar algorithm that maintains and updates dual variables for the constraints. Their algorithm requires an estimate of the value of the benchmark or, when an estimate is not available, their algorithm requires solving an optimization program to estimate the value of the benchmark. Our paper extends this line of work by developing a class of algorithms for general online allocation problems with potentially \emph{non-convex} reward and resource consumption functions, which yield similar regret guarantees with simpler update rules that \emph{do not require} solving large linear programs nor estimates of the value of the benchmark. Moreover, our algorithms are shown to  attain good performance over various input models. 

While the algorithms described above usually require solving large linear problems periodically, there is a recent line of work seeking simple algorithms that have no need of solving a large linear program. {In a preliminary version of this paper~\citep{balseiro2020dual},  we studied a simple dual mirror descent algorithm for online allocation problems with concave reward functions and stochastic inputs, which attains $O(T^{1/2})$ regret. The algorithm updates dual variables in each period in linear time and avoids solving large auxiliary programs. The analysis in this paper is simpler as we do not need to explicitly bound the stopping time corresponding to the first time a resource is depleted.} In simultaneous and independent work, ~\cite{li2020simple} present a similar fast algorithm that attains $O(T^{1/2})$ regret for linear rewards, but our regret bound has better dependence on the number of resources. {\color{black} \cite{sun2020nearoptimal} give a fast algorithm that attains $o(T^{1/2})$ regret when the distribution of requests is known to the decision maker.} Our proposed algorithms fall into this category: the update per iteration can be efficiently computed in linear time and there is no need to solve large convex optimization problems. {Finally, we remark that the algorithmic principle of using dual mirror descent in an online fashion to tackle a complex control problem has been recently exploited by \citet{kanoria2020blind} for controlling resources that circulate in closed network with applications to ride-hailing.}

\nnvspace
\paragraph{Adversarial inputs.} There is a stream of literature that studies online allocation problems under adversarial input models, i.e., when the incoming requests are adversarially chosen.  In this case, it is generally impossible to attain sublinear regret and, instead, the focus is on designing algorithms that obtain constant factor approximations to the offline optimum solution. \citet{Mehta2007JACM} and \citet{buchbinder2007primaldual} study the \emph{AdWords} problem, an online matching problem in which rewards are proportional to resource consumption, and provide an algorithm that obtains a $(1 - 1/e)-$fraction of the optimal allocation in hindsight, which is optimal. In general, when rewards are not proportional to resource consumption, it is not possible to attain fixed competitive ratio. To circumvent this issue, \citet{Feldman2009} consider a variant with free disposal in which resource constraints can be violated and only the assigned requests with the highest rewards count toward the objective. They provide a primal-dual algorithm that attains a $(1-1/e)$-fraction of the optimal allocation. Both papers assume that the amount of resources is large relative to each individual request. In this paper, we consider general allocation problems without the free disposal assumption and provide parametric competitive ratios for adversarial inputs that depend on the scarcity of resources. Our bounds are also shown to be asymptotical optimal for the more general setting considered in this paper. 
{In Section~\ref{sec:applications}, we provide a more detailed comparison of our results with existing competitive algorithms for the different applications explored in this paper.}

\nnvspace
\paragraph{Other related work.} There has been a recent interest in devising algorithms that achieve the best of all words, i.e., they attain good performance in various inputs models. \cite{mirrokni2012simultaneous} study the AdWords problem and provide an algorithm that attains the optimal competitive ratio for adversarial input and improved competitive ratios (though not asymptotic optimality) for stochastic inputs. {Moreover, \cite{mirrokni2012simultaneous} showed that, in the AdWords problem, no algorithm with vanishing regret under stochastic input can attain a constant competitive ratio for adversarial input, where the constant is independent from the model data.  Our results do not contradict their findings because our competitive ratio is data dependent and relies on the consumption-budget ratio. Although the AdWords problem is a special case of our online allocation problem, it imposes special structure, namely a linear relationship between the reward and consumption, which is key to obtain a constant competitive ratio. The competitive ratio of our algorithms for adversarial input might not be tight for the AdWords problem, but it is optimal for the more general online allocation setting we study herein. Meanwhile, our algorithms are asymptotically optimal for stochastic input.}




Moving beyond i.i.d.~input, \cite{ciocan2012dynamic} study online allocation problems in which the demand process is volatile. They provide a primal algorithm that estimates the demand rates and computes an optimal allocation assuming that the demand for the remaining of the horizon will remain unchanged. The algorithm is shown to attain a constant factor approximation for arbitrarily correlated Gaussian processes. Our results are complementary: while their algorithm attains better competitive ratios for processes with drift, our algorithms are asymptotically optimal for ergodic processes. Furthermore, \cite{esfandiari2018allocation} present an online allocation algorithm with forecast that applies to mixed adversarial and stochastic settings. If the forecast is perfect their algorithm achieves sublinear regret, otherwise they obtain constant competitive ratios that degrade gracefully with the quality of the forecast. In contrast, our algorithms and analysis do not rely on any forecasting.

Our work is also related to the literature on multi-arm bandits with knapsacks. Our feedback structure is stronger because we get to observe the reward function and consumption matrix before making a decision, while, in the bandit literature, these are revealed after making a decision. While algorithms for bandits with knapsacks are not directly applicable, our problem can be thought of as a \emph{contextual} multi-arm bandit problem with knapsacks, where the context would correspond to the information of the request. The algorithms of \citet{badanidiyuru2014resourceful} and \citet{agrawal2016efficient} can be applied to our setting after discretizing the context and action space. Discretization, however, leads to sub-optimal performance guarantees. In particular, the cardinality of the support of the action space does not appear in our regret bounds, while it must appear in the bandit setting since bandit algorithms need to explore the rewards for different actions. 

\nvspace
\section{Problem Formulation and Algorithm}\label{sec:formulation}
We consider a generic online allocation problem with a finite horizon of $T$ time periods and resource constraints. At time $t$, the decision maker receives a request $\gamma_t = (f_t,b_t,\cX_t) \in \mathcal S$ where $f_t: \cX_t\rightarrow\RR_+$ is a non-negative (and potentially non-concave) reward function, $b_t:\cX_t\rightarrow\RR_+^m$ is an non-negative (and potentially non-linear) resource consumption function, and $\cX_t \subset \RR^d_+$ is a (potentially non-convex or integral) compact set. {\color{black} Here, $\RR_+$ denotes the non-negative real numbers.} We denote by $\mathcal S$ the set of all possible requests that can be received. After observing the request, the decision maker takes an action $x_t\in \cX_t\subseteq \RR^d$ that leads to reward $f_t(x_t)$ and consumes $b_t(x_t)$ resources. The total amount of resources is $T \rho$, where $\rho\in \RR^m_{+}$ is a resource constraint vector with $\rho_j > 0$ for all $j$. The assumption $b_t(\cdot)\ge 0$ implies that we cannot replenish resources once they are consumed. We assume that $0\in \cX_t$ and $b_t(0) = 0$ so that it is always possible to take a void action by choosing $x_t=0$ and avoid violating the resource constraints. This guarantees the existence of a feasible solution. While we assume throughout the paper that rewards and resource consumptions are deterministic given the action, we show in Section \ref{sec:stochastic-input} that our results apply when reward and resource consumption are stochastic given an action. {\color{black} In Section~\ref{sec:random-time} we discuss how to incorporate stochastic horizons.}

We utilize $\vgamma = (\gamma_1,\ldots,\gamma_T)$ to denote the inputs  over time $1,\ldots,T$. 
The baseline we compare with is the reward of the optimal solution when the request sequence $\vgamma$ is known in advance, which amounts to solving the optimal allocation under full information of all requests:
\begin{align}\label{eq:OPT}
\begin{split}
    \OPT(\vec \gamma)=
	\max_{x: x_t\in \cX_t} \  \sum_{t=1}^T f_t(x_t)\ \ \      	\text{s.t.}     \  \sum_{t=1}^T b_t(x_t) \le T\rho\,.
\end{split}
\end{align}
The latter problem is referred to as the offline optimum in the computer science or hindsight optimum in the operations research literature.



An online algorithm $A$ makes, at time $t$, a real-time decision $x_t$ (potentially randomized) based on the current request $(f_t, b_t, \cX_t)$ and the previous history $\cH_{t-1}:=\{f_s, b_s, \cX_s, x_s\}_{s=1}^{t-1}$, i.e., $
x_t =A(f_t, b_t, \cX_t|\cH_{t-1})$. We define the reward of an algorithm for input $\vgamma$ as 
\[
    R(A | \vgamma) = \sum_{t=1}^T f_t(x_t) \,,
\]
where $x_t$ is the decision of the algorithm at time $t$. Moreover, algorithm $A$ must satisfy constraints $\sum_{t=1}^{T} b_{t}(x_{t}) \le \rho T$ and $x_{t}\in \cX$ for every $t\le T$.  

Our goal is to design an algorithm $A$ that attains good performance, while satisfying the above constraints, for different input models. Additionally, the algorithm $A$ should be \emph{oblivious to the input model}, i.e., it should attain good performance in various input models without knowing in advance which input model it is facing.  The notion of performance depends on the input model:
\begin{itemize}
    \item \textbf{Stochastic I.I.D.~Input} (Section~\ref{sec:iid-model}). The requests are independently and identically distributed (i.i.d.) from a probability distribution $\cP\in \Delta(\cS)$ that is unknown to the decision maker, where $\Delta(\cS)$ is the space of all probability distributions over support set $\cS$. We measure the regret of an algorithm as the worst-case difference over distributions in $\Delta(\cS)$, between the expected performance of the benchmark and the algorithm:
    \nnvspace
    \begin{align*}
    \Regret{A} = \sup_{\cP \in \Delta(\cS)}  \left\{ \EE_{\vgamma \sim \cP^T} \left[ \OPT(\vgamma) - R(A|\vgamma) \right] \right\}\,.
    \end{align*}
    \nnvspace
    We say an algorithm is low regret if the regret grows sublinearly with the number of periods.
    
    \item \textbf{Adversarial Input} (Section~\ref{sec:adversarial}). The requests are arbitrary and chosen adversarially. Unlike the stochastic i.i.d.~input model, regret can be shown to grow linearly with $T$ and it becomes less meaningful to study the order of regret over $T$. Instead, we say that algorithm $A$ is asymptotically $\alpha$-competitive, for $\alpha \ge 1$, if
    \nnvspace
    \[
        \lim\sup_{T\rightarrow\infty} \sup_{\vec \gamma \in \cS^T}  \left\{ \frac 1 T \Big( \OPT(\vec \gamma) - \alpha R(A | \vec \gamma) \Big) \right\}\le 0\,.
    \]
    \nnvspace
    An asymptotic $\alpha$-competitive algorithm asymptotically guarantees fraction of at least $1/\alpha$ of the best performance in hindsight.\footnote{While we assume that the requests $\vgamma_T = (\gamma_1,\ldots,\gamma_T)$ are fixed in advance, our results allow requests to be chosen by a nonoblivious or adaptive adversary who does not know the internal randomization of the algorithm.}
    
    \item \textbf{Non-Stationary Input} (Section~\ref{sec:non-stationary}). The request at time $t$ is drawn from an arbitrary distribution that may be correlated across time. We denote by $\cP \in \Delta(\cS^T)$ the joint distribution over inputs. The regret of an algorithm $A$ over a class of joint distributions over inputs $\mathcal C$ is given by 
    \nnvspace
     \begin{align*}
    \Regret{A | \mathcal C} = \sup_{\cP \in \mathcal C}  \left\{ \EE_{\vgamma \sim \cP} \left[ \OPT(\vgamma) - R(A|\vgamma) \right] \right\}\,.
    \end{align*}
    \nnvspace
    We consider three classes of stochastic processes: ergodic input, periodic input, and adversarially corrupted i.i.d.~input.

\end{itemize}



\nnvspace
\subsection{The Dual Problem}

Our algorithms are of dual-descent nature and, thus, the Lagrangian dual problem of \eqref{eq:OPT} plays a key role. We construct a Lagragian dual of \eqref{eq:OPT} in which we move the constraints to the objective using a vector of Lagrange multipliers $\mu \ge 0$. For $\mu \in \RR_+^m$ we define
\begin{equation}\label{eq:conjugate}
f_t^*(\mu):=\sup_{x\in \cX_t} \{f_t(x)-\mu^{\top} b_t(x)\}\,,
\end{equation}
as the optimal opportunity-cost-adjusted reward of request $\gamma_t$. $f_t^*(\mu)$ is a generalization of the convex conjugate of the function $f_t(x)$ that takes account of the consumption function $b_t(x)$ and the constraint space $\cX_t$. In particular, when $b_t(x)=x$ and $\cX_t$ is the whole space, we recover the standard definition of convex conjugate. For a fixed input $\vgamma$, define the Lagrangian dual function $D(\mu | \vgamma):\RR_+^m\rightarrow\RR$ as
$$D(\mu | \vgamma ):=\sum_{t=1}^T f_t^*(\mu)+T \rho^\top \mu \ .$$ Then, we have by weak duality that $D(\mu|\vgamma)$ provides an upper bound on $\OPT(\vgamma)$. 
\begin{prop}\label{prop:upper_bound}
It holds for every $\mu \in \RR_+^m$ that $\OPT(\vgamma) \le D(\mu | \vgamma )$.
\end{prop}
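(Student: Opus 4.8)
The plan is to establish this weak-duality bound by a direct Lagrangian relaxation argument, without invoking any convexity. First I would fix $\mu \in \RR_+^m$ and let $x^\star = (x_1^\star,\ldots,x_T^\star)$ denote an optimal solution of the offline problem \eqref{eq:OPT}; such a solution exists because each $\cX_t$ is compact. (If one prefers to assume nothing about attainment, the whole argument can instead be run with an $\varepsilon$-optimal feasible point and then one lets $\varepsilon \downarrow 0$.) By feasibility of $x^\star$ we have $x_t^\star \in \cX_t$ for every $t$ and $\sum_{t=1}^T b_t(x_t^\star) \le T\rho$ componentwise.

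Next I would add a nonnegative slack term to the objective. Since $\mu \ge 0$ and $T\rho - \sum_{t=1}^T b_t(x_t^\star) \ge 0$, their inner product is nonnegative, so
\[
\OPT(\vgamma) \;=\; \sum_{t=1}^T f_t(x_t^\star) \;\le\; \sum_{t=1}^T f_t(x_t^\star) + \mu^\top\Big( T\rho - \sum_{t=1}^T b_t(x_t^\star)\Big) \;=\; \sum_{t=1}^T \big( f_t(x_t^\star) - \mu^\top b_t(x_t^\star)\big) + T\rho^\top\mu \,.
\]
Then I would bound each summand by the corresponding opportunity-cost-adjusted reward: because $x_t^\star \in \cX_t$, the definition \eqref{eq:conjugate} gives $f_t(x_t^\star) - \mu^\top b_t(x_t^\star) \le \sup_{x\in\cX_t}\{f_t(x) - \mu^\top b_t(x)\} = f_t^*(\mu)$. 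Summing over $t$ and recalling $D(\mu|\vgamma) = \sum_{t=1}^T f_t^*(\mu) + T\rho^\top\mu$ yields $\OPT(\vgamma) \le D(\mu|\vgamma)$, as claimed.

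I do not expect a genuine obstacle here: the statement is the standard weak-duality inequality for a Lagrangian relaxation, and it holds for arbitrary (possibly non-concave) $f_t$, non-linear $b_t$, and non-convex or integral $\cX_t$ precisely because the argument never swaps a minimum with a maximum — it only drops a constraint weighted by a nonnegative multiplier and then replaces a value at a feasible point by a supremum over $\cX_t$. The only points deserving a line of care are the well-definedness of $\OPT(\vgamma)$ and of $f_t^*(\mu)$, both of which are ensured by compactness of $\cX_t$ (and, as noted, can be circumvented by an $\varepsilon$-argument).
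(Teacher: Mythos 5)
Your proposal is correct and is essentially the paper's own argument: both are the standard weak-duality relaxation, dropping the resource constraint weighted by the nonnegative multiplier $\mu$ and then bounding each period's term by $f_t^*(\mu)$ (the paper relaxes inside the maximization, you evaluate at an optimal/$\varepsilon$-optimal feasible point, which is the same inequality read at a point). No gap; the extra remark on attainment via the $\varepsilon$-argument is a fine, if unnecessary, precaution.
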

All proofs are available in the appendix. Note, however, that strong duality does not necessarily hold due to the potential non-convexity of the reward and resource consumption functions, and the action space. {\color{black} Why our algorithms attain good primal performance despite strong duality not holding? The intuition can be explained in terms of  existing results on the primal-dual gap of time-separable non-convex optimization problems. It can be shown using Shapley-Folkman Theorem that the primal-dual gap in the offline problem is upper bounded by a constant that is independent from the number of requests $T$ even without convexity (see Proposition 5.26 of~\citealt{bertsekas2014constrained} for a detailed explanation). Therefore, dual approaches can be effective for online allocation problems because the primal-dual gap is usually small.  Interestingly, the analysis of our algorithms does not use Shapley-Folkman Theorem and, as a result, it provides an alternative proof that the primal-dual gap of separable non-convex problems is relatively small when the number of requests is large.}



\nnvspace
\subsection{Algorithm}
\begin{algorithm}[t]
	\SetAlgoLined
	{\bf Input:} Initial dual solution $\mu_1$, total time periods $T$, initial resources $B_1=T\rho$, reference function $h(\cdot): \RR^m\rightarrow \RR$, and step-size $\eta$. \\
	\For{$t=1,\ldots,T$}{
		Receive request $(f_t, b_t, \cX_t)$.\\
		Make the primal decision $x_t$ and update the remaining resources $B_t$:
		\begin{align}
		\tilde{x}_{t} &\in \arg\max_{x\in\cX_t}\left\{f_{t}(x)-\mu_{t}^{\top} b_{t}(x)\right\} \ ,\label{eq:primal_decision} \\
		x_{t}&=\left\{\begin{array}{cl}
		\tilde{x}_t    & \text{  if } b_t(\tilde{x}_t)\le B_t  \\
		0
		& \text{ otherwise} 
		\end{array} \right. \ , \nonumber \\
		B_{t+1} &= B_t - b_t(x_t) . \nonumber
		\end{align}\\
		Obtain a sub-gradient of the dual function: $$ \  \tg_t = -b_t(\tilde x_t) + \rho\ .$$\\

		 Update the dual variable by mirror descent: 
			\begin{equation}\label{eq:dual_update}
           \mu_{t+1} = \arg\min_{\mu\in\cD} \tg_t^\top \mu + \frac{1}{\eta} V_h(\mu, \mu_t) \ ,
       \end{equation}
       {\color{black} where $V_h(x,y)=h(x)-h(y)-\nabla h(y)^\top (x-y)$ is the Bregman divergence.}
		
	}
	\caption{Dual Mirror Descent Algorithm for \eqref{eq:OPT}}
	\label{al:sg}
\end{algorithm}


Algorithm~\ref{al:sg} presents the main algorithm we study in this paper. Our algorithm keeps a dual variable $\mu_t \in \RR^m$ for each resource that is updated using mirror descent, which is the workhorse algorithm of convex optimization~\citep{NemirovskyYudin83,beck2003mirror,hazan2016introduction,lu2018relatively}. 

{\color{black} Algorithm~\ref{al:sg} takes an initial dual variable, a step-size, and a reference function as inputs.} At time $t$, the algorithm receives a request $(f_t,b_t, \cX_t)$, and computes the optimal response $\tx_t$ that maximizes an opportunity-cost-adjusted reward of this request based on the current dual solution $\mu_t$. It then takes this action (i.e., $x_t=\tx_t$) if the action does not exceed the resource constraint, otherwise it takes a void action (i.e., $x_t=0$). Writing the dual function as $D(\mu | \vgamma ):=\sum_{t=1}^T D_t(\mu | \gamma_t)$ where the $t$-th term of the dual function is given by $D_t(\mu | \gamma_t) = f_t^*(\mu) + \mu^\top \rho$, it follows that $\tg_t:=-b_t(\tx_t)+\rho$ is a sub-gradient of $D_t(\mu| \gamma_t)$ at $\mu_t$ under our assumptions by Danskin's theorem (see, e.g., \citealt[Proposition B.25]{Bertsekas}). 
Finally, the algorithm utilizes $\tg_t$ to update the dual variable by performing a mirror descent descent step \eqref{eq:dual_update} with step-size $\eta$ and reference function $h$. The mirror descent step~\eqref{eq:dual_update} can be interpreted as minimizing over the non-negative orthant a first-order Taylor expansion of the dual objective plus a term that penalizes movement from the incumbent solution $\mu_{t}$ using the Bregman divergence as a measure of distance. Intuitively, by minimizing the dual function, the algorithm seeks to obtain dual variables that can be used to price resources and yield good performance when optimizing primal decisions. While we focus on online mirror descent for the dual update in the paper, our analysis is readily extendable to other algorithms for online linear optimization (see Section~\ref{sec:beyond-md} for more details).

{\color{black} The algorithm can be efficiently implemented when the optimization problem \eqref{eq:primal_decision} is simple to solve. As we discuss in Section~\ref{sec:applications}, in many applications the primal decision problem can be solved in closed form or by a linear-time algorithm. When the primal decision problem cannot be efficiently solved to optimality, it is sufficient for our purpose to produce approximately optimal solutions, and the approximation error would appear additively in the regret bound (see the discussion in Section \ref{sec:approximation}).} In practice, and whenever tractable, it is convenient to determine the actions $\tilde x_t$ directly under the constraint $b_t(\tilde x_t) \le B_t$ in \eqref{eq:primal_decision}{, namely $\tilde{x}_{t} \in \arg\max_{x\in\cX_t, b_t(x)\le B_t}\left\{f_{t}(x)-\mu_{t}^{\top} b_{t}(x)\right\}$}. Such choice does not affect our analysis, but can lead to better performance in applications.

{Finally, in most cases, the mirror descent step can be computed in linear time as \eqref{eq:dual_update} admits a closed-form solution. For example, if the reference function is $h(\mu)=\sum_j \mu_j\log(\mu_j)$, the dual update becomes
\begin{equation}\label{eq:mwu}
    {\mu_{t+1}} = {\mu_t} * \exp(-\eta \tg_t) \ ,
\end{equation}
{\color{black} where $x * y = (x_j y_j)_{j=1}^m$ is the element-wise product of vectors $x,y \in \RR^m$.} In this case, we recover the multiplicative weights update algorithm~\citep{arora2012multiplicative}. If the reference function is $h(\mu)=\frac{1}{2}\|\mu\|^2_2$, the dual update becomes 
\begin{equation}\label{eq:ogd}
    {\mu_{t+1}}= \text{Proj}_{\mu\ge 0} \{\mu_{t} - \eta  \tg_t\}\ ,
\end{equation}
{\color{black} where $\text{Proj}_{\mathcal A}\{\mu\}$ denotes the projection of the vector $\mu \in \RR^m$ to the set $\mathcal A \subset \RR^m$.} The latter recovers the online sub-gradient descent method.}




\nnvspace

\subsection{General Assumptions}\label{sec:assumption}
In this section, we present some common assumptions required in our analysis. We equip the primal space of the resource constraints $\mathbb R^m$ with the $\ell_\infty$ norm $\|\cdot\primalnorm$, and the Lagrangian dual space with the $\ell_1$ norm $\|\cdot\|_{1}$. Such choices of norms come naturally from our analysis, and also make sure that the parameters $\ubb,\ubrho$ defined below are dimension independent (i.e., independent from $m$). {\color{black} Similar regret guarantees with the same dependence on the number of resources can be obtained using the $\ell_p$ norm for the primal space and $\ell_q$ norm for the dual space with $1/p+1/q = 1$ and $p \in [2,\infty]$.}


	\begin{ass}[Assumptions on the requests]\label{ass:p}
	There exists $\ubf\in\RR_{+}$ and $\ubb \in \RR_+$ such that for all requests $(f,b,\cX) \in \cS$ in the support, it holds that \color{black}
	\begin{enumerate}
 	    \item The feasible set satisfies $0\in \cX$.
 	    \item The reward functions satisfy $0 \le f(x)\le \ubf$ for every $x\in\cX$.
 	    \item The resource consumption functions satisfy $b(x)\ge 0$ and $\|b(x)\primalnorm \le \ubb$ for every $x\in\cX$.
 	    \item The optimization problems in \eqref{eq:primal_decision} admit an optimal solution.
 	\end{enumerate}
	\end{ass}
	
The upper bounds $\ubf$ and $\ubb$ impose regularity on the space of requests, and we do not need these upper bounds to run the algorithm but they appear in our performance bounds. {\color{black} With the above assumptions, $f(x)$ can be non-concave, $b(x)$ can be non-convex, and $\cX$ can be non-convex and even integral, in contrast to previous literature.  We assume that the primal optimization problems in \eqref{eq:primal_decision} admit an optimal solution to simplify the exposition---our results hold even when approximately optimal solutions are available (see Section \ref{sec:approximation}). By Weierstrass theorem, a sufficient condition for the existence of optimal solutions is that the reward function $f$ is upper-semicontinuous, the resource consumption function $b$ is component-wise lower-semicontinuous, and the feasible set $\cX$ is compact.}

\begin{ass}[Assumptions on reference function]\label{ass:h} We assume
 \begin{enumerate}
     \item $h(\mu)$ is either differentiable or essentially smooth \citep{bauschke2001essential} in $\RR_+^m$.

     \item $h(\mu)$ is $\sigma$-strongly convex in $\|\cdot\dualnorm$-norm in $\RR_+^m$, i.e., $h(\mu_1)\ge h(\mu_2) + \nabla h(\mu_2)^\top (\mu_1-\mu_2) + \frac{\sigma}{2}\|\mu_1-\mu_2\dualnorm^2$ for any $\mu_1,\mu_2\in\RR_+^m$.
 \end{enumerate}
 \end{ass}
 

Strong convexity of the reference function is a standard assumption for the analysis of mirror descent algorithms~\citep{bubeck2015convex}. The previous assumptions imply, among other things, that the projection step \eqref{eq:dual_update} of the algorithm always admits a solution  by Weierstrass theorem.

We denote by $\lbrho = \min_{j \in [m]} \rho_j >0$ the lowest resource parameter and $\ubrho= \max_{j \in [m]} \rho_j=\|\rho\primalnorm$ the largest resource parameter, which is also the primal norm of the resource vector.

\nvspace

\section{Stochastic I.I.D.~Input Model}\label{sec:iid-model}

In this section, we assume the request $(f_t,b_t,\cX_t)$ at time $t$ is generated i.i.d.~from an unknown distribution $\cP\in \Delta(\cS)$ and where $\Delta(\cS)$ is the space of all probability distributions over support set $\cS$. The next theorem presents the worst-case regret bound of Algorithm \ref{al:sg}. 
	\begin{thm}\label{thm:master}		Consider Algorithm \ref{al:sg} with step-size $\eta \ge 0$ and initial solution $\mu_1\in \RR_+^m$. Suppose Assumptions~\ref{ass:p}-\ref{ass:h} are satisfied and the requests come from an i.i.d.~model with unknown distribution. Then, it holds for any $T\ge 1$ that
		\begin{align}\label{eq:master}
		\begin{split}
		\Regret{A}\le C_1 + C_2 \eta T + \frac {C_3}{\eta} \,.
		\end{split}
		\end{align}
		where $C_1 = \ubf \ubbinfty  / \lbrho$, $C_2 = (\ubb + \ubrho)^2/2\sigma$, {\color{black} $C_3= \max \left\{  V_h(\mu, \mu_1) : \mu \in \{0, (\bar f/\rho_1) e_1,\ldots, (\bar f/\rho_m) e_m\}  \right\}$ and where $e_j\in\RR^m$ is the $j$-th unit vector.}
	\end{thm}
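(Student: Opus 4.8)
The plan is to decompose the regret into a dual-feasibility term and a complementary-slackness term, following the standard primal-dual analysis of online mirror descent but exploiting the structure of Algorithm~\ref{al:sg}. Let $\tau$ be the first time a resource is depleted (i.e.\ the first $t$ such that $b_t(\tilde x_t) \not\le B_t$), or $T+1$ if no resource is ever depleted. On periods $t < \tau$ the algorithm plays $x_t = \tilde x_t$, the true maximizer of the opportunity-cost-adjusted reward, so $f_t(x_t) = f_t^*(\mu_t) + \mu_t^\top b_t(\tilde x_t) = f_t^*(\mu_t) + \mu_t^\top(\rho - \tilde g_t)$. Summing over $t=1,\dots,\tau-1$ and using Proposition~\ref{prop:upper_bound} to bound $\OPT(\vgamma)$ by $D(\mu|\vgamma) = \sum_{t=1}^T f_t^*(\mu) + T\rho^\top\mu$ evaluated at a cleverly chosen $\mu$, I would write
\[
\OPT(\vgamma) - R(A|\vgamma) \;\le\; \underbrace{(T-\tau+1)\,\ubf}_{\text{periods after depletion}} \;+\; \Big(\textstyle\sum_{t=1}^{\tau-1} f_t^*(\mu) - f_t^*(\mu_t)\Big) \;+\; \text{(dual drift terms)}.
\]
The key point is that $f_t^*$ is convex and $\tilde g_t - \rho = -b_t(\tilde x_t)$ relates to its subgradient, so the middle sum plus the linear terms is exactly what online mirror descent controls: a standard mirror-descent regret inequality (using $\sigma$-strong convexity of $h$ in the dual norm and the bound $\|\tilde g_t\| \le \ubb + \ubrho$) gives $\sum_{t=1}^{\tau-1}\big(\tilde g_t^\top \mu_t - \tilde g_t^\top \mu\big) \le \frac{V_h(\mu,\mu_1)}{\eta} + \frac{\eta}{2\sigma}\sum_t \|\tilde g_t\|^2 \le \frac{V_h(\mu,\mu_1)}{\eta} + \frac{(\ubb+\ubrho)^2}{2\sigma}\,\eta T$, which produces the $C_2\eta T + C_3/\eta$ terms once the benchmark dual $\mu$ is restricted to the finite set $\{0,(\ubf/\rho_1)e_1,\dots,(\ubf/\rho_m)e_m\}$.

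The remaining work is to handle the ``after depletion'' term $(T-\tau+1)\ubf$ and obtain the constant $C_1 = \ubf\,\ubbinfty/\lbrho$. Here I would use the self-correcting nature of the dual update: since $\tilde g_t = \rho - b_t(\tilde x_t)$ and the algorithm takes a full mirror-descent step, the dual variable for a resource that has been heavily consumed must have grown large; conversely, before depletion the cumulative consumption on any resource is at most $T\rho_j$. The cleanest route is to argue that for the benchmark choice $\mu = (\ubf/\rho_j)e_j$ (where $j$ is the resource depleted at $\tau$), the dual objective $D(\mu|\vgamma)$ absorbs the lost reward: the term $T\rho_j \cdot (\ubf/\rho_j) = T\ubf$ in $D$ more than pays for all periods, while the consumed budget $\sum_{t<\tau} b_{tj}(\tilde x_t) \ge T\rho_j - \ubbinfty$ (it overshoots $T\rho_j$ by at most one request's worth, $\ubbinfty$) means the ``credit'' we failed to use is only $\ubf\ubbinfty/\rho_j \le \ubf\ubbinfty/\lbrho = C_1$. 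This is essentially the argument that avoids separately bounding the stopping time, as advertised after Theorem~\ref{thm:master}.

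The main obstacle I anticipate is making the bookkeeping around the stopping time $\tau$ precise without circular reasoning — in particular, showing that restricting the comparator $\mu$ to a unit-vector scaled by $\ubf/\rho_j$ simultaneously (i) keeps $V_h(\mu,\mu_1)$ bounded by $C_3$, (ii) makes the mirror-descent inequality valid only over the truncated horizon $t < \tau$ (since after $\tau$ the played action is $0$ and $\tilde g_t$ no longer equals a subgradient direction the analysis needs), and (iii) lets the $T\rho_j\mu_j$ term in $D(\mu|\vgamma)$ cancel the lost post-depletion reward up to the additive constant. One must be careful that the mirror-descent step is still taken (using $\tilde g_t$, not the played $x_t=0$) even after depletion so that the telescoping of Bregman divergences is clean; restricting attention to $t<\tau$ in the reward accounting while keeping the dual recursion global is the delicate coupling. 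Everything else — convexity of $f_t^*$, Danskin/subgradient identification, the strong-convexity descent lemma, and taking expectations over $\cP^T$ at the end (which the i.i.d.\ assumption makes unnecessary for the per-sample-path bound but is needed to match the $\sup_\cP$ in the definition of regret) — is routine.
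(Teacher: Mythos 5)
Your bookkeeping around the stopping time and the pivot is essentially the paper's: stopping at the first (near-)depletion, dropping the post-$\tau$ rewards, choosing $\mu=(\ubf/\rho_j)e_j$ for the offending resource $j$, using the overshoot bound $\sum_{t<\tau}(b_t(x_t))_j\ge T\rho_j-\ubb$ so that the complementary-slackness sum cancels the $(T-\tau)\ubf$ loss up to $C_1=\ubf\ubb/\lbrho$, and invoking the standard OMD inequality with $\|\tilde g_t\|_\infty\le\ubb+\ubrho$ to get $C_2\eta T+C_3/\eta$. The genuine gap is in the middle term of your decomposition. After comparing $\OPT(\vgamma)$ to $D(\mu|\vgamma)$ at a \emph{fixed} pivot and using $f_t(x_t)=f_t^*(\mu_t)+\mu_t^\top b_t(x_t)$ for $t<\tau$, you are left with $\sum_{t<\tau}\bigl(f_t^*(\mu)-f_t^*(\mu_t)\bigr)$ with a \emph{positive} sign, and you claim convexity of $f_t^*$ plus mirror descent control it. They control it in the wrong direction: the subgradient the algorithm has is $-b_t(\tilde x_t)$ at $\mu_t$, so convexity gives $f_t^*(\mu)-f_t^*(\mu_t)\ge b_t(\tilde x_t)^\top(\mu_t-\mu)$, a lower bound; likewise OMD bounds $\sum_t\bigl(D_t(\mu_t|\gamma_t)-D_t(\mu|\gamma_t)\bigr)$ from above, which again lower-bounds your sum. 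An upper bound would require subgradients of $f_t^*$ at the pivot (i.e.\ the optimal actions under price $\mu$), which the algorithm never computes. Pathwise this term can be of order $T$, so the decomposition cannot close as written. A clean sanity check: every step you propose is per sample path (you even say the i.i.d.\ assumption is only needed for the $\sup_{\cP}$), so if the argument were valid it would give $O(T^{1/2})$ regret for adversarial inputs as well, contradicting Section~4, where only a competitive ratio $\alpha^*>1$ is achievable.

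The missing idea is exactly where i.i.d.\ enters in the paper's proof: do not benchmark the conjugate part against a fixed pivot at all. For $t\le\tau$, take conditional expectations given the history; since $\mu_t$ is history-measurable and $\gamma_t\sim\cP$ is independent of it, $\EE[f_t(x_t)\mid\text{history}]=\bar D(\mu_t|\cP)-\EE[\mu_t^\top(\rho-b_t(x_t))\mid\text{history}]$, where $\bar D(\mu|\cP)=\EE_{\cP}[f^*(\mu)]+\rho^\top\mu$ is the \emph{expected} dual function, and weak duality gives $\bar D(\mu_t|\cP)\ge\frac1T\EE[\OPT(\vgamma)]$ at the algorithm's own iterates (the paper passes to the average iterate via convexity of $\bar D$, which is the same point). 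A martingale plus optional-stopping argument handles the random horizon $\tau$. With this, the difference-of-conjugates term never appears; the fixed pivot $\mu\in\{0,(\ubf/\rho_1)e_1,\ldots,(\ubf/\rho_m)e_m\}$ enters only through the \emph{linear} complementary-slackness functions $w_t(\mu)=\mu^\top(\rho-b_t(x_t))$, where the OMD inequality does go the right way, and your stopping-time/pivot computation then finishes the proof exactly as you describe.
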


\begin{proof}

We prove the result in three steps. First, we lower bound the cumulative reward of the algorithm up to the first time that a resource is close to being depleted in terms of the dual objective and complementary slackness. Second, we bound the complementary slackness term by picking a suitable ``pivot'' for online mirror descent. We conclude by putting it all together in step three.

\paragraph{Step 1 (Primal performance.)} First, we define the stopping time $\tA$ of Algorithm~\ref{al:sg} as the first time less than $T$ that there exists resource $j$ such that $\sum_{t=1}^{\tA} (b_t(x_t))_j + \ubbinfty \ge \rho_j T$. Notice that $\tau_A$ is a random variable, and moreover, we will not violate the resource constraints before the stopping time $\tau_A$. We here study the primal-dual gap until the stopping-time $\tA$. Notice that before the stopping time $\tA$, Algorithm \ref{al:sg} performs the standard mirror descent steps on the dual function because $\tx_t = x_t$.


Consider a time $t \le \tA$ so that actions are not constrained by resources. Because $x_{t} \in \arg\max_{x\in\cX}\{f_{t}(x)-\mu_{t}^{\top} b_t(x)\}$, we have that
\begin{equation}\label{eq:primal-approximation}
    f_t(x_t) = f_t^*(\mu_t) + \mu_{t}^{\top} b_t(x_t)\,.
\end{equation}

{Let $\bar D(\mu | \cP) = \frac 1 T \mathbb E_{\vgamma \sim \cP^T}\left[D(\mu | \vgamma)\right] = \EE_{(f,b) \sim \cP}\left[f^*(\mu_t)\right] + \mu_t^{\top} \rho$ be the expected dual objective at $\mu$ when requests are drawn i.i.d.~from $\cP \in \Delta(\cS)$.} Let $\xi_t=\{\gamma_0,\ldots, \gamma_t\}$ and $\sigma(\xi_t)$ be the sigma-algebra generated by $\xi_t$. Adding the last two equations and taking expectations conditional on $\sigma(\xi_{t-1})$ we obtain, because $\mu_t \in \sigma(\xi_{t-1})$ and $(f_t,b_t) \sim \cP$, that
\begin{align}\label{eq:bound_one_period}
\mathbb E\left[ f_t(x_t)  | \sigma(\xi_{t-1}) \right] \nonumber 
&=\EE_{(f,b) \sim \cP}\left[f^*(\mu_t)\right] + \mu_t^{\top} \rho + \mu_t^\top \left( \mathbb E\left[ b_t(x_t)| \sigma(\xi_{t-1}) \right] - \rho \right)    \nonumber\\
&=   \bar D(\mu_t| \cP) - \mathbb E\left[ \mu_t^\top \left( \rho - b_t(x_t) \right) | \sigma(\xi_{t-1}) \right]
\end{align}
where the second equality follows the definition of the dual function. 

Consider the process $Z_t = \sum_{s=1}^t \mu_s^\top \left(a_s - b_{s} (x_s)\right) - \mathbb E\left[ \mu_s^\top \left(a_s - b_{s} (x_s) \right) | \sigma(\xi_{s-1}) \right]$, which is martingale with respect to $\xi_t$ (i.e., $Z_t \in \sigma(\xi_t)$ and $\EE[Z_{t+1} | \sigma(\xi_t)] = Z_t$). Since $\tA$ is a stopping time with respect to $\xi_t$ and $\tA$ is bounded, the Optional Stopping Theorem implies that $\EE\left[Z_{\tA}\right] = 0$. Therefore, 
\begin{align*}
\EE\left[\sum_{t=1}^{{\tA}} \mu_t^\top \left(\rho - b_t(x_t)\right) \right] 
= \EE\left[\sum_{t=1}^{{\tA}} \mathbb E\left[ \mu_t^\top \left(\rho - b_t(x_t)\right) | \sigma(\xi_{t-1}) \right] \right]\,.
\end{align*}
Using a similar martingale argument for $f_t(x_t)$ and summing \eqref{eq:bound_one_period} from $t=1,\ldots,\tA$ we obtain that
\begin{align}\label{eq:f_and_r}
\mathbb E\left[ \sum_{t=1}^{\tA} f_t(x_t)  \right] 
&= 
\mathbb E\left[ \sum_{t=1}^{\tA} \bar D(\mu_t| \cP) \right ]  - \mathbb E\left[ \sum_{t=1}^{\tA} \mu_t^\top \left(  \rho - b_t(x_t) \right) \right] \nonumber\\
&\ge 
\mathbb E\left[ \tA \bar D(\bmu_{\tA}| \cP) \right ] - \mathbb E\left[ \sum_{t=1}^{\tA} \mu_t^\top \left( \rho - b_t(x_t) \right) \right]\,.  
\end{align}
where the inequality follows from denoting $\bmu_{\tA} = \frac 1 {\tA} \sum_{t=1}^{\tA} \mu_t$ to be the average dual variable and using that the dual function is convex.

\paragraph{Step 2 (Complementary slackness).} Consider the sequence of functions $w_t(\mu) = \mu^\top(\rho - b_t(x_t))$, which capture the complementary slackness at time $t$. The gradients are given $\nabla_\mu w_t(\mu) = \rho - b_t(x_t)$, {\color{black} which are bounded as follows $\|\nabla_\mu w_t(\mu) \primalnorm \le \|b_t(x_t)\primalnorm + \|\rho\primalnorm \le \ubb + \ubrho$}. Therefore, Algorithm~\ref{al:sg} applies online mirror descent to these sequence of functions $w_t(\mu)$, and we obtain from Proposition~\ref{prop:omd} that for every $\mu \in \cD$
\begin{align}\label{eq:regret_omd}
\sum_{t=1}^{\tA} w_t(\mu_t) - w_t(\mu) \le E(\tA, \mu) \le E(T,\mu)\,,
\end{align}
where $E(t,\mu) = \frac 1 {2\sigma} (\ubb + \ubrho)^2 \eta \cdot t + \frac{1}{\eta} V_h(\mu,\mu_1)$ is the regret of the online mirror descent algorithm after $t$ iterations, and the second inequality follows because $\tA \le T$ and the error term $E(t,\mu)$ is increasing in $t$.

\paragraph{Step 3 (Putting it all together).}
For any $\cP\in \Delta(\cS)$ and $\tA \in [0,T]$ we have that
\begin{align}\label{eq:bound-opt}
\EE_{\vgamma \sim \cP^T} \left[ \OPT(\vgamma) \right] &=  \frac{\tA}{T} \EE_{\vgamma \sim \cP^T} \left[ \OPT(\vgamma) \right] + \frac{T-\tA}{T}\EE_{\vgamma \sim \cP^T} \left[ \OPT(\vgamma) \right] \le  \tA \bar D(\bmu_{\tA}| \cP) + \pran{T-\tA}{\ubf}\ ,
\end{align}
where the inequality uses Proposition \ref{prop:upper_bound} and the fact that $\OPT(\vgamma)\le T \ubf$. Let $\Regret{A|\cP} = \EE_{\vgamma \sim \cP^T} \left[ \OPT(\vgamma) - R(A|\vgamma)\right]$ be the regret under distribution $\cP$. Therefore,
{\small
\begin{align*}
\Regret{A|\cP} &= \EEcP{ \OPT(\vgamma)-R(A|\vgamma)} 
\le \EEcP{ \OPT(\vgamma) - \sum_{t=1}^{\tA} f_t(x_t)}  \le \EEcP{ \OPT(\vgamma) - \tA D(\bmu_{\tA}| \cP)  +\sum_{t=1}^{\tA} \left(w_t(\mu_t) \right)  }\\  
&\le \EEcP{ \OPT(\vgamma) - \tA D(\bmu_{\tA}| \cP)  +\sum_{t=1}^{\tA} w_t(\mu) + E(T,\mu)
	 }       \le \EE_{\cP} \Bigg[ \underbrace{(T - \tA) \cdot \bar f    + \sum_{t=1}^{\tA} w_t(\mu) + E(T,\mu)}_{\clubsuit} \Bigg] \,,
\end{align*}}%
where the first inequality follows from using that $\tA \le T$ together with $f_t(\cdot) \ge 0$ to drop all requests after $\tA$; the second is from \eqref{eq:f_and_r}; the third follows from \eqref{eq:regret_omd}; and the last from \eqref{eq:bound-opt}.

We now discuss the choice of $\mu \in \RR_+^m$. If $\tA = T$, then set $\mu = 0$ to obtain that $\clubsuit \le E(T, 0)$. If $\tA < T$, then there exists a resource $j\in[m]$ such that $\sum_{t=1}^{\tA} (b_t(x_t))_j + \ubbinfty \ge T \rho_j $. Set $\mu = (\ubf/\rho_j) e_j$ with $e_j$ being the $j$-th unit vector. This yields
\begin{align*}
\sum_{t=1}^{\tA} w_t(\mu)
&=\sum_{t=1}^{\tA} \mu^\top (\rho - b_t(x_t))
= \frac \ubf {\rho_j} \sum_{t=1}^{\tA} \left( \rho_j - (b_t(x_t))_j  \right)\le \frac \ubf {\rho_j} \left( \tA \rho_j - T \rho_j + \ubbinfty \right)
= \frac \ubf {\rho_j} \ubbinfty - \ubf (T - \tA)\,,
\end{align*}
where the inequality follows because of the definition of the stopping time $\tA$. Therefore, using that  $\rho_j \ge \lbrho$ for every resource $j \in [m]$
\[
\clubsuit \le \frac{\ubf \ubbinfty}{\lbrho} + E(T,\mu) \le \frac{\bar f \ubbinfty }{\lbrho} + \frac 1 {2 \sigma} (\ubb + \ubrho)^2 \eta\cdot T + \frac{1}{\eta} V_h(\mu, \mu_1) \,,
\]
where the second inequality follows from our formulas for $\ubf$ and $E(T,\mu)$. {\color{black} We conclude by combining the cases for $\tA = T$ and $\tA < T$, and using that $\mu \in \{0, (\bar f/\rho_1) e_1,\ldots, (\bar f/\rho_m) e_m\}$ to upper bound $V_h(\mu, \mu_1)$ in terms of the worst-case realization of the pivot.}
\end{proof}

{\color{black} The constant $C_1$ in \eqref{eq:master} arises from the analysis of the stopping time $\tA$. The constants $C_2$ and $C_3$ follow from the standard regret analysis of online mirror descent (see Proposition~\ref{prop:omd} in the appendix). In particular, $C_2$ depends on the norm of the sub-gradients of the dual function and the strong-convexity constant of the reference function, while $C_3$ depends on the distance of the initial dual solution to the pivot as measured according to the Bregman divergence.} A salient feature of our regret bound is its independence from the cardinality of the action space $\mathcal X$.   This is attractive for many applications, such as assortment optimization (see Section~\ref{sec:applications}), in which the cardinality of actions can be exponentially large.

The previous result implies that, by choosing a step-size of order $\eta \sim T^{-1/2}$, Algorithm~\ref{al:sg} attains regret of order $O(T^{1/2})$ when the length of the horizon and the initial amount of resources are simultaneously scaled. {\color{black} In particular, the optimal choice of the step size in Theorem~\ref{thm:master} is given by $\eta = \sqrt{C_3/ (C_2 T)}$, which yields $\Regret{A}\le C_1 + 2 \sqrt{C_2 C_3 T}$. Therefore, our algorithms are also asymptotically optimal, in the sense that $\lim_{T\rightarrow \infty} \Regret{A}/T \rightarrow 0$.}

{\color{black} We now briefly discuss the instantiation of our regret bound for the reference functions discussed in the previous section (full details are available in Appendix~\ref{sec:reference-functions}).

\begin{itemize}
    \item Suppose $h(\mu)=\frac{1}{2}\|\mu\|_2^2$ and $\mu_1=0$. Then, Algorithm~\ref{al:sg} recovers dual online sub-gradient descent, and with proper step-size $\eta$ we can obtain a regret of order $O( m^{1/2} T^{1/2} )$. This follows because the reference function is now {$1/m$}-strongly convex with respect to the dual norm.

    \item Suppose $h(\mu)=\sum_j \mu_j \log(\mu_j)$ and $\mu_1=e/m$. For the multiplicative weights update algorithm, we cannot invoke Theorem~\ref{thm:master} directly because the reference function $h(\mu)$ is not strongly convex over the non-negative orthant as its ``curvature'' converges to zero for large values of $\mu$. Using Proposition~\ref{thm:stopping_time}, we can show that the dual variables are uniformly bounded and, by restricting the reference function to a box, we can obtain a regret bound of order $O( (m \log m)^{1/2} T^{1/2} )$.
\end{itemize}
}

We remark that the dependency of our regret bounds on the number of resources $m$ is sub-optimal. \citet{Agrawal2014OR} shows that the best possible dependence on the number of resources is of order $\log^{1/2} (m)$ while our algorithms' dependency is of polynomial order on $m$. The algorithms in \citet{Agrawal2014OR}, \citet{AgrawalDevanue2015fast}, and \citet{Devanur2019near} attain the optimal dependency on the number of resources, but, differently to ours, require either knowing an estimate on the value of benchmark or periodically solving large optimization problems. {\color{black} We would like to mention that if the upper bound $\bar f$ is available (or alternatively, a bound on the expected benchmark), then by constraining dual variables to lie in the scaled unit simplex, our algorithm can be shown to attain regret bounds of order $O( \log^{1/2} (m) T^{1/2} )$. We provide more details in Appendix~\ref{sec:reference-functions}}.

We next discuss the tightness of our regret bound. The following result, which we reproduce without proof, shows that one cannot hope to attain a regret lower than $\Omega( T^{1/2} )$ under our minimal modeling assumptions.


\begin{lem}[Lemma 1 from \citealt{ArlottoGurvich2019}]\label{lem:lower_bound} For every $T \ge 1$, there exists a probability distribution $\cP$ such that
\[
    \inf_A  \EE_{\vgamma \sim \cP^T} \left[ \OPT(\vgamma) - R(A|\vgamma) \right]  \ge C T^{1/2}\,,
\]
where $C$ is a constant independent of $T$.
\end{lem}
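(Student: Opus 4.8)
The plan is to prove a lower bound of $\Omega(T^{1/2})$ on the regret of \emph{any} online algorithm by exhibiting a single ``hard'' distribution $\cP$ over requests and showing that even the best algorithm incurs regret at least $C T^{1/2}$ against the hindsight optimum. Since the statement only asks for the \emph{existence} of a bad instance, it suffices to work with a minimal one-resource, binary-action example. Concretely, I would take $m=1$, $\rho = 1/2$, $\cX = \{0,1\}$, resource consumption $b(x)=x$, and a reward function that takes one of two values: a request is ``high'' with $f(1)=1$ or ``low'' with $f(1)=0$, each with probability $1/2$, independently across the $T$ periods. Accepting a request ($x_t=1$) consumes one unit of the budget $T\rho = T/2$. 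So the offline optimum simply accepts the $\min\{N, T/2\}$ highest-reward requests, where $N = \#\{t : f_t \text{ is high}\}$ is Binomial$(T,1/2)$.

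The key computation is then to compare $\EE[\OPT(\vgamma)]$ with the best achievable online reward. Offline, $\OPT(\vgamma) = \min\{N, T/2\}$, and since $N$ concentrates around $T/2$ with fluctuations of order $\sqrt T$, one has $\EE[\min\{N,T/2\}] = T/2 - \EE[(N - T/2)^+] = T/2 - \Theta(\sqrt T)$ by the standard fact that the mean absolute deviation of a Binomial$(T,1/2)$ is $\Theta(\sqrt T)$. For the online side, I would argue that \emph{any} non-anticipating algorithm cannot do better than $T/2 - \Theta(\sqrt T)$ either, but crucially with a \emph{larger} constant: because the algorithm must commit to accept-or-reject before seeing future requests, it either runs out of budget before the horizon ends (and then misses some high requests at the end) or it ends with leftover budget (having rejected high requests early to hedge). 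A clean way to make this rigorous is the ``prophet vs.\ online'' type argument: condition on the budget being exhausted at some stopping time $\tau \le T$; the algorithm collects at most the number of high requests among the first $\tau$, which is $\tau/2 + O(\sqrt\tau)$ in expectation, while if $\tau < T$ it forgoes an expected $\Theta(\sqrt{T-\tau})$ high requests afterward, and if $\tau = T$ it has accepted at most $T/2$ requests of which an expected $\Theta(\sqrt T)$ were low. Optimizing over the algorithm's behavior, the expected online reward is at most $T/2 - c\sqrt T$ for a constant $c$ strictly larger than the offline deficit, giving $\EE[\OPT - R(A|\vgamma)] \ge (c - c')\sqrt T =: C\sqrt T$.

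The main obstacle is making the online upper bound fully rigorous without hand-waving about ``the algorithm's stopping time,'' since the algorithm is an arbitrary (possibly randomized) measurable map and the budget-exhaustion time is itself a random stopping time depending on the algorithm's past decisions. The cleanest route is to invoke the known result directly rather than reprove it: the excerpt already cites this as Lemma~1 of \citet{ArlottoGurvich2019}, whose proof handles exactly this subtlety via a martingale/optional-stopping argument on the centered partial sums $N_t - t/2$, combined with the Burkholder--Davis--Gundy or a direct second-moment estimate to show that the maximal fluctuation of a mean-zero random walk over $[0,T]$ is $\Theta(\sqrt T)$ regardless of the (adapted) stopping rule. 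I would therefore present the construction above, reduce the claim to that random-walk fluctuation fact, and cite \citet{ArlottoGurvich2019} for the technical core; alternatively, if a self-contained proof is wanted, I would spell out the optional-stopping step showing $\EE\big[\sum_{t=1}^{\tau}(f_t(x_t) - 1/2)\big] \le 0$ cannot be beaten, so that $\EE[R(A|\vgamma)] \le \EE[\tau]/2 \le T/2$ while $\EE[\OPT] \ge \EE[\min\{N,T/2\}]$ and the two differ by the desired $\Theta(\sqrt T)$ once one also accounts for the end-effect loss $\Theta(\sqrt{T-\tau})$ when $\tau<T$.
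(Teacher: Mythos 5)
There is a genuine gap, and it lies in the concrete instance you build your argument around. In your two-point construction (rewards $f(1)\in\{0,1\}$ each with probability $1/2$, unit consumption, budget $T/2$), the naive greedy online policy --- accept a request if and only if its reward is $1$ and at least one unit of budget remains --- achieves exactly $\min\{N,T/2\}=\OPT(\vgamma)$ on \emph{every} sample path, where $N$ is the number of high requests: if $N\le T/2$ it accepts all high requests, and if $N>T/2$ it accepts $T/2$ of them, which is all the hindsight optimum can do as well. Hence the regret of this instance is zero, and your key claim that any online algorithm is capped at $T/2-c\sqrt T$ with $c$ strictly larger than the offline deficit constant is false; your heuristic that ``running out of budget early misses high requests'' does not hurt here because the foregone requests have the same value as the ones already accepted, so there is no adverse selection with only two reward levels. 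To force $\Omega(\sqrt T)$ regret one needs an instance where the online algorithm must decide whether to spend budget on \emph{intermediate}-value requests without knowing whether the count of top-value requests will exceed the budget; this is exactly why the worst-case distribution in \citet{ArlottoGurvich2019} puts mass on three points, with one point having mass of order $T^{-1/2}$ so that the binomial fluctuation of that rare type is comparable to its mean and cannot be hedged away.

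On the question of approach: the paper itself reproduces this lemma \emph{without proof}, citing Lemma~1 of \citet{ArlottoGurvich2019}, so your fallback of invoking that reference is aligned with what the paper does. But as a proof attempt your text presents the binary-value construction as the substantive content, and that construction does not establish the bound; the optional-stopping estimate you sketch ($\EE[R(A|\vgamma)]\le T/2$) is not enough, since $\EE[\OPT(\vgamma)]=T/2-\Theta(\sqrt T)$ as well, leaving the sign of the gap undetermined (indeed it is zero for greedy). A correct self-contained argument would replace your instance with, e.g., rewards in $\{1,2\}$ (or the three-point distribution of Arlotto--Gurvich) with the budget matched to the expected number of top-value requests, and then quantify the unavoidable loss from accepting or rejecting the lower-value requests before the $\Theta(\sqrt T)$ fluctuation in the number of top-value requests is resolved.
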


The previous result shows that, for every $T$, there exists a probability distribution under which all algorithms---even those that know the probability distribution---incur $\Omega(T^{1/2})$ regret. The worst-case distribution used in the proof of the result assigns mass to three points with one point having mass of order $T^{-1/2}$. Because the regret bound of Algorithm~\ref{al:sg} provided in Theorem~\ref{thm:master} does not depend on the probability mass function of the distribution $\cP$, it readily follows that our algorithm also attains $O(T^{1/2})$ in such worst-case instances. This implies that our algorithm attains the optimal order of regret when the length of the horizon and initial number of resources are scaled proportionally. {Under further assumptions on the input, however, it is sometimes possible to attain better regret guarantees (see, e.g., \citealt{Jasin2015unknown} and \citealt{LiYe2019online}).}






\nvspace
\section{Adversarial Input}\label{sec:adversarial}
In this section, we assume the request $(f_t,b_t,\cX_t)$ at time $t$ is chosen by an adversary, and we look at the worst-case performance over all possible inputs. The next theorem shows that Algorithm \ref{al:sg} is $\alpha^*$-competitive with $\alpha^*=\max\{\sup_{(f, b, \cX) \in \cS} \sup_{j \in [m], x\in\cX} b_j(x) / \rho_j,1\}$.

\begin{thm}\label{thm:adversial-no-regularizer}
Consider Algorithm \ref{al:sg} with step-size $\eta \ge 0$ and initial solution $\mu_1\in \cD$. Suppose Assumptions~\ref{ass:p}-\ref{ass:h} are satisfied, and the requests are chosen by an adversary. Then, it holds for any $T\ge 1$ that
		\begin{align}\label{eq:master-adver-one}
		\begin{split}
		\OPT(\vec \gamma) - \alpha^* R(A | \vec \gamma) \le
		C_1 + C_2 \eta T + \frac{C_3}{\eta}  \ ,
		\end{split}
		\end{align}
		where $C_1 = \ubf \ubbinfty  / \lbrho$, $C_2 = \alpha^* (\ubb + \ubrho)^2/2$, {\color{black} $C_3= \max \left\{ \alpha^* V_h(\mu, \mu_1) : \alpha^* \mu \in \{0, (\bar f/\rho_1) e_1,\ldots, (\bar f/\rho_m) e_m\}  \right\}$.}
\end{thm}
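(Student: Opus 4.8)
The plan is to run exactly the three-step argument behind the proof of Theorem~\ref{thm:master}, with two changes: the martingale/optional-stopping steps (which used the i.i.d.\ structure) are replaced by a direct comparison against a fixed optimal offline allocation $z=(z_1,\dots,z_T)$ of \eqref{eq:OPT} --- one with $z_t\in\cX_t$, $\sum_{t}b_t(z_t)\le T\rho$ and $\sum_t f_t(z_t)=\OPT(\vgamma)$ (or an arbitrarily near-optimal feasible allocation if the maximum is not attained, with a limiting argument at the end) --- and the competitive factor $\alpha^*$ is carried through every estimate.

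First I would reproduce Step~1. Define the stopping time $\tA$ exactly as in the proof of Theorem~\ref{thm:master} (the first $t\le T$ at which some resource $j$ has $\sum_{s=1}^{t}(b_s(x_s))_j+\ubbinfty\ge\rho_j T$); as there, for every $t\le\tA$ no action is rejected, so $x_t=\tx_t$, the identity $f_t(x_t)=f_t^*(\mu_t)+\mu_t^\top b_t(x_t)$ holds, and Algorithm~\ref{al:sg} runs plain online mirror descent on the losses $w_t(\mu)=\mu^\top(\rho-b_t(x_t))$. By the definition of $\alpha^*$ one has $b_t(x)\le\alpha^*\rho$ component-wise for every $x\in\cX_t$, so $\mu_t^\top b_t(z_t)\le\alpha^*\mu_t^\top\rho$; combined with $f_t^*(\mu_t)\ge f_t(z_t)-\mu_t^\top b_t(z_t)$ this gives $f_t^*(\mu_t)+\alpha^*\mu_t^\top\rho\ge f_t(z_t)$ for $t\le\tA$. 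Summing over $t\le\tA$ and using $\sum_{t>\tA}f_t(z_t)\le(T-\tA)\ubf$ yields
\[
\sum_{t=1}^{\tA}\big(f_t^*(\mu_t)+\alpha^*\mu_t^\top\rho\big)\ \ge\ \OPT(\vgamma)-(T-\tA)\ubf.
\]

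Next comes Step~2, the mirror-descent estimate plus a self-bounding trick. Applying Proposition~\ref{prop:omd} to the losses $w_t$, whose gradients satisfy $\|\rho-b_t(x_t)\primalnorm\le\ubb+\ubrho$, gives, for every $\mu\in\cD$, $\sum_{t=1}^{\tA}\mu_t^\top\rho\le\sum_{t=1}^{\tA}\mu_t^\top b_t(x_t)+\sum_{t=1}^{\tA}\mu^\top(\rho-b_t(x_t))+E(T,\mu)$ with $E(t,\mu)$ the (increasing-in-$t$) error term from the proof of Theorem~\ref{thm:master}. Multiplying by $\alpha^*$, substituting into the Step~1 inequality, and using the self-bounding identity
\[
f_t^*(\mu_t)+\alpha^*\mu_t^\top b_t(x_t)=f_t(x_t)+(\alpha^*-1)\,\mu_t^\top b_t(x_t)\le\alpha^* f_t(x_t),
\]
which holds because $f_t^*(\mu_t)=f_t(x_t)-\mu_t^\top b_t(x_t)\ge f_t(0)\ge0$ forces $\mu_t^\top b_t(x_t)\le f_t(x_t)$ and $\alpha^*\ge1$, together with $\sum_{t=1}^{\tA}f_t(x_t)\le R(A|\vgamma)$, I would arrive at
\[
\OPT(\vgamma)-\alpha^* R(A|\vgamma)\ \le\ \alpha^*\sum_{t=1}^{\tA}\mu^\top(\rho-b_t(x_t))+(T-\tA)\ubf+\alpha^* E(T,\mu)\qquad\text{for every }\mu\in\cD,
\]
which is the exact analogue of the quantity $\clubsuit$ in the proof of Theorem~\ref{thm:master}, up to the factors $\alpha^*$. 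Step~3 then picks $\mu$ as there but rescaled by $1/\alpha^*$: if $\tA=T$ take $\mu=0$ (the first two terms vanish); if $\tA<T$, let $j$ be a resource triggering the stopping condition and take $\mu=(\ubf/(\alpha^*\rho_j))e_j$, so that $\alpha^*\mu=(\ubf/\rho_j)e_j$ and, by the definition of $\tA$, $\alpha^*\sum_{t=1}^{\tA}\mu^\top(\rho-b_t(x_t))+(T-\tA)\ubf\le\ubf\ubbinfty/\rho_j\le\ubf\ubbinfty/\lbrho=C_1$, exactly as in Theorem~\ref{thm:master}. In both cases $\alpha^*\mu\in\{0,(\ubf/\rho_1)e_1,\dots,(\ubf/\rho_m)e_m\}$, and the residual $\alpha^* E(T,\mu)$ supplies the step-size terms $C_2\,\eta T+\alpha^* V_h(\mu,\mu_1)/\eta$; taking the worst case over these finitely many pivots gives \eqref{eq:master-adver-one} with $C_1,C_2,C_3$ as stated. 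The whole argument is pathwise, so it covers randomized algorithms and adaptive adversaries after conditioning on the algorithm's internal randomness.

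The main obstacle is obtaining the competitive factor $\alpha^*$ and not $1+\alpha^*$. The crude route --- bound $f_t(x_t)\ge f_t(z_t)-\alpha^*\mu_t^\top\rho$, expand $\mu_t^\top\rho$ through the subgradient $\rho-b_t(x_t)$, and then bound $\mu_t^\top b_t(x_t)\le f_t(x_t)$ --- double-counts the reward and loses a factor. The remedy, which is the one non-mechanical idea in the proof, is to never discard the conjugate term $f_t^*(\mu_t)$, to absorb the leftover $\mu_t^\top b_t(x_t)$ into $\alpha^* f_t(x_t)$ via the self-bounding identity above, and, correspondingly, to feed online mirror descent the pivot $\mu^*/\alpha^*$ so that precisely one factor $\alpha^*$ multiplies the regret $E(T,\cdot)$; this is what makes $C_2,C_3$ come out as in Theorem~\ref{thm:adversial-no-regularizer}. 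Everything else is a routine transcription of the stochastic proof.
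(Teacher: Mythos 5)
Your proof is correct and is essentially the paper's own argument: it uses the same ingredients (optimality of $\tx_t$ against the offline action, the bound $b_t(x)\le\alpha^*\rho$ from the definition of $\alpha^*$, the inequality $\mu_t^\top b_t(x_t)\le f_t(x_t)$ coming from $f_t^*(\mu_t)\ge 0$, the online mirror descent regret bound with the rescaled pivot $\mu=(\ubf/(\alpha^*\rho_j))e_j$, and the stopping-time bookkeeping), differing only in that the paper combines these per period into the inequality $\alpha^* f_t(x_t)\ge f_t(x_t^*)-\alpha^*\mu_t^\top(\rho-b_t(x_t))$ before summing, whereas you sum first and absorb the $(\alpha^*-1)\mu_t^\top b_t(x_t)$ term afterwards. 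This reordering is cosmetic, so no further comment is needed.
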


{
Differently to the case of stochastic inputs, we prove the result in the primal space by comparing the choices of our algorithm to an optimal solution of the benchmark~\eqref{eq:OPT}. In particular, we can show that, for each request, the reward obtained by our algorithm is at most a fraction $1/\alpha^*$ of the reward of an optimal, offline solution minus a complementary slackness term. Then, as in the stochastic case, we relate the complementary slackness term to the reward lost when the stopping time is not close to the end of the horizon.}

When the step-size is $\eta\sim T^{-1/2}$, Theorem \ref{thm:adversial-no-regularizer} shows that Algorithm~\ref{al:sg} is $\alpha^*$-competitive, i.e., it guarantees at least a fraction $1/\alpha^*$ of the best performance in hindsight as $T$ grows large. Indeed, Theorem~1 from \citet{BalseiroGur2019MS} presents a one-dimensional example which implies that one cannot hope to attain competitive ratio lower than $\alpha^*$ under our assumptions. Theorem~\ref{thm:adversial-no-regularizer} matches their lower bound, which implies that the competitive ratio is optimal without any further assumptions on the input. {\color{black} While the optimal step-size is different to that of the i.i.d.~the case, any step-size of order $\eta\sim T^{-1/2}$ guarantees the same asymptotic performance in both models.}

The competitive ratio $\alpha^*$ measures how resource constrained is the decision maker. For each resource $j \in [m]$, the expression $\max_{x\in\cX} b_j(x) / \rho_j$ captures the ratio of the highest possible resource consumption to the ``average'' amount of resource available per time period. Theorem~\ref{thm:adversial-no-regularizer} thus implies that the competitive ratio deteriorates as the problem becomes more resource constrained. 

We now show that Algorithm~\ref{al:sg} never depletes resources too early. Define the stopping time $\tA$ of Algorithm \ref{al:sg} as the first time less than $T$ that there exists resource $j$ such that 
\begin{equation*}
    \sum_{t=1}^{\tA} (b_t(x_t))_j + \ubbinfty \ge \rho_j T \ .
\end{equation*}
By construction, our algorithm does not violate the resource constraints before the stopping time $\tau_A$. We prove our result under the following assumption.

\begin{ass}[Separability of the reference function $h$]\label{ass:h-sep} The reference function $h(\mu)$ is coordinate-wisely separable, i.e., $h(\mu)=\sum_{j=1}^m h_j(\mu_j)$ where $h_j:\RR_+ \rightarrow \RR $ is an univariate function. Moreover, for every resource $j$ the function $h_j$ is $\sigma_2$-strongly convex over $[0, \mumax_j]$ with $\mumax_j:=\ubf / \rho_j+1$.
\end{ass}

If the reference function is not a coordinate-wise separable function, the projection step \eqref{eq:dual_update} can be hard to solve. Furthermore, most examples in the mirror descent literature utilize coordinate-wise separable reference functions~\citep{beck2003mirror,lu2018relatively,lu2017relative}. The next proposition says the stopping time $\tau_A$ is always close to the end of the horizon $T$.

\begin{prop}\label{thm:stopping_time}
Let $\mumax\in\RR^m$ be such that $\mumax_j:=\ubf / \rho_j+1$. Suppose Assumptions~\ref{ass:p} and \ref{ass:h-sep} holds, the initial dual solution satisfies $\mu \le \mumax$, and the  step-size satisfies $\eta\le\sigma_2/\ubbinfty$.  Then, it holds that $\mu_t\le \mumax$ for any $t\le T$. Furthermore,  it holds for every sample path $\vgamma$ that
\begin{equation*}
T-\tA \le \frac{1}{\eta\lbrho} \themax + \frac{{\ubb}}{\lbrho} \ .
\end{equation*}
\end{prop}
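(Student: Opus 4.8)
The plan is to first establish the uniform bound $\mu_t \le \mumax$ for all $t$ by induction on $t$, then use this bound together with the structure of the dual update to control how fast resources can be depleted before the stopping time. For the first part, suppose $\mu_t \le \mumax$ coordinate-wise. Fix a resource $j$. The dual update \eqref{eq:dual_update} is, by separability of $h$, a coordinate-wise operation: $\mu_{t+1,j} = \arg\min_{\mu_j \ge 0} \{ \tg_{t,j} \mu_j + \frac 1 \eta V_{h_j}(\mu_j, \mu_{t,j}) \}$. Since $\tg_{t,j} = \rho_j - (b_t(\tx_t))_j \ge \rho_j - \ubbinfty$, I want to show that the minimizer cannot exceed $\mumax_j$. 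If $\mu_{t,j}$ is already at or below $\mumax_j$ and $\mu_{t+1,j} > \mumax_j$, then by first-order optimality of the update and $\sigma_2$-strong convexity of $h_j$ on $[0,\mumax_j]$ (extended appropriately), the increase $\mu_{t+1,j} - \mu_{t,j}$ is at most $\frac \eta {\sigma_2} |\tg_{t,j}^-|$, where $\tg_{t,j}^- = \max\{-\tg_{t,j},0\} \le \ubbinfty$ (note $\rho_j \ge 0$). Hence $\mu_{t+1,j} \le \mu_{t,j} + \frac{\eta}{\sigma_2}\ubbinfty \le \mumax_j - 1 + \frac{\eta \ubbinfty}{\sigma_2} \cdot$ wait --- more carefully: when $\mu_{t,j} \le \ubf/\rho_j$ the step-size condition $\eta \le \sigma_2/\ubbinfty$ gives an increase of at most $1$, landing in $[0,\mumax_j]$; when $\ubf/\rho_j \le \mu_{t,j} \le \mumax_j$, I argue $\tg_{t,j} \ge 0$ is impossible to push $\mu$ further up only if... actually the cleanest route is: at $\mu_{t,j} \ge \ubf/\rho_j$ the "pull" from $V_{h_j}$ dominates once $\mu$ tries to exceed $\mumax_j$, so the minimizer stays $\le \mumax_j$. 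Combining the two regimes closes the induction, with the base case $\mu_1 \le \mumax$ given.

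For the second part, I would mimic Step 2 and Step 3 of the proof of Theorem~\ref{thm:master}, but now pointwise on each sample path rather than in expectation, and restricted to the single resource $j$ that triggers the stopping time. Suppose $\tA < T$, so there is a resource $j$ with $\sum_{t=1}^{\tA}(b_t(x_t))_j + \ubbinfty \ge \rho_j T$. Consider the one-dimensional online mirror descent run on coordinate $j$ with the linear losses $w_{t,j}(\mu_j) = \mu_j(\rho_j - (b_t(x_t))_j)$; its gradients are bounded by $\ubb + \ubrho$ in absolute value, but more usefully by $\ubbinfty$ after noting the relevant sign. Applying the standard regret bound (Proposition~\ref{prop:omd} specialized to one coordinate) against the comparator $\mu_j^\star = \mumax_j$ — which is feasible precisely because we just proved $\mu_{t,j} \le \mumax_j$, so the comparator lies in the effective domain over which $h_j$ is $\sigma_2$-strongly convex — yields
\[
\sum_{t=1}^{\tA} w_{t,j}(\mu_{t,j}) - \sum_{t=1}^{\tA} w_{t,j}(\mumax_j) \le \frac{\eta \ubbinfty^2}{2\sigma_2}\tA + \frac{1}{\eta} V_{h_j}(\mumax_j, \mu_{1,j}) \le \frac{\eta \ubbinfty^2}{2\sigma_2} T + \frac{\themax}{\eta}\,,
\]
using $\themax$ to absorb the Bregman-divergence term (this is presumably its definition in the paper's macro set). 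On the other hand, $\sum_{t=1}^{\tA} w_{t,j}(\mu_{t,j}) \ge 0$ because each primal action maximizes the opportunity-cost-adjusted reward and $f_t \ge 0$ forces $\mu_{t,j}(\rho_j - (b_t(x_t))_j)$ to have a favorable sign on average — more precisely, from \eqref{eq:primal-approximation} and $f_t^*(\mu_t) \ge f_t^*(\text{something}) \ge 0$, summing gives a lower bound of the correct sign; and $\sum_{t=1}^{\tA} w_{t,j}(\mumax_j) = \mumax_j(\tA \rho_j - \sum_{t=1}^{\tA}(b_t(x_t))_j) \le \mumax_j(\tA \rho_j - \rho_j T + \ubbinfty)$ by the stopping-time inequality. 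Rearranging $0 \le \mumax_j \rho_j(\tA - T) + \mumax_j \ubbinfty + \frac{\eta \ubbinfty^2}{2\sigma_2}T + \frac{\themax}{\eta}$ and dividing by $\mumax_j \rho_j \ge \rho_j \ge \lbrho$, then bounding $\mumax_j \ge 1$ and $\ubbinfty/\mumax_j \le \ubbinfty$, produces $T - \tA \le \frac{1}{\eta \lbrho}\themax + \frac{\ubb}{\lbrho}$ up to the constant-factor bookkeeping (the $\eta \ubbinfty^2 T / (2\sigma_2)$ term must be shown to be dominated or absorbed, which is where the step-size condition $\eta \le \sigma_2/\ubbinfty$ re-enters to make that term $O(T)$ with a controlled constant — I would check whether the paper's $\themax$ already rolls this in).

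The main obstacle I anticipate is the induction step for $\mu_t \le \mumax$: getting the two regimes (below $\ubf/\rho_j$, where a single step can increase $\mu_j$ by at most $1$; and between $\ubf/\rho_j$ and $\mumax_j$, where I must argue the update cannot leave the box at all) to fit together cleanly, and making sure the strong-convexity constant $\sigma_2$ is only needed on $[0,\mumax_j]$ and not globally — this is exactly why Assumption~\ref{ass:h-sep} only demands strong convexity on that interval, and the proof must respect that. The rest is a routine transcription of the Theorem~\ref{thm:master} argument to a single coordinate and to a per-sample-path statement, replacing expectations and the Optional Stopping Theorem by deterministic summation.
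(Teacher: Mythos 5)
There are two genuine gaps, one in each half of your argument.

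\textbf{The induction that $\mu_t\le\mumax$.} Your first regime ($\mu_{t,j}\le\ubf/\rho_j$, where the step-size condition caps the one-step increase by $\eta\ubbinfty/\sigma_2\le 1$) is fine and matches the paper. But your second regime ($\ubf/\rho_j\le\mu_{t,j}\le\mumax_j$) is not justified: the claim that ``the pull from $V_{h_j}$ dominates once $\mu$ tries to exceed $\mumax_j$'' is wrong, because the Bregman term only penalizes movement away from the \emph{current} iterate $\mu_{t,j}$ --- there is nothing in the update that distinguishes the point $\mumax_j$, so repeated steps of size up to $\eta\ubbinfty/\sigma_2$ could drift past it. The missing ingredient, which is the heart of the paper's Lemma on the dual update, is a primal-optimality bound: since $0\in\cX_t$, $f_t(0)=0$ and $\tx_t$ maximizes $f_t(x)-\mu_t^\top b_t(x)$, one gets $\mu_t^\top b_t(\tx_t)\le\ubf$, hence $(b_t(\tx_t))_j\le\ubf/\mu_{t,j}\le\rho_j$ whenever $\mu_{t,j}\ge\ubf/\rho_j$. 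In that regime the coordinate subgradient $\tg_{t,j}=\rho_j-(b_t(\tx_t))_j$ is nonnegative, so (by monotonicity of $\nabla h_j$ and the conjugate) the coordinate does not increase at all; only then do the two regimes glue into $\mu_{t+1,j}\le\mumax_j$. You briefly gestured at ``$\tg_{t,j}\ge 0$'' before abandoning it --- that is exactly the fact you need, and it comes from the reward structure, not from the geometry of the mirror step.

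\textbf{The bound on $T-\tA$.} Your route through the OMD regret bound (Proposition~\ref{prop:omd} on coordinate $j$ with comparator $\mumax_j$) has two problems. First, the claimed lower bound $\sum_{t\le\tA}w_{t,j}(\mu_{t,j})\ge 0$ is unsubstantiated and false in general: primal optimality only gives $\mu_t^\top b_t(x_t)\le\ubf$, i.e.\ $w_t(\mu_t)\ge\mu_t^\top\rho-\ubf$, which can be negative, and there is no per-coordinate sign control --- indeed the interesting scenario (resource $j$ being over-consumed while $\mu_{t,j}$ is still small) is precisely one where these terms are negative. Second, even granting it, the regret bound injects a term of order $\eta T$ that the step-size condition $\eta\le\sigma_2/\ubbinfty$ does \emph{not} absorb (it only caps it at order $T$), so your final inequality would contain an extra $O(\eta T)$ term that is not in the statement and, for constant step-sizes, makes the bound vacuous. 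The paper avoids both issues with a purely deterministic telescoping of the coordinate-wise update: the mirror step gives $\nabla h_j(\mu_{t+1,j})\ge\nabla h_j(\mu_{t,j})-\eta(g_t)_j$, so summing to $\tA$ and using the uniform bound $\mu_t\le\mumax$ plus monotonicity of $\nabla h_j$ yields $\sum_{t\le\tA}-(g_t)_j\le\frac1\eta\bigl(\nabla h_j(\mumax_j)-\nabla h_j((\mu_1)_j)\bigr)\le\themax/\eta$; combined with the stopping-time inequality $\sum_{t\le\tA}(g_t)_j\le\rho_j(\tA-T)+\ubb$, this gives exactly $T-\tA\le\frac{1}{\eta\lbrho}\themax+\frac{\ubb}{\lbrho}$ on every sample path, with no regret-type $\eta T$ term. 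You should replace the Proposition~\ref{prop:omd} step by this telescoping argument.
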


{When the step-size is $\eta\sim T^{-1/2}$, Proposition~\ref{thm:stopping_time} implies that $T-\tA = O(T^{1/2})$ and resources are never depleted too early.} In the proof of Proposition~\ref{thm:stopping_time} we first use the separability of the reference function to argue that, as long as $0\le\mu_1\le \mumax$, the dual variable obtained by Algorithm~\ref{al:sg} always stays in the domain $\mathcal{D}:=\{\mu \in \RR^m \mid 0\le \mu\le\mumax \}$. Recall that dual variables are increased when a request consumes more resources than the target. Because the dual variables are always bounded from above, resource consumption can never exceed the target by a large amount, which, in turn, implies that the resources are never depleted too early. As a result, resources are depleted smoothly over time, which is a desired feature in many settings.

\nvspace

\section{Non-stationary Input}\label{sec:non-stationary}
The previous two sections study two classic input models, i.e., i.i.d.~and adversarial. However, the i.i.d.~input may be too optimistic and the adversarial model may be too pessimistic in practice.
In this section, we consider three non-stationary stochastic input models that fill up the gap between i.i.d.~and adversarial input models, and they are more realistic in many applications.
\nnvspace
\subsection{Independent Inputs and Robustness to Adversarial Corruptions}\label{sec:independent}

We consider the case where requests are drawn from independent but not necessarily identical distributions. We introduce some new notations that will be used in the regret bound. Given two probability distributions $\cP_1, \cP_2$, we denote by $\|\cP_1-\cP_2\|_\TV$ the total variation distance between $\cP_1$ and $\cP_2$. We denote by $\text{MD}(\vcP) = \sum_{t=1}^T \left\| \cP_t - \frac 1 T \sum_{s=1}^T \cP_s \right\|_\TV$ the mean deviation of a vector of independent distributions $\vec \cP \in \Delta(\cS)^T$ from the average distribution in total variation norm. Moreover, let $\mathcal C^{\rm ID}(\delta) = \left\{ \vcP \in \Delta(\cS)^T : \MD(\vcP) \le \delta\right\}$ be the set of all independent inputs with mean deviation at most $\delta > 0$. {Such measure on the non-stationarity of the distributions is closely related to Kolmogorov metric, which was used to study posted pricing \citep{dutting2019posted}.} The next theorem presents our regret bound for Algorithm~\ref{al:sg} under independent inputs:


\begin{thm}\label{thm:independent}
Consider Algorithm \ref{al:sg} with step-size $\eta \ge 0$ and initial solution $\mu_1\in \cD$. Suppose Assumptions~\ref{ass:p}-\ref{ass:h} are satisfied, and the requests are drawn from independent (non-identical) distributions. Then, it holds for any $T\ge 1$ and mean deviation $\delta>0$ that
\[
    \Regret{A \mid \mathcal C^{\rm ID}(\delta)} \le C_1 + C_2 \eta T + \frac {C_3} {\eta} + C_4 \delta \ ,
\]
where {\color{black} the constants $C_1, C_2, C_3$ are defined in Theorem~\ref{thm:master}} and $C_4=\ubf$.
\end{thm}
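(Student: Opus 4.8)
The plan is to rerun the three–step argument in the proof of Theorem~\ref{thm:master} almost verbatim, with the \emph{average} distribution $\bar\cP := \frac1T\sum_{t=1}^T\cP_t \in \Delta(\cS)$ playing the role that the common distribution $\cP$ played in the i.i.d.\ case. Fix a vector of independent distributions $\vcP$ with $\MD(\vcP)\le\delta$; write $\bar D(\mu\mid\bar\cP) := \EE_{(f,b)\sim\bar\cP}[f^*(\mu)] + \rho^\top\mu$ for the expected single–period dual objective under $\bar\cP$; and keep the same stopping time $\tA$ --- the first $t<T$ at which some resource $j$ has $\sum_{s\le t}(b_s(x_s))_j + \ubbinfty \ge \rho_j T$ --- so that $\tilde x_t = x_t$ for all $t\le\tA$ and no constraint is violated up to $\tA$.

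First I would redo Step~1 (primal performance). For $t\le\tA$ we still have the identity $f_t(x_t) = f_t^*(\mu_t) + \mu_t^\top b_t(x_t)$, and the only new point is that, when conditioning on $\sigma(\xi_{t-1})$, the request $\gamma_t$ is distributed as $\cP_t$ rather than $\bar\cP$ (here independence is what lets us identify the conditional law of $\gamma_t$ with $\cP_t$). Because $0\in\cX$ and $b(0)=0$ force $f^*(\mu)\in[0,\ubf]$ for every request and every $\mu\ge0$, the standard total–variation inequality gives $\EE_{(f,b)\sim\cP_t}[f^*(\mu_t)] \ge \EE_{(f,b)\sim\bar\cP}[f^*(\mu_t)] - \ubf\,\|\cP_t - \bar\cP\|_\TV$, so the analog of \eqref{eq:bound_one_period} becomes
\[
\EE[f_t(x_t)\mid\sigma(\xi_{t-1})] \ \ge\ \bar D(\mu_t\mid\bar\cP) - \EE[\mu_t^\top(\rho - b_t(x_t))\mid\sigma(\xi_{t-1})] - \ubf\,\|\cP_t-\bar\cP\|_\TV\,.
\]
The martingale / optional–stopping argument for the process $Z_t$ uses only independence of the requests, not that they are identically distributed, so it carries over unchanged; summing from $t=1$ to $\tA$, invoking convexity of $\bar D(\cdot\mid\bar\cP)$, and enlarging the error sum to all of $1,\dots,T$ (harmless since $\|\cdot\|_\TV\ge0$) yields
\[
\EE\Big[\textstyle\sum_{t=1}^{\tA} f_t(x_t)\Big] \ \ge\ \EE\big[\tA\,\bar D(\bmu_{\tA}\mid\bar\cP)\big] - \EE\Big[\textstyle\sum_{t=1}^{\tA}\mu_t^\top(\rho-b_t(x_t))\Big] - \ubf\,\MD(\vcP)\,,
\]
with $\bmu_{\tA}$ the running dual average as in Theorem~\ref{thm:master}.

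Steps~2 and~3 I would take over with essentially no change. Step~2 is the deterministic online–mirror–descent regret bound on $w_t(\mu)=\mu^\top(\rho-b_t(x_t))$, which does not see the input model at all and gives $\sum_{t=1}^{\tA}\big(w_t(\mu_t)-w_t(\mu)\big)\le E(T,\mu)$ for every $\mu\in\cD$. In Step~3, weak duality (Proposition~\ref{prop:upper_bound}) together with the fact that expectation is linear in the measure gives, for every deterministic $\mu\ge0$, $\EE[\OPT(\vgamma)]\le\EE[D(\mu\mid\vgamma)]=\sum_{t=1}^T\EE_{\cP_t}[f^*(\mu)]+T\rho^\top\mu=T\,\bar D(\mu\mid\bar\cP)$, hence $\EE[\OPT(\vgamma)]\le T\bar D^*$ with $\bar D^*:=\inf_{\mu\ge0}\bar D(\mu\mid\bar\cP)\le\bar D(0\mid\bar\cP)\le\ubf$; combining this with $\bar D(\bmu_{\tA}\mid\bar\cP)\ge\bar D^*\ge0$ reproduces the sample–path bound $T\bar D^*-\tA\bar D(\bmu_{\tA}\mid\bar\cP)\le(T-\tA)\ubf$ of \eqref{eq:bound-opt}. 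Plugging Steps~1--2 in and choosing the same path–dependent pivot as in Theorem~\ref{thm:master} ($\mu=0$ when $\tA=T$, and $\mu=(\ubf/\rho_j)e_j$ for the nearly–depleted resource $j$ when $\tA<T$), the stopping–time computation at the end of that proof runs word for word and produces $\Regret{A\mid\vcP}\le C_1+C_2\eta T+C_3/\eta+\ubf\,\MD(\vcP)$; taking the supremum over $\vcP\in\mathcal C^{\rm ID}(\delta)$ bounds the last term by $C_4\delta$ with $C_4=\ubf$.

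The main --- and essentially only --- obstacle is Step~1: controlling the gap between the per–period distributions $\cP_t$ and their average $\bar\cP$, and checking that this gap enters additively through $\sum_t\|\cP_t-\bar\cP\|_\TV=\MD(\vcP)$ (which is precisely why the class $\mathcal C^{\rm ID}(\delta)$ is defined via the mean deviation). Everything else is book-keeping, and the one thing to double-check is that no step of the Theorem~\ref{thm:master} proof secretly exploited the identical–distribution assumption: the martingale step needs only independence, the benchmark bound needs only that $\cP\mapsto\EE_{(f,b)\sim\cP}[f^*(\mu)]$ is linear, and the complementary–slackness and stopping–time steps are purely sample-path.
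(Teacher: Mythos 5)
Your proposal is correct and follows essentially the same route as the paper's proof: replace the i.i.d.\ distribution by the time-average $\bar\cP$, use $0\le f^*(\mu)\le\ubf$ and total variation to bound $\EE_{\cP_t}[f^*(\mu_t)]-\EE_{\bar\cP}[f^*(\mu_t)]$ by $\ubf\|\cP_t-\bar\cP\|_\TV$ (summing to $\ubf\,\MD(\vcP)$), note the martingale/optional-stopping step only needs independence, and keep Steps 2--3 and the pivot choice unchanged. Your minor rephrasing of the benchmark bound via $\bar D^*$ is equivalent to the paper's direct use of $\OPT(\vcP)\le T\bar D(\bmu_{\tA}\mid\bar\cP)$, so no substantive difference.
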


When the step-size is of order $\eta\sim T^{-1/2}$, the regret of Algorithm~\ref{al:sg} becomes $O(T^{1/2}+\delta)$, which implies the performance of algorithm degrades gracefully with the mean deviation $\delta$ from the average distribution. Theorem~\ref{thm:independent} shows a natural transition from i.i.d.~input to adversarial input: when the requests are i.i.d., the mean deviation is $\delta=0$ and $\Regret{A}\sim O(T^{1/2})$; when all requests are adversarial, it is likely that the mean deviation is of order $\delta\sim T $ and thus $\Regret{A}\sim T$.

Theorem~\ref{thm:independent} implies that  Algorithm~\ref{al:sg} is robust to adversarial corruptions to i.i.d.~input. Adversarial corruptions to stochastic inputs have been recently studied by \cite{lykouris2018corruptions} and \cite{chen2019robust}. The main motivation of this line of work is to design algorithms that are robust to perturbations of the input. These perturbations can be either malicious, for example, in the case of click fraud; or non-malicious, for example, due to traffic spikes caused by unpredictable events. 

Consider the situation when most requests are drawn i.i.d.~from an unknown distribution model and an adversary can corrupt at most $r$ requests. A direct consequence of Theorem~\ref{thm:independent} is that the regret is of order $\max\{T^{1/2}, r\}$ by noticing $\text{MD}(\vcP)$ is proportional to $r$. In particular, if the adversary corrupts at most $O(T^{1/2})$ request, then the regret bound is still $O(T^{1/2})$, which showcases the robustness of Algorithm~\ref{al:sg} to adversarial corruptions. In a recent paper \cite{kesselheim2020knapsack}, study the secretary knapsack problem when an adversary corrupts a limited number of requests in a bursty pattern. While in their analysis they only consider the performance of the uncorrupted requests, they provide an algorithm with similar performance guarantees than ours. Their algorithm, however, is more complex and requires knowing the length of the bursts that the adversary corrupts. Subsequently, \cite{bradac2020robust} design robust algorithms for multiple secretary problems, but benchmark their algorithms against the uncorrupted requests and exclude the uncorrupted request with the highest reward. {Contemporaneously, \cite{jiang2020online} study an online resource allocation problem in which request are drawn from a non-stationary distribution. They consider a first setting in which the true distribution is unknown but a prior (potentially inexact) estimate is available and a second setting in which the true distribution is completely unknown. For the latter setting, they propose a gradient-descent algorithm that is similar to ours and prove regret bounds in which the mean-deviation is measured using Wasserstein distance instead of total variation distance.} 


Many algorithms studied in the literature are susceptible to adversarial corruptions. For example, training-based algorithms such as the one of \cite{DevanurHayes2009} or algorithms that require solving optimization problems with historical data such as the one of \cite{AgrawalDevanue2015fast} might perform poorly if an adversary corrupts a few, selected requests. This follows because, in these algorithms, most decisions are determined based on a few requests received in the first periods, which an adversary can corrupt.

The previous result implies that the performance of our algorithm degrades linearly with the amount of corruption from i.i.d.~input. The next theorem shows that linear degradation is necessary in the sense that every algorithm incurs a similar, linear degradation  in performance. We prove the result by invoking Yao's lemma and constructing a distribution over distributions under which no algorithm can perform well.

\begin{thm}\label{thm:lower-bound-inde}
For any length of horizon $T \ge 1$ and mean deviation $8\le\delta\le 4T$, there exist constants $C_1, C_2 > 0$, such that
\begin{equation}\label{eq:lower_bound_corruption}
    \inf_A \Regret{A\mid\mathcal C^{\rm ID}(\delta)}\ge C_1 \delta + C_2 T^{1/2} \ . 
\end{equation}
\end{thm}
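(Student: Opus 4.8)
## Proof Plan for Theorem~\ref{thm:lower-bound-inde}

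\textbf{Proof plan.} The plan is to prove the two terms of \eqref{eq:lower_bound_corruption} separately, over the same class $\mathcal C^{\rm ID}(\delta)$, and then combine them using the elementary inequality $\max\{a,b\}\ge\tfrac12(a+b)$; this suffices because the statement only asserts the \emph{existence} of positive constants $C_1,C_2$. The $\Omega(T^{1/2})$ term is essentially free: Lemma~\ref{lem:lower_bound} supplies, for every $T$, an i.i.d.\ distribution $\cP$ with $\inf_A\EE_{\vgamma\sim\cP^T}[\OPT(\vgamma)-R(A|\vgamma)]\ge CT^{1/2}$, and an i.i.d.\ instance is a product of identical distributions, so its mean deviation is $\MD(\vcP)=0\le\delta$, i.e.\ $\cP^T\in\mathcal C^{\rm ID}(\delta)$. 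Hence $\inf_A\Regret{A\mid\mathcal C^{\rm ID}(\delta)}\ge CT^{1/2}$.

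For the $\Omega(\delta)$ term I would invoke Yao's minimax principle with a ``spend now versus save for later'' construction. Take a single resource ($m=1$) and an integer $k$ with $2k\le T$, chosen below; set the per-period budget to $\rho=k/T$, so the total budget equals $k$. Draw a hidden bit $\theta\in\{0,1\}$ uniformly and, for each value of $\theta$, define a deterministic instance: periods $1,\dots,k$ each present a request with action set $\{0,1\}$, unit consumption, and reward $1$; periods $k+1,\dots,2k$ present, when $\theta=1$, a request with action set $\{0,1\}$, unit consumption, and reward $2$, and are null (action set $\{0\}$) when $\theta=0$; all remaining periods are null. Then $\OPT=k$ if $\theta=0$ and $\OPT=2k$ if $\theta=1$. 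The key point is that $\theta$ is not revealed until period $k+1$, after all reward-$1$ requests have already been seen, so any algorithm commits to the number $a\in\{0,\dots,k\}$ of reward-$1$ items it takes in the first $k$ periods with no information about $\theta$; afterwards, taking the $k-a$ affordable reward-$2$ items is clearly optimal when $\theta=1$, yielding reward $2k-a$, while the algorithm collects $a$ when $\theta=0$. Thus for any deterministic algorithm the expected loss is $\tfrac12(k-a)+\tfrac12 a=\tfrac12 k$, \emph{independent of $a$}; by Yao's lemma every randomized algorithm incurs expected regret at least $k/2$ on one of the two instances, both of which are product distributions (of point masses), hence independent.

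It remains to pick $k$ so that both instances lie in $\mathcal C^{\rm ID}(\delta)$ with $k=\Omega(\delta)$. Each instance has at most three distinct request types (reward-$1$, reward-$2$, null) with multiplicities $O(k)$, $O(k)$, and $T-O(k)$, and a direct computation of the total-variation distances to the average distribution gives $\MD(\vcP)\le 4k$. Choosing $k=\min\{\lfloor\delta/4\rfloor,\lfloor T/2\rfloor\}$ guarantees $\MD(\vcP)\le\delta$ and $2k\le T$, and a short case check using $8\le\delta\le4T$ (distinguishing $\delta\le 2T$, where $k=\lfloor\delta/4\rfloor$, from $\delta>2T$, where $k=\lfloor T/2\rfloor$ and $T\ge\delta/4$) shows $k\ge c_0\delta$ for a universal constant $c_0>0$. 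Therefore $\inf_A\Regret{A\mid\mathcal C^{\rm ID}(\delta)}\ge k/2\ge(c_0/2)\delta$. Combining with the first paragraph, $\inf_A\Regret{A\mid\mathcal C^{\rm ID}(\delta)}\ge\max\{(c_0/2)\delta,\;CT^{1/2}\}\ge\tfrac{c_0}{4}\delta+\tfrac{C}{2}T^{1/2}$, which is the claim with $C_1=c_0/4$ and $C_2=C/2$.

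\textbf{Main obstacle.} The delicate part is engineering the corruption instance so that the regret is genuinely $\Omega(\delta)$ rather than $\Omega(\delta^{1/2})$. If the reward-$2$ requests were instead placed at random times interleaved with the reward-$1$ requests, an algorithm could hedge and the loss would be only $\Theta(\sqrt k)$ (a random-walk effect); the essential device is that the ``state of the world'' $\theta$ is disclosed only \emph{after} all irreversible accept/reject decisions have been made and the optimal continuation is forced, making the per-instance expected loss exactly $k/2$ for every algorithm. The remaining work — the total-variation bookkeeping behind $\MD(\vcP)\le 4k$, the case split $\delta\le 2T$ versus $\delta>2T$, and (if one prefers to keep $\bar f=O(1)$) the alternative of replacing the single instance with $\Theta(\delta)$ independent bounded copies on separate resources — is routine.
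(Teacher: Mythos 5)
Your proposal is correct and follows essentially the same route as the paper: the $\Omega(T^{1/2})$ term comes from noting that i.i.d.~input lies in $\mathcal C^{\rm ID}(\delta)$ and invoking Lemma~\ref{lem:lower_bound}, while the $\Omega(\delta)$ term comes from Yao's lemma applied to two product-of-point-mass instances that agree on an initial block of ambiguous unit-consumption requests and differ only afterward, with the same total-variation bookkeeping ($\MD = \Theta(k)$, $k \approx \delta/8$) and the same final step of averaging the two bounds. The only substantive difference is the padding: the paper fills the tail with $(f,b)=(1,0.5)$ requests so the budget ratio stays fixed at $\rho=0.5$, whereas your null tail forces $\rho=k/T$, which vanishes when $\delta=o(T)$, so your hard instances have shrinking per-period budgets---still valid for the theorem as stated, but a slightly weaker demonstration than the paper's constant-$\rho$ construction.
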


{We remark that multi-arm bandits and online convex optimization problems have been recently studied in similar non-stationary stochastic settings~\citep{besbes2014stochastic, besbes2015nonstationary}. In this line of work, the goal is to design algorithms that perform well when the model primitives change throughout time. The metric considered by these papers is the variation budget, which captures how much the input changes from one time step
to the next, and they show that vanishing regret can be achieved as long as the variation budget is of order $o(T)$. In our setting, this metric would be given by  $\sum_{t=1}^{T-1}\|\cP_t-\cP_{t+1}\|_{\TV}$. The instance given in the proof Theorem~\ref{thm:lower-bound-inde} readily implies that the regret of every algorithm is $\Omega(T)$ even if the distributions $\cP_t$ change once throughout the horizon, i.e., the variation budget is constant. Therefore, this metric is not appropriate for our setting.}

\nnvspace
\subsection{Ergodic Input and Markov Processes}

We now consider stochastic input models that are not necessarily independent across time. In particular, we restrict attention to ergodic input processes, which, intuitively, satisfy the property that requests tend to be independent as they grow apart in time. These input processes allow for requests that are close in time to be correlated. Examples of ergodic processes are irreducible and aperiodic Markov chains and stationary autoregressive processes. Such processes are extensively used in time series analysis to estimate the arrivals of users to a website or jobs to a server and might lead to more realistic inputs models.

Let $\cP \in \Delta(\cS^T)$ be a stochastic process. Denoting by $\gamma_{1:t} = (\gamma_s)_{s=1}^t$ the sequence of inputs up to time $t$, we let $\mathcal P_t(\gamma_{1:s})$ be the conditional distribution of $\gamma_t$ given $\gamma_{1:s}$ for $s < t$.  For every $k \in [T]$ and a one-period distribution $\bar \cP \in \Delta(\cS)$ we denote by
\[
    \mathrm{TV}_k(\cP, \bar \cP) = \sup_{\vec \gamma \in \mathcal S^{T}} \sup_{t = 1,\ldots,T-k} \left \| \cP_{t+k} \left(\gamma_{1:{t-1}}\right) -  \bar \cP \right\|_\TV\,,
\]
the worst-case total variation distance between the distributions in period $t+k$ conditional on the data at the beginning of period $t$ and $\bar \cP$. When $\bar \cP$ is the stationary distribution of the process $\cP$, the metric $\mathrm{TV}_k(\cP, \bar \cP)$ gives the maximum distance between the $k$-step transition probability and the stationary distribution. Intuitively, if the process is ergodic, it should mix relatively quickly and the latter metric should decrease as $k$ increases. We refer to this metric as the \emph{$k$-step distance from stationarity}.

Let $\mathcal C^{\rm E}(\delta, k) = \left\{ \cP \in \Delta(\cS^T) :  \mathrm{TV}_k(\cP, \bar \cP) \le \delta \text{ for some } \bar \cP \in \Delta(\cS) \right\}$ be the set of all stochastic processes with $k$-step distance from stationary no larger than $\delta > 0$.  The next theorem presents our regret bound for Algorithm~\ref{al:sg} under ergodic inputs:

\begin{thm}\label{thm:ergodic}
Consider Algorithm \ref{al:sg} with step-size $\eta \ge 0$ and initial solution $\mu_1\in \cD$. Suppose Assumptions~\ref{ass:p}-\ref{ass:h} are satisfied and the requests come from an ergodic process. Then, it holds for any $T\ge 1$, $\delta \ge 0$, and $k \ge 0$ that
\[
\Regret{A \mid \mathcal C^{\rm E}(\delta, k) }  \le C_1 + C_2 \eta T + \frac {1} {\eta} C_3  + C_4 \eta T k + 2 \ubf  T \delta + 2 \ubf k\,,
\]
where {\color{black} the constants $C_1, C_2, C_3$ are defined in Theorem~\ref{thm:master}}  and $C_4= \frac{\sqrt{2} \ubb}{\sigma} (\ubb+\ubrho)$.

\end{thm}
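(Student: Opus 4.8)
The plan is to follow the same three-step template as in the proof of Theorem~\ref{thm:master}, replacing the i.i.d.\ martingale argument with a coupling/mixing argument that exploits ergodicity. First I would fix the process $\cP \in \mathcal C^{\rm E}(\delta, k)$ and let $\bar\cP \in \Delta(\cS)$ be a one-period distribution witnessing $\mathrm{TV}_k(\cP, \bar\cP) \le \delta$. As before, define the stopping time $\tA$ as the first time $t < T$ at which some resource satisfies $\sum_{s=1}^{\tA}(b_s(x_s))_j + \ubbinfty \ge \rho_j T$, so that no resource constraint is violated before $\tA$ and $\tx_t = x_t$ for $t \le \tA$. The primal identity $f_t(x_t) = f_t^*(\mu_t) + \mu_t^\top b_t(x_t)$ still holds pointwise. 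The key quantity is $\bar D(\mu \mid \bar\cP) = \EE_{(f,b)\sim\bar\cP}[f^*(\mu)] + \mu^\top\rho$, the expected dual objective under the \emph{stationary} proxy distribution; I would want a lower bound of the form $\EE[\sum_{t=1}^{\tA} f_t(x_t)] \ge \EE[\tA\,\bar D(\bmu_{\tA}\mid\bar\cP)] - \EE[\sum_{t=1}^{\tA}w_t(\mu_t)] - (\text{mixing error})$, with $w_t(\mu) = \mu^\top(\rho - b_t(x_t))$ as in Theorem~\ref{thm:master}.

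The main obstacle — and the only real novelty over Theorem~\ref{thm:master} — is Step~1, because $\mu_t$ depends on $\gamma_{1:t-1}$ and $\gamma_t$ is \emph{not} independent of that history, so $\EE[f_t^*(\mu_t) \mid \sigma(\xi_{t-1})] \ne \bar D(\mu_t\mid\bar\cP) - \mu_t^\top\rho$ in general. The fix is a $k$-step delay: condition on $\sigma(\xi_{t-k-1})$ instead. Then $\mu_t$ still depends on $\gamma_{t-k},\ldots,\gamma_{t-1}$, which are not $\sigma(\xi_{t-k-1})$-measurable, so I would further use that $\mu_t$ moves by at most $O(\eta)$ per step (from the bounded-subgradient, strongly-convex mirror-descent stability, i.e.\ $\|\mu_{t+1}-\mu_t\dualnorm \le \frac{\eta}{\sigma}(\ubb+\ubrho)$) to compare $\mu_t$ with the $\sigma(\xi_{t-k-1})$-measurable iterate $\mu_{t-k}$, paying an error $O(\eta k (\ubb+\ubrho))$ per period which sums to the $C_4\eta T k$ term with $C_4 = \tfrac{\sqrt2\,\ubb}{\sigma}(\ubb+\ubrho)$. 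Conditioning on $\sigma(\xi_{t-k-1})$, the law of $\gamma_t$ is $\cP_t(\gamma_{1:t-k-1})$ which is within $\delta$ in total variation of $\bar\cP$ by the definition of $\mathcal C^{\rm E}(\delta,k)$, costing an extra $2\ubf\delta$ per period (bounding $|f^*(\mu)|$ and $|b_j(x)|$-type quantities by $\ubf$ after normalization), which sums to $2\ubf T\delta$. The $2\ubf k$ term absorbs the boundary periods $t \le k$ (and the $\tA-k$ vs.\ $\tA$ discrepancy when restarting the sum). For the martingale/optional-stopping bookkeeping on $\sum w_t(\mu_t)$ and $\sum f_t(x_t)$ I would use the same $k$-delayed filtration; since $\tA$ is a bounded stopping time the optional stopping / Azuma-type argument goes through with the delayed increments.

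Step~2 is unchanged: $w_t(\mu) = \mu^\top(\rho - b_t(x_t))$ has gradients bounded by $\ubb + \ubrho$ in primal norm, so online mirror descent (Proposition~\ref{prop:omd}) gives $\sum_{t=1}^{\tA} (w_t(\mu_t) - w_t(\mu)) \le E(T,\mu) = \frac{1}{2\sigma}(\ubb+\ubrho)^2\eta T + \frac1\eta V_h(\mu,\mu_1)$ for every $\mu \in \cD$, exactly as before, yielding the $C_2\eta T$ and $\frac1\eta C_3$ terms. Step~3 is the same case analysis on whether $\tA = T$ or $\tA < T$: in the latter case there is a resource $j$ with $\sum_{t=1}^{\tA}(b_t(x_t))_j + \ubbinfty \ge T\rho_j$, and choosing the pivot $\mu = (\ubf/\rho_j)e_j$ makes $\sum_{t=1}^{\tA} w_t(\mu) \le \frac{\ubf}{\rho_j}\ubbinfty - \ubf(T - \tA)$; combined with the crude bound $\OPT(\vgamma) \le \tA\,\bar D(\bmu_{\tA}\mid\bar\cP) + (T-\tA)\ubf$ via Proposition~\ref{prop:upper_bound} applied to the process $\cP$ (here one must check weak duality holds path-by-path, which it does since Proposition~\ref{prop:upper_bound} is a deterministic statement for each realization $\vgamma$, and then take expectations), the $(T-\tA)\ubf$ terms cancel and one is left with $C_1 = \ubf\ubbinfty/\lbrho$ plus $E(T,\mu)$ plus the accumulated mixing errors $C_4\eta T k + 2\ubf T\delta + 2\ubf k$. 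Summing over the two cases and using $\mu \in \{0, (\ubf/\rho_1)e_1,\ldots,(\ubf/\rho_m)e_m\}$ to bound $V_h(\mu,\mu_1)$ by $C_3$ gives the stated inequality.
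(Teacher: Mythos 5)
Your overall architecture is the same as the paper's: the three-step template of Theorem~\ref{thm:master}, with Step~1 repaired by a $k$-step shift that combines (i) the mirror-descent stability bound $\|\mu_{t+1}-\mu_t\dualnorm \le \tfrac{\sqrt 2}{\sigma}\eta(\ubb+\ubrho)$ with the $\ubb$-Lipschitzness of $\mu\mapsto f^*(\mu)$ to pay $C_4\eta T k$, and (ii) the total-variation comparison of the $(k{+}1)$-step conditional law to the stationary proxy $\bar\cP$ (your backward conditioning on $\sigma(\xi_{t-k-1})$ is just a reindexing of the paper's forward shift $\Psi_{t+k}(\mu_{t+k})$ vs.\ $\Psi_{t+k}(\mu_t)$). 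Step~2 is indeed unchanged.

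The genuine gap is in your Step~3. You assert that $\EE[\OPT(\vgamma)] \le \tA\, D(\bmu_{\tA}\mid \bar\cP) + (T-\tA)\ubf$ follows from Proposition~\ref{prop:upper_bound} ``path-by-path, then take expectations.'' Weak duality applied pathwise only gives $\OPT(\vgamma) \le D(\mu\mid\vgamma) = \sum_t f_t^*(\mu) + T\rho^\top\mu$ for the \emph{realized} sequence; it does not bound $\OPT$ by the dual evaluated under the stationary proxy $\bar\cP$. Converting $\EE[\sum_t f_t^*(\mu)]$ into $T\,\EE_{(f,b)\sim\bar\cP}[f^*(\mu)]$ requires a second application of the mixing argument (tower rule, the first $k$ periods bounded crudely by $\ubf$ each since their conditional laws are uncontrolled, and a $\ubf\delta$ total-variation cost per remaining period), yielding $\EE[\OPT(\vgamma)] \le T\,D(\mu\mid\bar\cP) + \ubf k + \ubf T\delta$. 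This is exactly where the second halves of the $2\ubf T\delta$ and $2\ubf k$ terms in the theorem come from. Your bookkeeping hides this: you charge ``$2\ubf\delta$ per period'' and both $\ubf k$ boundary terms entirely to Step~1, but the correct Step-1 cost is only $\ubf\delta$ per period (since $0\le f^*\le\ubf$) plus one $\ubf k$ boundary term; the remaining $\ubf T\delta + \ubf k$ is genuinely incurred on the benchmark side and cannot be absorbed by inflating the algorithm-side estimate. Once you add the missing benchmark lemma (the analogue of the paper's bound \eqref{eq:bound-opt-ergodic}) and restore the per-period constant in Step~1 to $\ubf\delta$, your argument closes and coincides with the paper's proof; as a minor point, the one-step stability constant you quote, $\tfrac{\eta}{\sigma}(\ubb+\ubrho)$, is missing the $\sqrt2$ of Proposition~\ref{prop:diff-iterate}, though your final $C_4$ is stated correctly.
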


The proof of the theorem is inspired by an analysis given by \cite{duchi2012ergodic} for the convergence of mirror descent for unconstrained stochastic optimization problems with ergodic input. A key step of the proof involves comparing the expected dual performance at time $t$ when the dual variable of the algorithm is $\mu_t$ under the ergodic process to the expected performance under its stationary distribution. Instead of looking at the performance at time $t$, we shift time by $k$ periods and compare to the expected performance at time $t+k$. On the one hand, ergodicity guarantees that the dual performance at time $t+k$ is close to the stationary expected performance assuming that the dual variables do not change. On the other hand, the dual variable at time $t+k$ is different to $\mu_t$ because our algorithm updates the dual variable after every iteration. To control the change in dual performance, we show that the dual variables do not change too much in $k$ steps and then use that the dual objective is Lipschitz continuous.


When the input is i.i.d., we have that $k = \delta = 0$ and we recover the bound from Theorem~\ref{thm:master}. Now, suppose that requests follow an irreducible and aperiodic Markov process. In this case, we can write $\cP_t(\gamma_s)$ for the distribution at time $t$ when the state is $\gamma_s$ at $s<t$ by the Markov property. Let $\bar \cP$ be its stationary distribution. If the Markov chain has a finite state-space or has a general state-space and is uniformly ergodic, then there exist constants $R>0$ and $\alpha \in (0,1)$ such that $\sup_{\gamma_s \in \cS} \| \cP_t(\gamma_s) - \bar \cP \|_\TV \le R \alpha^{t-s}$ (see, e.g., \citealt[Theorem 4.9]{levin2017markov} or \citealt[Chapter 16]{meyn2012markov}). This implies that the $k$-step distance from stationarity decreases exponentially fast in $k$. Therefore, we obtain a regret of $O(1/\eta + \eta T k + T \alpha^k)$. Setting $\eta \sim T^{-1/2}$ and $k=-\log T/(2\log \alpha)$ yields a regret of $O(T^{1/2} \log T)$. This regret bound matches the lower bound for i.i.d.~input up to a logarithmic term.

\nnvspace
\subsection{Periodic Input}

In many practical applications, requests exhibit periodicity or seasonality. For example, in internet advertising, traffics in the mornings are different from those in the evenings, but daily patterns tend to be consistent from one day to the next. Similarly,  requests during weekdays are different to those during weekends, but weekly patterns tend to repeat over time~(see, e.g., \citealt{zhou2019robust}). 

In this section, we consider a periodic, and thus dependent, input model. Suppose that requests have cycles of length $q \in \mathbb N$ so that requests within a cycle can be arbitrarily correlated but cycles, as a whole, are independently and identically distributed. Assume for simplicity that $T$ is divisible by $q$.  More formally, we define the class of all $q$-periodic requests distributions by $\mathcal C^{\rm P}(q) = \left\{ \cP \in \Delta(\cS^q)^{T/q} : \cP_{1:q} = \cP_{q+1:2q} = \ldots = \cP_{T-q+1:T}\right\}$ where $\cP_{s:t}$ denotes the joint distribution of requests $s\le t$. The next theorem presents the worst-case regret bound of Algorithm~\ref{al:sg} for this input model.


\begin{thm}\label{thm:periodic}
Consider Algorithm \ref{al:sg} with step-size $\eta \ge 0$ and initial solution $\mu_1\in \cD$. Suppose Assumptions~\ref{ass:p}-\ref{ass:h} are satisfied and the requests come from a periodic model. Then, it holds for any $T\ge 1$ and $q\ge 0$ that
\[
 \Regret{A \mid \mathcal C^{\rm P}(q) } \le C_1 + C_2 \eta T + \frac {1} {\eta} C_3 +  C_4 q \eta T \,,
\]
where {\color{black} the constants $C_1, C_2, C_3$ are defined in Theorem~\ref{thm:master}} and 
$C_4= \frac{\sqrt{2}}{\sigma} (\ubb+\ubrho)^2$.
\end{thm}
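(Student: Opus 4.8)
\textbf{Proof proposal for Theorem~\ref{thm:periodic}.}

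The plan is to mirror the three-step structure of the proof of Theorem~\ref{thm:master}, keeping Steps 1 and 3 essentially intact and concentrating the new work in a modified Step 2. As in the i.i.d.~case, I would first define the stopping time $\tA$ as the first time less than $T$ at which some resource $j$ satisfies $\sum_{t=1}^{\tA}(b_t(x_t))_j + \ubbinfty \ge \rho_j T$, note that before $\tA$ the algorithm runs unconstrained mirror descent ($\tx_t = x_t$), and reproduce the primal-performance bound. The subtlety is which ``average dual objective'' to compare against: now I would let $\bar D(\mu \mid \cP)$ be the per-period expected dual objective over one full cycle, i.e.\ $\bar D(\mu\mid\cP) = \frac1q \EE_{\gamma_{1:q}\sim\cP_{1:q}}[\sum_{s=1}^q f_s^*(\mu)] + \mu^\top\rho$, and aggregate the per-period identities \eqref{eq:primal-approximation} in blocks of $q$ consecutive periods. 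Because cycles are i.i.d., a martingale/optional-stopping argument analogous to the one in Step 1 (applied at the cycle level, with the stopping time rounded up to a cycle boundary and the $O(q)$ error this introduces absorbed into constants) gives $\EE[\sum_{t=1}^{\tA} f_t(x_t)] \ge \EE[\tA\, \bar D(\bmu_{\tA}\mid\cP)] - \EE[\sum_{t=1}^{\tA} w_t(\mu_t)]$ with $w_t(\mu) = \mu^\top(\rho - b_t(x_t))$ and $\bmu_{\tA}$ the running average of the dual iterates, using convexity of $\bar D$ in $\mu$.

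The heart of the argument, and the main obstacle, is Step 2. Under i.i.d.~input the functions $w_t$ fed to mirror descent have the property that $\EE[w_t(\mu)\mid\sigma(\xi_{t-1})]$ is a fixed (data-independent) affine function, so the online-mirror-descent regret bound directly controls the complementary-slackness term against any fixed pivot $\mu$. Under periodic input this fails: the relevant comparison object is the cycle-averaged function, and within a cycle the per-period gradients $\rho - b_t(x_t)$ need not average to anything the fixed pivot controls step by step. The fix I would use is the standard ``shift by one cycle'' trick (the same idea used in the proof of Theorem~\ref{thm:ergodic}, but here exact rather than approximate because cycles are genuinely i.i.d.): since $\cP_{1:q} = \cP_{q+1:2q} = \cdots$, the function $w_t$ has the same conditional law as $w_{t+q}$ given the appropriate sigma-algebra, so I would compare $\mu_t$ against $\mu_{t+q}$ rather than against $\mu_t$ directly, and use that the dual iterates move slowly: from the mirror-descent update \eqref{eq:dual_update}, strong convexity of $h$, and the gradient bound $\|\tg_t\primalnorm \le \ubb + \ubrho$, one gets $\|\mu_{t+q} - \mu_t\dualnorm \le \frac{\sqrt{2}}{\sqrt\sigma}\sqrt{q}\,\eta(\ubb+\ubrho)$ (or a linear-in-$q$ bound of the same flavor), and then Lipschitz continuity of the affine functions $w_t$ in the dual variable turns this drift into an additive error of order $q\,\eta\,(\ubb+\ubrho)^2/\sigma$ per period, i.e.\ $C_4 q\eta T$ in aggregate with $C_4 = \frac{\sqrt2}{\sigma}(\ubb+\ubrho)^2$. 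Combining this drift term with the ordinary online-mirror-descent regret $E(T,\mu) = \frac{1}{2\sigma}(\ubb+\ubrho)^2\eta T + \frac1\eta V_h(\mu,\mu_1)$ yields $\sum_{t=1}^{\tA} w_t(\mu_t) - w_t(\mu) \le E(T,\mu) + C_4 q\eta T$ for every fixed $\mu \in \cD$.

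Step 3 is then verbatim the argument from Theorem~\ref{thm:master}. Using Proposition~\ref{prop:upper_bound} one bounds $\EE[\OPT(\vgamma)] \le \tA\,\bar D(\bmu_{\tA}\mid\cP) + (T-\tA)\ubf$, substitutes the primal lower bound, and chooses the pivot $\mu$ by cases: $\mu = 0$ when $\tA = T$, and $\mu = (\ubf/\rho_j)e_j$ for the depleted resource $j$ when $\tA < T$, which makes $\sum_{t=1}^{\tA} w_t(\mu) \le \ubf\ubbinfty/\lbrho - \ubf(T-\tA)$ exactly as before. The $(T-\tA)\ubf$ terms cancel, leaving $\Regret{A\mid\mathcal C^{\rm P}(q)} \le C_1 + C_2\eta T + \frac{1}{\eta}C_3 + C_4 q\eta T$ with $C_1,C_2,C_3$ as in Theorem~\ref{thm:master}. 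The only genuinely new ingredients are the cycle-level martingale bookkeeping (routine, and the $O(q)$ rounding error is lower order) and the dual-drift estimate over $q$ steps, which is where I expect to spend the real effort; everything else is a transcription of the i.i.d.~proof.
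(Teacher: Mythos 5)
There is a genuine gap, and it is one of misplacement: you have located the new difficulty in Step 2, but Step 2 is not where the i.i.d.\ proof breaks. The complementary-slackness bound $\sum_{t\le\tA} w_t(\mu_t)-w_t(\mu)\le E(T,\mu)$ is a \emph{pathwise, deterministic} consequence of running online mirror descent on the linear functions $w_t$ (Proposition~\ref{prop:omd}); it holds verbatim for any input model, so no ``shift by one cycle'' or distributional argument is needed or useful there, and indeed the paper's proof reuses Step 2 unchanged. What actually fails under periodic input is your Step 1. The claimed inequality $\EE[\sum_{t\le\tA}f_t(x_t)]\ge\EE[\tA\,\bar D(\bmu_{\tA}\mid\bar\cP)]-\EE[\sum_{t\le\tA}w_t(\mu_t)]$ requires converting $\EE[f_t^*(\mu_t)\mid\text{past}]$ into the cycle-averaged dual $\EE_{(f,b)\sim\bar\cP}[f^*(\mu_t)]$; but within a cycle $\gamma_t$ is correlated with the earlier requests of the same cycle, on which $\mu_t$ depends, so the relevant conditional law is $\cP_t(\gamma_{1:t-1})$, not the marginal, and this discrepancy is not an ``$O(q)$ rounding error at cycle boundaries'' that can be absorbed into constants. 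The correction needed there is of order $\eta q$ per period, i.e.\ it is precisely the source of the $C_4 q\eta T$ term; your proposal instead asserts Step 1 goes through essentially intact and manufactures the drift term in Step 2, which neither addresses the correlation issue nor is required.

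The repair is the one the paper carries out (as a special case of a general partition result, Theorem~\ref{thm:periodic-general}): inside each cycle $\mathcal T_k=\{t_k,\dots,t_{k+1}-1\}$, freeze the multiplier at the cycle start and use $f_t^*(\mu_t)\ge f_t^*(\mu_{t_k})-\ubb\,\|\mu_t-\mu_{t_k}\dualnorm$, then bound the drift via the one-step stability estimate of Proposition~\ref{prop:diff-iterate}, $\|\mu_{t+1}-\mu_t\dualnorm\le\tfrac{\sqrt2}{\sigma}\eta(\ubb+\ubrho)$, which costs $O(\eta q^2)$ per cycle and $O(\eta qT)$ in total (note the drift over $q$ steps is linear in $q$; your $\sqrt q$ variant is not justified, though you hedge on it). Because $\mu_{t_k}$ is measurable with respect to the pre-cycle history and cycles are i.i.d., a martingale argument with the frozen multiplier is now legitimate, and the within-cycle conditional distributions average exactly to $\bar\cP=\tfrac1q\sum_{s=1}^q\cP_s$, so the total-variation error of the general theorem vanishes for periodic input. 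Your ingredients---iterate stability, Lipschitzness of the dual in $\mu$, the constant $C_4=\tfrac{\sqrt2}{\sigma}(\ubb+\ubrho)^2$, and the unchanged Step 3 pivot argument---are the right ones, but as written the argument has an unjustified leap in Step 1 and an unnecessary modification in Step 2 that does not fix it.
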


The above theorem leads to a regret of order $O(1/\eta + \eta q T)$. The optimal step-size is of order $\eta \sim (q T)^{-1/2}$ yielding a regret of $O((q T)^{1/2})$. Therefore, if the length of the cycles is $o(T)$, then our algorithm attains vanishing regret. If the step-size is chosen obliviously to the length of the period, i.e., $\eta \sim T^{-1/2}$, then our algorithm attains regret $O(q T^{1/2})$. 

Finally, Theorem~\ref{thm:periodic} presents another regret transition from i.i.d.~to adversarial input: when requests are i.i.d.~we have $q=1$ and the regret is $O(T^{1/2})$; when requests are adversarially chosen, we have $q=T$ and thus $\Regret{A}\sim O(T)$.

\nvspace
\section{Applications}\label{sec:applications}


{\color{black}
\nnvspace
\subsection{Online Linear Programming}\label{sec:nrm}

In an online linear program, a decision maker tries to dynamically allocate $m$ types of resources with inventory $B=T\rho\in \RR^m_+$ over a finite horizon $T$. At each time period $t$, a customer arrives and makes a request with an associated revenue vector $r_t\in \RR_+^d$ and consumption matrix $c_t\in \RR^{m\times d}_+$. The decision maker needs to choose, in real time, an action $x_t\in \cX_t \subset \RR^d_{+}$, where $\cX_t$ is the  action space at time $t$. Then, the reward function is $f_t(x_t)=r_t^\top x_t$ and the consumption function is $b_t(x_t)= c_t x_t$, and the offline problem is given by the following linear program:
\begin{align*}
	\max_{x: x_t\in \cX_t}\  \sum_{t=1}^T r_t^\top x_t     \ \ \ 
	\text{s.t.}    \   \sum_{t=1}^T c_t x_t \le B\,.
\end{align*}

\nvspace

Online linear programming has many applications in operations management. A special case when the decision maker makes only accept/reject decision, i.e., $d=1$ and $\cX_t=\{0,1\}$, is network revenue management, which dates back to \citet{glover1982passenger,wang1983optimum} and the algorithmic insights developed in the literature have been extensively applied in practice, with applications in airlines, hospitality, railways, and cloud computing. See \citet{bitran2003overview,talluri2006theory,gallego2019revenue} for the applications and more recent developments. Another application considerably studied in the computer science literature is online matching in which each request is assigned to at most one resource, i.e., $\cX_t=\{ x\in \{0,1\}^m : \sum_{j=1}^n x_j \le 1 \}$ and $c_t$ is a diagonal matrix (see, e.g., \citealt{karp1990optimal,feldman2009online}).}

Algorithm \ref{al:sg} and its analysis can be directly applied to online linear programming. The primal update \eqref{eq:primal_decision} becomes
$x_t \in \arg\max_{x\in\cX_t} \left\{ (r_t^\top-\mu_t^\top c_t) x \right\}$, which results in an online gradient $g_t = - c_t x_t + \rho$ that can be used in the dual mirror descent update \eqref{eq:dual_update}.

Compared with previous works on this problem, such as~\citet{agrawal2016efficient,DevanurHayes2009,Feldman2010,Devanur2019near}, our algorithm obtain the optimal $O(T^{1/2})$ regret under the stochastic i.i.d.~inputs, and it is fast as we do not need to solve a auxiliary linear programs. To break the $O(T^{1/2})$ regret rate, under additional strongly convexity assumptions on the dual problem, \cite{LiYe2019online} proposed an online algorithm for network revenue management (i.e., $d=1$ and $\cX_t=\{0,1\}$), which obtained $\log(T)$ regret. When input is adversarial, our algorithm yields the optimal asymptotic competitive ratio when the revenue vectors $r_t$ and resource consumption matrices $c_t$ are arbitrary. In particular, the worst-case instance of \cite{BalseiroGur2019MS} can be modified to show that our algorithm yields the optimal competitive ratio even when there is a single resource and each request consumes one unit, i.e., $c_t = 1$. This special case is called the \emph{single-leg revenue management problem}~\citep{TalluriVanRyzin2004} and, in this case, the competitive ratio of our algorithm is $1/\rho$, which is tight. Furthermore, we obtain new results in our three non-stationary stochastic input as specified in Section~\ref{sec:non-stationary}.

We remark, however, that our competitive ratio is not optimal when the problem has more structure. In the single-leg revenue management problem, \cite{ball2009toward} provide an algorithm whose competitive ratio depends on the support of the revenues $r_t$. In particular, when revenues can take $n$ different values, the competitive ratio is at most $n$ independently of $\rho$, which is tight. Their worst-case instances have $\rho = 1/n$, which matches the hardness result above described. {\color{black} Our adversarial results are also not optimal in the AdWords problem~\citep{Mehta2007JACM}---a special version of the online matching problem in which rewards are proportional to resource consumption, i.e., $c_t = \text{diag}(r_t)$.} In light of \cite{mirrokni2012simultaneous}, this should not be surprising as no algorithm that attains vanishing regret under stochastic input (as ours) for the AdWords problem can obtain a fixed competitive ratio that is independent of the resource vector $\rho$ under adversarial input.


\nnvspace
\subsection{Bidding in Repeated Auctions with Budgets}

As of 2019, around 85\% of all display advertisements are bought programmatically---using automated algorithms~\citep{eMarketer2019}. A common mechanism used by advertisers to buy ad slots is real-time auction: each time a user visits a website, an auction is run to determine the ad to be shown in the user's browser. Because there is a large number of these advertising opportunities in a given day, advertisers set budgets to control their cumulative expenditure. We discuss how to apply our methods to the problem of bidding in repeated auctions with budgets.


{\color{black}
We consider an advertiser with a budget $\rho T$ that limits the cumulative expenditure over $T$ auctions. Each request corresponds to an auction in which an \emph{impression} becomes available for sale. When the $t$-th impression arrives, the advertiser first learns a value $v_t$ for winning the impression based viewer-specific information and then determines a bid $x_t$ to submit to the auction.  We assume that impressions are sold using a second-price auction. Denoting by $d_t$ the highest bid submitted by competitors, the advertiser wins whenever his bid is the highest (i.e., $x_t \ge d_t$) and pays the second-highest bid in case of winning (i.e., $d_t \mathbf 1\{x_t \ge d_t\})$. To simplify the exposition, we assume that ties are broken in favor of the advertiser. At the point of bidding, the advertiser does not know the highest competing bid. Consistent with practice, we assume that the advertiser only observes his payment in case of winning.



This problem can be mapped to our framework by setting $f_t(x) = (v_t - d_t)\mathbf 1\{x_t \ge d_t\}$ and $b_t(x) = d_t \mathbf 1\{x_t \ge d_t\}$. With the benefit of hindsight, a decision maker can win an auction by bidding an amount equal to the highest competing bid (i.e., $x_t = d_t$). Therefore, the optimal solution in hindsight reduces to solving a knapsack problem in which the impressions to be won are chosen to maximize the net utility subject to the budget constraint. The problem is given by:
\begin{align*}
\begin{split}
        \max_{y_t \in \{0,1\} }\    \sum_{t=1}^T (v_t - d_t) y_t \ \ \ 
    \text{s.t.} \  \sum_{t=1}^T d_t y_t  \le T \rho,
    \end{split}
\end{align*}
where $y_t \in \{0,1\}$ is a decision variable indicating whether the advertiser wins the $t$-th impression.

Note that the informational assumptions are different from the ones of our baseline model because the competing bid $d_t$ is not assumed to be known at the point of bidding. Interestingly, because ads are sold using an ex-post incentive compatible auction, such information is not necessary for our algorithm: the algorithm only needs to know the payment incurred. In fact, our analysis applies to any other ex-post incentive compatible auction. To see this, denote by $\mu_t \ge 0$ the dual multiplier of the budget constraint and observe that the primal decision in Algorithm~\ref{al:sg} is
\begin{align*}
x_{t} 
&= \arg\max_{x}\left\{f_{t}(x)-\mu_{t} b_t(x)\right\} 
= \arg\max_{x}\left\{\big(v_t -(1+\mu_t) d_t \big)  \mathbf 1\{x \ge d_t\}\right\}\\
&= \arg\max_{x}\left\{ \left (\frac{v_t}{1+\mu_t} - d_t\right) \mathbf  1\{x \ge d_t\} \right\} = \frac{v_t}{1+\mu_t}\,,
\end{align*}
where we used that the sub-problem is equivalent to that of bidding in a second-price auction with value $v_t/(1+\mu_t)$ together with the truthfulness of the auction. The optimal decision can be implemented without knowing the maximum competing bid. After the bid, we observe the payment $b_t(x_t)$, which leads to an online dual sub-gradient $\tg_t = - b_t(x_t) + \rho$ that can be used in the dual mirror descent update~\eqref{eq:dual_update}.}

The problem of bidding in repeated auctions with budgets has been studied recently in \citet{BalseiroGur2019MS}. In their paper, they present an adaptive pacing strategy that attempts to learn an optimal Lagrange multiplier using sub-gradient descent. Their adaptive pacing strategy is shown to attain $O(T^{1/2})$ regret under stochastic i.i.d.~input with restrictive assumptions on the distribution of inputs. Specifically, they assume that values and competing bids are independent, and that the expected dual function $\mathbb E_{\vgamma} \left[ D(\mu | \vgamma) \right]$ is thrice differentiable and strongly convex. In practice, however, values and competing bids are positively correlated. Our algorithms attain similar regret bounds without such restrictive assumptions on the inputs in the stochastic i.i.d.~model, as well as other input models. In the case of adversarial input, \cite{BalseiroGur2019MS} showed that no algorithm can attain a competitive ratio better than $\bar v / \rho$, where $\bar v$ is a uniform upper bound on the advertiser's values. The competitive ratio of our algorithm is thus optimal for this problem. \cite{zhou2008budget} study an online knapsack problem with one resource and provide an algorithm whose competitive ratio depends on the range of the value-to-weight ratio of each item. Our parametric, adversarial bounds are not directly comparable with theirs.

\nnvspace
\subsection{Proportional Matching with High Entropy}\label{sec:matching}

Online matching is a central problem in computer science, with applications in online advertisement allocation, job/server allocation in cloud computing, product recommendation under resource constraints, etc. High-entropy proportional matching is a variant that has attracted attention lately because it has been shown to posses additional desirable properties, such as fairness and diversity~\citep{lan2010axiomatic, venkatasubramanian2010fairness, qin2013promoting, ahmed2017diverse}.


   

We here consider an online matching problem using the terminology of online advertising. To wit, we study an online advertisement allocation problem, where at each time period, the decision maker matches an incoming impression with one advertiser (who may have a capacity constraint), aiming to maximize the total reward over all incoming impressions while keeping a high entropy of such matchings. In this example, reward functions are non-linear but concave.

Suppose there are $m$ advertisers, a total of $T$ time periods, and the capacity of the $j$-th advertiser is $\rho_j T$. At time period $t$, an impression with revenue vector $r_t\in \RR^m$ arrives, i.e., if we allocate it to advertiser $j\in [m]$, then it generates revenue $(r_t)_j$. When an impression arrives, we decide an assignment probability variable $x_t\in \cX:=\{x\in\RR^m_+|\sum_{i=1}^m x_i\le 1 \}$, and assign the arriving impression to advertiser $j$ with probability $(x_t)_j$. Notice that with probability $1-\sum_{j=1}^m (x_t)_j$ the impression is not assigned to any advertiser, and in practice, such impressions will go to other traffic. The reward of an algorithm $A$ is given by $R(A|\vgamma) = \sum_{t=1}^T r_t^{\top} x_t + \lambda H(x_t)$, where $\lambda$ is the parameter of the entropy regularizer and  {\small$$H(x):=-\sum_{j=1}^m {x_j} \log(x_j) - \pran{1-\sum_{j=1}^m {x_j}}\log\pran{1-\sum_{j=1}^m{x_j}} $$}is the  entropy function of assignment probability $x$. A notable difference of this application is that the decisions are randomized. As a result, in the constraints, we need to take into account the actual realization of the probabilistic matching. Define the random variable
\begin{equation*}
    v_t = \left\{
        \begin{array}{cl}
        e_j     & \text{w.p. } x_j \\
        0     & \text{w.p. } 1-\sum_{j=1}^m x_j
        \end{array} \ ,
    \right.
\end{equation*}
where $e_j\in\RR^m$ is the $j$-th standard unit vector in $\RR^m$. The random variable $v_t$ characterizes the realized assignment of the impression at time $t$. Then, algorithm $A$ must satisfy $\sum_{t=1}^T v_t \le T \rho$. 

Finally, the hindsight problem is:
\begin{align}\label{eq:matching_prob}
\begin{split}
    \max_{x_t\in\cX} \  \sum_{t=1}^T r_t^{\top} x_t + \lambda H(x_t) \ \ \ 
    \text{s.t.} \  \sum_{t=1}^T x_t \le T \rho\ .\ \ \  
\end{split}
\end{align}
The resource constraint of \eqref{eq:matching_prob} is stated in terms of  the expected allocation $x_t$. As we argue in Appendix~\ref{sec:stochastic-functions}, this problem is a valid upper bound on the performance of every online algorithm. 


Invoking Algorithm~\ref{al:sg} with $f_t(x)=r_t^\top x + \lambda H(x)$ and $b_t(x)=x$, we obtain that the primal decision in \eqref{eq:primal_decision} can be computed in closed form as follows:
\begin{align*}
           (x_t)_j=\frac{\exp((r_t(j)-\mu_t(j))/\lambda)}{\sum_{l=1}^m\exp((r_t(l)-\mu_t(l))/\lambda)+1}\ 
\end{align*}
When we implement the algorithm, we account for the stochasticity of resource consumption by updating resources using the actual realization of the probabilistic matching $v_t$, i.e., we update the remaining capacity as $B_{t+1} = B_t - v_t$. Dual sub-gradients are computed using the probabilistic matching $x_t$, i.e, $\tg_t = - x_t + \rho$. {Our algorithm and analysis work even under stochastic resource consumption~(see Section~\ref{sec:stochastic-input} for a discussion).}


{We conclude by discussing the related literature.}
Recently, \citet{agrawal2018proportional} studied a multi-round \emph{offline} proportional matching algorithm for this problem setting.  Our algorithm leads to a simple \emph{online} counterpart to \citet{agrawal2018proportional} that yields similar regret/complexity bounds. \citet{dughmi2017bernoulli} introduced a dual-based online algorithm for proportional matching with stochastic input with a multiplicative weights update. Their algorithm, however, requires an estimate of the value of the benchmark. When the value of the benchmark is not known, an estimate can be obtained by solving a convex optimization problem. Our algorithm, in comparison, does not require knowing the value of the benchmark nor solving convex optimization problems. {We are not aware of any results for the other input models.}

\nnvspace
\subsection{Personalized Assortment Optimization with Limited Inventories}\label{sec:assortment}

Personalized product assortment/recommendation is nowadays a central problem faced by many online retailers (see, e.g., \citealt{bernstein2015dynamic, golrezaei2014real}).

We here consider a retailer with $m$ products and inventories $B=T\rho\in\RR^m_+$ that limit the amount of products to sell over $T$ time periods. In period $t$, one customer arrives searching for a product. The firm needs to decide, in real time, a subset of the product $S\subseteq \{1, \ldots, m\}$  to offer to the consumer, based on the inventory level, and the customer's personal preferences. The consumer then chooses a product (or not) according to a general choice model specifying the probability that a certain product is purchased from the assortment $S$ and the consumer personal information. The consumer's choice generates a revenue for the firm and consumes the firm's inventory. Here, we assume the choice model is known by the retailer; in practice, such a choice model can be learnt by a separate machine learning procedure. 

This assortment optimization problem is a special case of our online allocation problem. We utilize a $m$-dimensional binary variable $x_t\in \cX_t\subseteq \{0,1\}^m$ to represent the assortment $S$ for the $t$-th customer, where $\cX_t$ is the subset of products that satisfy the customer's search and $(x_t)_j = 1$ if product $j$ is included in the assortment. The set $\cX_t$ can encode constraints on the displayed assortment, e.g., assortments might be restricted to include a small number of products. Let $b_t(x)\in\RR^m$ capture the personalized information of the choice model for the $t$-consumer, namely, the $j$-th entry of $b_t(x)$ corresponds to the probability that the $t$-th customer purchases the $j$-th product given the assortment $x$. {A frequently-used probabilistic model on $b_t(x)$ is the multinomial logit model (MNL) with $(b_t(x))_j= e^{(\theta_t)_j}/ (1 + \sum_{i\in S} e^{(\theta_t)_i}) \mathbf 1\{ j \in S\}$ 
where $S=\{i\mid x_i=1\}$ be the support set of $x$, and $\theta_t \in \mathbb R^m$ are the parameters of the MNL model that can be learnt from a learning procedure~\citep{anderson1992discrete}. The expected revenue for this assortment is thus $f_t(x_t)=\sum_{j=1}^m (r_t)_j \cdot (b_t(x_t))_j = r_t^\top b_t(x_t)$, where $(r_t)_j$ is the revenue for the $j$-th product.} The offline problem is given by:
\begin{align*}
	\max_{y: y_t\in \Delta(\cX_t)} \  \sum_{t=1}^T \sum_{x \in \cX_t}  y_t(x) \cdot r_t^\top b_t(x) \ \ \     \text{s.t.}    \   \sum_{t=1}^T \sum_{x \in \cX_t} y_t(x) \cdot b_t(x) \le B\,,
\end{align*}
where the variable $y_t(x)$ quantifies the probability that assortment $x \in \cX_t$ is offered at time $t$. These variables satisfy $y_t(x) \ge 0$ and $\sum_{x \in \cX_t} y_t(x) = 1$. As in the proportional matching example, we write the constraint in terms of the expected resource consumption $b_t(x)$ instead of the realized choice of the consumer, which leads to a valid upper bound. In our algorithm, however, the resources are updated according to the true realized consumption (see Section \ref{sec:stochastic-input} for a discussion). 

Algorithm~\ref{al:sg} and its analysis can be directly applied to the personal assortment optimization  problem, and the primal update \eqref{eq:primal_decision} becomes
\begin{align}\label{eq:primal-update-assortment}
       x_t \in \arg\max_{x\in\cX_t} \left\{(r_t-\mu)^\top b_t(x) \right\} \ ,
   \end{align}
which results in an online gradient $g_t = - b_t(x_t) + \rho$ that can be used in the dual mirror descent update \ref{eq:dual_update}. We remark that the primal decision-making step~\eqref{eq:primal-update-assortment} might not be efficiently solved for arbitrary choice models when the number of products is large. Under the MNL choice model, however, the optimal assortment is revenue ordered and \eqref{eq:primal-update-assortment} can be solved  efficiently in polynomial time even under additional constraints on the displayed assortment~\citep{TalluriVanRyzin2004}.

We conclude by comparing our results on assortment optimization with the existing literature. Under stochastic i.i.d.~inputs, the algorithm proposed in~\cite{golrezaei2014real} yields a $\frac{4}{3}$-competitive ratio, while our algorithms attains a $(1+\varepsilon)$-competitive ratio (since the regret is vanishing). \cite{golrezaei2014real} also propose a different algorithm that can obtain $O(T^{1/2})$ regret following classic results on online allocation problems. These algorithms, however, require solving (at least one) large linear programmings with $2^m$ variables or constraints, which becomes impractical with state-of-the-art solvers when $m\ge 30$. In contrast, our proposed algorithms are more practical and efficient as they do not require solving auxiliary linear programs. When inputs are adversarial, the competitive ratio of \cite{golrezaei2014real} is sharper than ours. {This follows because our analysis does not assume any structure between reward and consumption, while they assume that the revenue vector $r$ is the same for all requests and their analysis takes advantage of the fact that the reward of an assortment is linear in the consumption (namely $f_t(x_t)= r^\top b_t(x_t)$).} As in the AdWords problem, we conjecture that no algorithm that attains vanishing regret under stochastic input can attain fixed competitive ratios under adversarial input. We are not aware of any results under the three non-stationary input models presented in Section~\ref{sec:non-stationary}.

\nvspace

{
\section{Extensions and Numerical Experiments}\label{sec:extension}
\nnvspace
\subsection{Beyond Mirror Descent}\label{sec:beyond-md}

For simplicity, we state our algorithm in terms of online mirror descent. While mirror descent is a general algorithmic framework that allows to recover many other popular algorithms used in practice, most of our results extend to other popular algorithms for online linear optimization such as regularized follow-the-leader~\citep{shalev2007primal}, Adagrad~\citep{duchi2011adaptive} or Adam~\citep{kingma2015adam}. As a result, Algorithm~\ref{al:sg} can be interpreted as a meta-algorithm that can use an online optimization algorithm as a black box to solve online allocation problems.

An algorithm for online linear optimization takes in each step an action $\mu_t \in \RR_+^m$ and incurs a linear cost $g_t^\top \mu_t$. We remark that the gradients $g_t \in \RR^m$ are observed \emph{after} taking the action and can be adversarially chosen. Then, for any $\mu \in \RR^+_m$, we denote the regret of the online algorithm by
\[
    E(G,T,\mu) = \sup_{g_t : \|g_t\|_\infty \le G} \left\{ \sum_{t=1}^T g_t^\top (\mu_t - \mu) \right\}\,.
\]
The regret $E(G,T,\mu)$ measures the worst-case performance over all possible gradients $g_t$ with norm bounded by $\|g_t\|_\infty \le G$ against a fixed static action $\mu$. We assume, without loss, that the regret it non-decreasing in $T$. In the case of stochastic input, we can prove the following result:

	\begin{cor} Consider a variant of Algorithm~\ref{al:sg} that uses an online linear optimization algorithm with regret guarantee $E(G,T,\mu)$ to update the dual variables $\mu_t$ with sub-gradients $g_t = -b_t(\tx_t) + \rho$. Suppose Assumptions~\ref{ass:p} holds and the requests come from an i.i.d.~model with unknown distribution. Then, it holds for any $T\ge 1$ that
		\begin{align*}
		\Regret{A}\le C_1 + \max \left\{ E(G,T,\mu) : \mu \in \{0, (\bar f/\rho_1) e_1,\ldots, (\bar f/\rho_m) e_m\}  \right\}\,,
		\end{align*}
		where $G = \ubb + \ubrho$ and $C_1$ is defined in Theorem~\ref{thm:master}.
	\end{cor}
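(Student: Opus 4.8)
The plan is to retrace the three-step structure of the proof of Theorem~\ref{thm:master}, but replacing every invocation of the mirror descent regret bound by the abstract regret guarantee $E(G,T,\mu)$. The key observation is that Steps 1 and 3 of the original proof never used the specific form of the dual update --- they only used (i) the primal optimality identity $f_t(x_t) = f_t^*(\mu_t) + \mu_t^\top b_t(x_t)$ valid for $t \le \tau_A$, (ii) the martingale/Optional Stopping argument to pass to conditional expectations, (iii) weak duality (Proposition~\ref{prop:upper_bound}) and the crude bound $\OPT(\vgamma) \le T\ubf$, and (iv) the definition of the stopping time $\tau_A$. None of these steps are sensitive to how $\mu_{t+1}$ is produced from $\mu_t$ and $g_t$. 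So I would reproduce \eqref{eq:f_and_r} and \eqref{eq:bound-opt} verbatim, obtaining for any $\mu \in \RR_+^m$ the bound
\[
\Regret{A|\cP} \le \EE_\cP\left[ (T-\tau_A)\ubf + \sum_{t=1}^{\tau_A} w_t(\mu_t) - w_t(\mu) + \sum_{t=1}^{\tau_A} w_t(\mu)\right] + \text{(duality terms that cancel)},
\]
where $w_t(\mu) = \mu^\top(\rho - b_t(x_t))$ has gradient $\rho - b_t(x_t)$ with $\|\rho - b_t(x_t)\|_\infty \le \ubb + \ubrho = G$.

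The one place a change is needed is Step 2. In the original proof, $\sum_{t=1}^{\tau_A} w_t(\mu_t) - w_t(\mu) \le E(\tau_A, \mu) \le E(T,\mu)$ came from the mirror descent regret proposition. Here instead I would argue: the variant algorithm is feeding the online linear optimization subroutine the cost vectors $g_t = \rho - b_t(\tilde x_t)$, which for $t \le \tau_A$ coincide with $\nabla w_t(\mu_t) = \rho - b_t(x_t)$ since $x_t = \tilde x_t$ before the stopping time; these satisfy $\|g_t\|_\infty \le G$. By definition of $E(G,T,\mu)$ and convexity (linearity) of $w_t$, $\sum_{t=1}^{\tau_A} \big(w_t(\mu_t) - w_t(\mu)\big) = \sum_{t=1}^{\tau_A} g_t^\top(\mu_t - \mu) \le \sum_{t=1}^{T} g_t^\top(\mu_t - \mu) \le E(G,T,\mu)$, where extending the sum from $\tau_A$ to $T$ is justified because $E(G,T,\mu)$ is, by assumption, a bound on the regret against \emph{any} adversarial gradient sequence of length $T$ with norm $\le G$ --- in particular one that agrees with our gradients up to $\tau_A$ and is arbitrary (say zero) afterwards --- and we have assumed $E$ is non-decreasing in $T$. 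A small technical point I would be careful about: $\tau_A$ is a random stopping time, so this inequality holds pathwise for each realization and then in expectation.

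Finally, I would close exactly as in Step 3: choose the pivot $\mu$ depending on whether $\tau_A = T$ (take $\mu = 0$, so $\sum_t w_t(\mu) = 0$ and the $(T-\tau_A)\ubf$ term vanishes) or $\tau_A < T$ (take $\mu = (\ubf/\rho_j)e_j$ for the depleted resource $j$, so that $\sum_{t=1}^{\tau_A} w_t(\mu) \le \ubf\ubbinfty/\rho_j - \ubf(T-\tau_A) \le \ubf\ubbinfty/\lbrho - \ubf(T - \tau_A)$, again killing the $(T-\tau_A)\ubf$ term and leaving $C_1 = \ubf\ubbinfty/\lbrho$). Combining the two cases and taking the max of $E(G,T,\mu)$ over the finite pivot set $\{0, (\bar f/\rho_1)e_1, \ldots, (\bar f/\rho_m)e_m\}$ gives the stated bound. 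The main (minor) obstacle is the bookkeeping around the random stopping time $\tau_A$ and the monotonicity-in-$T$ extension of the regret bound from a random horizon $\tau_A$ to the deterministic horizon $T$; everything else is a direct transcription of the proof of Theorem~\ref{thm:master} with $E(G,T,\mu)$ in place of $E(T,\mu)$.
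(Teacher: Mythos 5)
Your proposal is correct and is essentially the intended argument: the paper treats this corollary as a direct transcription of the proof of Theorem~\ref{thm:master}, with Steps 1 and 3 unchanged (they never use the form of the dual update) and the mirror descent bound in Step 2 replaced by the abstract guarantee $E(G,T,\mu)$ applied to the linear losses $g_t^\top\mu$ with $\|g_t\|_\infty \le \ubb + \ubrho = G$. Your handling of the random stopping time---padding the realized gradient sequence with zeros after $\tau_A$ (or invoking monotonicity of $E$ in $T$) and then taking expectations pathwise---is exactly the right bookkeeping, so no gap remains.
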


When the regret of the underlying online linear optimization algorithm is $E(G,T,\mu) = O(T^{1/2})$ the previous result implies similar regret guarantees for Algorithm~\ref{al:sg}. Our results for the case of adversarial input and adversarial corruptions can be similarly extended. For the case of ergodic and periodic input, we require that the dual variables are stable in the sense that they do not change much from one step to the next. This is used to bound the extra term $\sum_{t=1}^{\tA-k} \| \mu_{t+k} - \mu_t \dualnorm$ that appears in the regret bound. Stability of the dual variables holds for online mirror descent as discussed Proposition~\ref{prop:diff-iterate} in the appendix and can be shown to hold for other algorithms too.

\nnvspace
\subsection{Unknown Length of Horizon}\label{sec:random-time}

Our model assumes that the number of requests $T$ in the horizon is known in advance. In general, when the number of requests is adversarially chosen, it is not possible to attain vanishing regret even when the input is stochastic. Our algorithm, however, can incorporate unknown stochastic horizons. In this case, we would run our algorithm by setting the target resource vector to be $\rho = B / \EE T$ in the computation of the gradients $g_t = -b_t(\tx_t) + \rho$, i.e., using the expected number of time periods in the target. When $T$ is a stopping time for the request sequence $\vgamma = (\gamma_t)_{t\ge 1}$, we can use a similar regret analysis as the one in Theorem~\ref{thm:master} and obtain the same regret bound with the exception of an extra term $\bar f \cdot \EE[ \max(0, \EE T - T)]$ in the right-hand side of \eqref{eq:master}. This term can be upper bounded by $\bar f \cdot \text{Var}(T)^{1/2}$, which yields sublinear regret in the expected number of time periods $\EE T$, for example, when requests arrive according to a Poisson process.

\nnvspace

\subsection{Stochastic Reward and Resource Consumption}\label{sec:stochastic-input}

Sometimes, the reward and consumption are random and realized after the decision maker chooses an $x_t$. This is the case in proportional matching (Section~\ref{sec:matching}) and online assortment (Section~\ref{sec:assortment}). Our algorithm and analysis extend to settings in which the reward and resource consumption are stochastic given an action by making decisions based on the \emph{expected} reward and consumption. More formally, let $\vec \zeta$ denote the random variable determining the realization of the above process. Then, our theorems still hold after taking the expectation over $\vec \zeta$. We state and prove the result for the case of stochastic i.i.d.~input; results for other input models follow mutatis mutandis. We remark that, in this case, $\OPT(\vgamma)$ needs to be redefined in terms of achievable expected rewards and consumption pairs to provide a tight upper bound (see Appendix~\ref{sec:stochastic-functions} for details).

\begin{prop}\label{prop:matching}
	Consider Algorithm \ref{al:sg} with stochastic resource consumption, step-size $\eta \ge 0$ and initial solution $\mu_1\in \RR_+^m$. Suppose Assumptions~\ref{ass:p}-\ref{ass:h} are satisfied and the requests come from an i.i.d.~model with unknown distribution. Then, it holds for any $T\ge 1$ that
		\begin{align*}
		\begin{split}
		\sup_{\cP \in \Delta(\cS)}  \left\{ \EE_{\vgamma \sim \cP^T, \vec \zeta} \left[ \OPT(\vgamma) - R(A|\vgamma, \vec \zeta) \right] \right\}
		\le C_1 + C_2 \eta T + \frac {C_3}{\eta} \,,
		\end{split}
		\end{align*}
		where the constants $C_1, C_2, C_3$ are defined in Theorem~\ref{thm:master}.
\end{prop}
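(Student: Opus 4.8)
The plan is to replay the three-step argument of Theorem~\ref{thm:master} after replacing the now-random reward and consumption functions by their conditional expectations given the action. Set $\bar f_t(x) = \EE_{\vec\zeta}[f_t(x)]$ and $\bar b_t(x) = \EE_{\vec\zeta}[b_t(x)]$; these satisfy Assumption~\ref{ass:p} with the same constants, and the primal decision $\tx_t$ of Algorithm~\ref{al:sg} maximizes the \emph{expected} opportunity-cost-adjusted reward $\bar f_t(x) - \mu_t^\top \bar b_t(x)$ over $\cX_t$, so (taking the dual sub-gradient to be $\rho - \bar b_t(\tx_t)$, as in the applications of Sections~\ref{sec:matching}--\ref{sec:assortment}) the dual iterates $\{\mu_t\}$ are exactly the online-mirror-descent trajectory run on the deterministic sequence $(\bar f_t, \bar b_t, \cX_t)_t$. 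I will also use that, once $\OPT(\vgamma)$ is redefined as the value of the randomized offline problem whose resource constraint is imposed on \emph{expected} consumption (see Appendix~\ref{sec:stochastic-functions}), the weak-duality bound of Proposition~\ref{prop:upper_bound} still holds in the form $\OPT(\vgamma) \le \sum_{t=1}^T \bar f_t^*(\mu) + T\rho^\top\mu$ for every $\mu \in \RR_+^m$, since maximizing a linear functional over the convex hull of the achievable $(\text{reward},\text{consumption})$ pairs reduces to maximizing $\bar f_t(x) - \mu^\top \bar b_t(x)$ over $x\in\cX_t$.

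With these reductions in place, the three steps carry over almost verbatim on the enlarged filtration $\sigma(\xi_t)$ generated by $\xi_t = (\gamma_1,\vec\zeta_1,\dots,\gamma_t,\vec\zeta_t)$, for which $\mu_t$ is predictable. Define the stopping time $\tA$ and the complementary-slackness functions $w_t(\mu) = \mu^\top(\rho - \bar b_t(\tx_t))$ as in Theorem~\ref{thm:master} but in terms of $\bar b_t$. For Step~1, optimality of $\tx_t$ for the expected Lagrangian gives $\bar f_t(\tx_t) = \bar f_t^*(\mu_t) + \mu_t^\top \bar b_t(\tx_t)$, hence $\EE[\bar f_t(\tx_t)\mid\sigma(\xi_{t-1})] = \bar D(\mu_t\mid\cP) - \EE[w_t(\mu_t)\mid\sigma(\xi_{t-1})]$ on $\{t\le\tA\}$; the one genuinely new ingredient is the identity $\EE[\sum_{t=1}^{\tA} f_t(\tx_t)] = \EE[\sum_{t=1}^{\tA}\bar f_t(\tx_t)]$, which holds because $\sum_{s=1}^{t\wedge\tA}\big(f_s(\tx_s) - \bar f_s(\tx_s)\big)$ is a bounded martingale and $\tA\le T$, so the Optional Stopping Theorem (already the workhorse of Theorem~\ref{thm:master}) applies. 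Combining with the optional-stopping identity for the complementary-slackness martingale and convexity of $\bar D(\cdot\mid\cP)$ yields $\EE[\sum_{t=1}^{\tA} f_t(\tx_t)] \ge \EE[\tA\,\bar D(\bmu_{\tA}\mid\cP)] - \EE[\sum_{t=1}^{\tA} w_t(\mu_t)]$. Step~2 is unchanged: $\{\mu_t\}$ is the mirror-descent trajectory for the linear losses $w_t$, whose sub-gradients $\rho - \bar b_t(\tx_t)$ are bounded by $\ubb+\ubrho$ in the $\ell_\infty$ norm, so Proposition~\ref{prop:omd} gives $\sum_{t=1}^{\tA}\big(w_t(\mu_t) - w_t(\mu)\big) \le E(T,\mu)$. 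Step~3 is also unchanged: $\EE[\OPT(\vgamma)] \le \EE[\tA\,\bar D(\bmu_{\tA}\mid\cP) + (T-\tA)\ubf]$ by the redefined weak duality and $\OPT(\vgamma)\le T\ubf$, and choosing the pivot $\mu=0$ when $\tA=T$ and $\mu=(\ubf/\rho_j)e_j$ for the depleted resource $j$ otherwise telescopes against the \emph{same} (expected) consumption that defines $\tA$, recovering $C_1 + C_2\eta T + C_3/\eta$ with the constants of Theorem~\ref{thm:master}.

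I expect the main obstacle to be the bookkeeping between realized and expected consumption. Resource depletion --- and thus $\tA$, the feasibility of the algorithm, and the set of periods on which $\tx_t$ rather than the void action is taken --- is naturally governed by \emph{realized} consumption, whereas the dual sub-gradients, the functions $w_t$, and the redefined benchmark constraint all live on \emph{expected} consumption. The plan is to reconcile this by (i) redefining $\OPT(\vgamma)$ over expected achievable pairs, so the comparison's feasibility is itself stated in expectation and aligned with the algorithm's update, and (ii) keeping all three ``consumption'' objects expressed through the same quantity, so that the only residual discrepancies are bounded, mean-zero martingales that the Optional Stopping Theorem annihilates; an inconsistent choice --- e.g.\ a realized-consumption threshold in $\tA$ paired with an expected-consumption $w_t$ and the random pivot $e_j$ --- would leave an uncontrolled term of order $T^{1/2}$ that is not absorbed by $C_1,C_2,C_3$. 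Everything else (the convexity step, the pivot selection, and the online-mirror-descent regret bound) transfers without change, which is why the constants match those of Theorem~\ref{thm:master}.
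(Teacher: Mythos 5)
Your high-level architecture (expected reward/consumption functions, the redefined benchmark over the convex hull of achievable expected pairs with the corresponding weak-duality bound, OMD run on the expected-consumption gradients, and the pivot argument) matches the paper, but there is a genuine gap in your bookkeeping choice, and it is the opposite of the one the paper makes. You define the stopping time $\tA$ in terms of the \emph{expected} consumption $\bar b_t(\tx_t)$. However, both feasibility and the reward the algorithm actually collects are governed by \emph{realized} consumption: the remaining budget is updated with $b_t(x_t,\zeta_t)$, and the algorithm is forced into the void action once the realized budget is (nearly) exhausted. Realized cumulative consumption deviates from expected cumulative consumption by order $\sqrt{t}$, so with constant probability some resource is depleted $\Theta(\sqrt{T})$ periods \emph{before} your expected-consumption $\tA$; on those periods $x_t=0\neq\tx_t$, so your Step-1 chain, which credits the algorithm with $f_t(\tx_t,\zeta_t)$ for every $t\le\tA$, no longer lower bounds $R(A|\vgamma,\vec\zeta)$. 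That discrepancy is a systematic nonnegative loss of order $\ubf\sqrt{T}$ with variance-dependent constants, not a bounded mean-zero martingale that optional stopping annihilates, so the bound with exactly the constants $C_1,C_2,C_3$ of Theorem~\ref{thm:master} does not follow from your accounting.

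The paper instead defines $\tA$ as the first time $\sum_{t=1}^{\tA}(b_t(x_t,\zeta_t))_j+\ubb\ge \rho_j T$ for some $j$ (realized consumption), which is what guarantees both that no constraint is violated and that $x_t=\tx_t$ up to $\tA$; the complementary-slackness functions $w_t(\mu)=\mu^\top(\rho-b_t(x_t))$ necessarily stay in expected consumption because those are the OMD gradients. The mismatch between the two is reconciled \emph{exactly} in expectation by the optional-stopping identity \eqref{eq:martingal}: $M_t=\sum_{s\le t}\bigl(b_s(x_s,\zeta_s)-b_s(x_s)\bigr)$ is a bounded martingale with respect to $\mathcal F_t=\sigma(\gamma_{1:t+1},\zeta_{1:t})$, the realized-consumption $\tA$ is a bounded stopping time for this filtration, and hence $\EE[M_{\tA}]=0$, which lets the pivot telescoping keyed to the realized depletion event go through. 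So your claim that the realized-$\tA$/expected-$w_t$ pairing ``would leave an uncontrolled term of order $T^{1/2}$'' is mistaken---that pairing is precisely the paper's proof. To repair your write-up, keep your reward martingale in Step 1 (the paper likewise takes expectations over $\vec\zeta$), but define $\tA$ via realized consumption and insert the identity \eqref{eq:martingal} before selecting the pivot.
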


\nnvspace

\subsection{Approximately Solving the Sub-Problem \eqref{eq:primal_decision}}\label{sec:approximation}

In practice, the observed reward and consumption functions often come from machine learning models, which can be noisy and inexact. When using such noisy inputs in  sub-problem~\eqref{eq:primal_decision}, the obtained solution can be viewed as an approximated solution to the underlying true statistical model. Furthermore, solving the sub-problem \eqref{eq:primal_decision} exactly can sometimes be expensive, in particular when the sub-problem is non-convex, and an approximation algorithm is used. Interestingly, our algorithms and analysis are robust to inexact solutions to the primal sub-problem. 

Suppose the sub-problem is solved with additive error $\epsilon_t$, i.e., the reward collected at time $t$ verifies $f_t(x_t) - \mu_{t}^{\top} b_t(x_t) \ge f_t^*(\mu_t) -\epsilon_t$. In the analysis under stochastic i.i.d.~input (other settings follow through a similar argument), the only place we use \eqref{eq:primal_decision} is to show that the reward collected by the algorithm at time $t$ satisfies $f_t(x_t) - \mu_t^\top b_t(x_t) = f_t^*(x_t)$ (used in \eqref{eq:primal-approximation}). All other steps in the analysis of Theorem~\ref{thm:master} follow, and the errors $\epsilon_t$ would appear additively in the right-hand side of the regret bound \eqref{eq:master}, i.e., $\Regret{A}\le O(T^{1/2})+\sum_{t=1}^T \epsilon_t$ with properly chosen step-size.  If the cumulative errors are small enough, i.e., $\sum_{t=1}^T \epsilon_t = O({T^{{1/2}}})$, then we can still obtain $\Regret{A} = O(T^{1/2})$ in Theorem \ref{thm:master}. Obtaining additive errors in regret bounds is the best one can hope for in online algorithms when inexact solutions are available.

{Similarly, suppose the sub-problem is solved with multiplicative error $\alpha$ (this often happens when an approximation algorithm is used to solve the sub-problem), i.e., $\alpha(f_t(x_t) - \mu_{t}^{\top} b_t(x_t)) \ge f_t^*(\mu_t)$. In the analysis under stochastic i.i.d.~input (again, other settings follow through a similar argument),
we can simply replace \eqref{eq:primal-approximation} with $\alpha f_t(x_t) \ge f_t^*(\mu_t)+\alpha \mu_{t}^{\top} b_t(x_t)$, and the other steps in the analysis of Theorem~\ref{thm:master} follow with the value of $\alpha$ carried over. This shows that Algorithm \ref{al:sg} with proper step-size is asymptotic $\alpha$-competitive, i.e.,  $\EE_{\vgamma\sim \cP^T}[\OPT(\vgamma)-\alpha R(A|\vgamma)]\le O(\sqrt{T})$. 


}

\nnvspace
\subsection{Numerical Experiments}

In Appendix~\ref{sec:numerical}, we present numerical experiments for our algorithms on online linear programming with stochastic i.i.d.~inputs  (Appendix \ref{sec:num-olp}), and on proportional matching with ergodic inputs (Appendix \ref{sec:num-pm}). The experiments on online linear programming (Appendix \ref{sec:num-olp}) verify the theoretical dependence of regret for Algorithm~\ref{al:sg} over the time horizon $T$, resource dimension $m$ and primal decision dimension $d$. 
They show that online gradient descent (with dual update \eqref{eq:ogd}) and multiplicative weights update (with dual update \eqref{eq:mwu}) have $\tilde{O}(\sqrt{mT})$ regret, while multiplicative weights update with projection (Appendix \ref{sec:log-dependence}) has $\tilde{O}(\sqrt{T})$ regret. Furthermore the regret of all three algorithms are independent from the primal decision dimension $d$. These findings are consistent with Theorem~\ref{thm:master} and the discussions in Appendices~\ref{sec:reference-functions}~and~\ref{sec:log-dependence}. The experiments on proportional matching verify the $\tilde{O}(\sqrt{T})$ regret bound of Algorithm \ref{al:sg} under ergodic inputs, which is consistent with Theorem~\ref{thm:ergodic}. 


}

\nvspace

\section{Conclusion and Future Directions}

In this paper, we present a class of simple and robust algorithms for online allocation problems {with non-linear reward functions, non-linear consumption functions, and potentially integral decision variables}. We show that our algorithms attain vanishing regret under stochastic i.i.d.~and non-stationary inputs, and fixed competitive ratios under adversarial inputs. The performance of our algorithms, moreover, is shown to be optimal across various input models. Our algorithms are oblivious to the input model in the sense that they obtain good performance without knowing the type of input they are facing. {We discuss applications to online linear programming, bidding in repeated auctions with budgets, online matching with high entropy, and personalized assortment optimization with limited inventories. Our algorithms, in many cases, give new results or match/improve the performance of existing algorithms in the literature.}

An interesting future research direction is to explore whether better bounds can be obtained under more restrictive assumptions on the inputs. For example, when the input is adversarial, it is worth studying whether better competitive ratios can be obtained when the reward and resource consumption consumption are related to each other. Alternatively, when the input is stochastic, it is interesting to determine whether better regret bounds can be obtained when the expected dual function is better behaved.

\section*{Acknowledgement} The authors would like to thank Balasubramanian Sivan, Rad Niazadeh, and Shipra Agrawal for useful feedback provided. The authors thank the review team for thoughtful comments that strengthen the paper.

\setstretch{1}
\bibliographystyle{plainnat}
{\small{\bibliography{references,Lu-papers}}}

\newpage
\appendix

\section{Proofs in Section \ref{sec:formulation}}

\subsection{Proof of Proposition \ref{prop:upper_bound}}
\begin{proof}
It holds for any $\mu\in\cD$ that
\begin{align*}
\begin{split}
 \OPT(\vgamma)
= &  \mpran{
	\begin{array}{cl}
	\max_{x_t\in \cX} & \sum_{t=1}^T f_t(x_t)    \\
	\text{s.t.}     & \sum_{t=1}^T b_t(x_t) \le T \rho
	\end{array}}\\
\le &  \max_{x_t\in \cX} \left\{\sum_{t=1}^T f_t(x_t)+ T\mu^{\top} \rho - \mu^{\top} \sum_{t=1}^T  b_t(x_t) \right\} \\
= &  \sum_{t=1}^T f_t^*(\mu)  +T \rho^\top \mu \\
= & D(\mu | \vgamma ) \ ,
\end{split}
\end{align*}
where the first inequality is because we relax the constraint $\sum_{t=1}^T b_t(x_t) \le T \rho$ and $\mu \ge 0$, and the last equality utilizes the definition of $f^*$.
\end{proof}



\section{Regret Bounds for Example Reference Functions}\label{sec:reference-functions}

We now discuss the regret bounds for the sample reference functions presented in the paper and their dependence on the number of resources.

\subsection{Online Gradient Descent}

Recall that for online gradient descent algorithm the reference function is $h(\mu) = \frac 1 2 \|\mu\|_2^2$. First, note that the constant $C_1$ is in general independent of the number of resources. Second, for the constant $C_2$ we use that $h(\mu)$ is $(1/m)$-strongly convex over $\RR_+^m$ with respect to the $\|\cdot\dualnorm$ to obtain that $C_2 = m (\ubb + \ubrho)^2 / 2$. Third, for the constant $C_3$ we obtain, by choosing the initial point to be $\mu_1 = 0$, that
\[
    V_h(\mu,\mu_1) = h(\mu) - h(\mu_1) - \nabla h(\mu_1)^\top (\mu - \mu_1) = h(\mu) = \frac 1 2 \|\mu\|_2^2\,,
\]
because $h(0) = 0$ and $\nabla h(0) = 0$. Therefore, 
\[
    C_3 = \max \left\{  V_h(\mu, \mu_1) : \mu \in \{0, (\bar f/\rho_1) e_1,\ldots, (\bar f/\rho_m) e_m\}  \right\} = \ubf^2 / (2 \lbrho^2),
\]
and, as result, the constant is independent of the number of resources. Putting everything together, we obtain that the regret bound is
\[
\Regret{A} \le \frac{\ubf \ubb}{\lbrho} + \frac {\ubf (\ubb + \ubrho)} {\lbrho} \sqrt{m T}\,.
\]
We remark that in practice it is better to choose the reference function to be a squared weighted-$\ell_2$-norm. Weighting the norm yields better condition number as it allows us correct for different scales in the right-hand side of the resource constraints. In particular, one practically appealing choice is the reference function to
    \[
    h(\mu) = \frac 1 2  \sum_{j=1}^{m} (\rho_j \mu_j)^2\,.
    \]
This is equivalent to normalizing the resource consumption function according to $b_j(x):=b_j(x) / \rho_j$.
    
\subsection{Multiplicative Weights Algorithm}\label{sec:mwu-p}

For the multiplicative weights update algorithm the reference function is $h(\mu) = \sum_{j=1}^m \mu_j \log(\mu_j)$. First, note that the constant $C_1$ is in general independent of the number of resources. Second, for the constant $C_2$, we do not have that $h(\mu)$ is strongly convex over $\RR_+^m$. However, using Proposition~\ref{thm:stopping_time} we can show that the dual variables produced by the algorithm always remain in the box $\mathcal D = [0,\mumax_1] \times \ldots \times [0,\mumax_m]$ with $\mumax_j = \bar f/\rho_j + 1$ because the reference function is separable over resources. (For this result to hold, we need the step-size to be sufficiently small, which is always true for large enough lengths of the horizon $T$.) Note that the univariate function $\mu_j \log(\mu_j)$ is $(\mumax_j)^{-1}$-strongly convex over $[0, \mumax_j]$ because its second derivative $1/\mu_j$ is monotonically decreasing. Therefore, if we restrict the dual variables to the box $\mathcal D$, we obtain that $h(\mu)$ is $(\max_j \mumax_j)^{-1}$-strongly convex with respect to the $\|\cdot\|_2$ norm and, as a result, $(m\max_j \mumax_j)^{-1}$-strongly convex with respect to the $\|\cdot\dualnorm$ norm. This implies that the second constant is given by
\[
    C_2 = \frac 1 2 m \max_j \mumax_j (\ubb + \ubrho)^2\,.
\]
Third, for the constant $C_3$, we can write the Bregman divergence as
\[
    V_h(\mu,\mu_1) = h(\mu) - h(\mu_1) - \nabla h(\mu_1)^\top (\mu - \mu_1) = \sum_{j=1}^m \mu_j \log\left( \frac {\mu_j} {(\mu_1)_j} \right) - \sum_{j=1}^m \mu_j + \sum_{j=1}^m (\mu_1)_j \,.
\]
By choosing the initial point to be $\mu_1 = e/m$, which lies in the box $\mathcal D$, we obtain that $V_h(0,\mu_1) = 1$ and $V_h((\bar f/\rho_j) e_j,\mu_1) = (\ubf / \rho_j) \cdot ( \log(\ubf/\rho_j) + \log(m) - 1)  + 1$. Therefore, $C_3 \le (\ubf / \ubrho) \cdot ( \log(\ubf/\ubrho) + \log(m) - 1)  + 1$. As a result, we obtain that the regret bound is given by
\[
\Regret{A} \le \frac{\ubf \ubb}{\lbrho} + (\ubb + \ubrho) \sqrt{2 \left(\frac{\ubf}{\lbrho} + 1\right) \cdot \left( \frac{\ubf}{\lbrho} \cdot \left( \log\left(\frac{\ubf}{\lbrho}\right) + \log(m) - 1\right)  + 1\right) m T}\,.
\]
When $\ubf / \ubrho \ge 1$, the bound simplifies to
\[
\Regret{A} \le \frac{\ubf \ubb}{\lbrho} + 2 \frac{ \ubf (\ubb + \ubrho)}{\ubrho} \sqrt{( \log(\ubf/\ubrho) + \log(m) ) m T}\,.
\]

\section{Optimal Dependence on the Number of Resources}\label{sec:log-dependence}

In this section, we discuss how to obtain regret bounds with an optimal dependence on the number of resources when a uniform upper bound $\ubf$ on the reward function is known. The analysis leverages an idea pioneered by \cite{AgrawalDevanue2015fast}. 

Recall that in the analysis of our algorithm, we choose the pivot to be $\mu \in \{0, (\bar f/\rho_1) e_1,\ldots, (\bar f/\rho_m) e_m\}$. The convex hull of these points is a version scaled of the unit simplex, i.e., 
\[
    \mathcal D = \left\{ \mu \in \RR^m_+ : \sum_{j=1}^{m} \mu_j / \rho_j \le \bar f \right\}\,.
\]
Therefore, without loss, we can constraint the dual variables generated by our algorithm to lie in the set $\mathcal D$. By constraining the dual variables to lie in $\mathcal D$ we can take advantage of the fact that the multiplicative weights update algorithm can attain a $\log^{1/2}(m)$ dependence on the number of resources over the simplex. Using this stronger regret guarantees for the multiplicative weights update algorithm, we can obtain an optimal dependence of $\log^{1/2}(m)$ on the number of resources.

Without loss of generality we assume that $\rho_j = 1$ for every resource $j$ (this can be achieved, for example, by normalizing the resource consumption function as follows $b_j(x) := b_j(x) / \rho_j$). We consider the negative entropy reference function $h(\mu) = \sum_{j=1}^{m} \mu_j \log(\mu_j)$. The algorithm is exactly as the one stated in the main body of this paper with the difference that we restrict the dual variables to lie in $\mathcal D$ in the mirror descent step. That is, we have now have
\begin{equation*}
   \mu_{t+1} = \arg\min_{\mu \in \mathcal D} \left\{ \tg_t^\top \mu + \frac{1}{\eta} V_h(\mu,\mu_t) \right\} \ .
\end{equation*}
Let $\tilde \mu_{t+1}$ be the dual update without the projection, i.e., $\tilde \mu_{t+1} = {\mu_t} * \exp(-\eta \tg_t)$. Then, the dual update is obtained by projecting back to the scaled unit simplex:
\[
    \mu_{t+1} =
    \begin{cases}
        \tilde \mu_{t+1} & \text{if } \mathbf 1^\top \tilde \mu_{t+1} \le \bar f\\
        \bar f \frac{\tilde \mu_{t+1}} {\mathbf 1^\top \tilde \mu_{t+1}} & \text{otherwise}.
    \end{cases}
\]

The same regret bound for our algorithm applies because the pivot lies in $\mathcal D$ and the regret guarantee from online mirror descent given in Proposition~\ref{prop:omd} holds for any convex set $\mathcal D$. Using the optimal choice for the step size, which is $\eta = \sqrt{C_2/ (C_3 T)}$, yields the regret bound
\[
\Regret{A}\le C_1 + 2 \sqrt{C_2 C_3 T}\,.
\]
where $C_1 = \ubf \ubb  / \lbrho$, $C_2 = (\ubb + \ubrho)^2/2\sigma$, $C_3= \max \left\{  V_h(\mu, \mu_1) : \mu \in \{0, (\bar f/\rho_1) e_1,\ldots, (\bar f/\rho_m) e_m\}  \right\}$. We now discuss the dependence of the constants $C_1$, $C_2$, and $C_3$ on the number of resources.

First, note that the constant $C_1$ is in general independent of the number of resources. Second, for the constant $C_2$ we have that $h(\mu)$ is $\ubf^{-1}$-strongly convex with respect to the $\|\cdot\|_1$ norm over $\mathcal D$ by the following lemma. This implies that $C_2$ is also independent of the number of resources.

\begin{lem} The negative entropy $h(\mu) = \sum_{j=1}^{m} \mu_j \log(\mu_j)$ is $1/C$-strongly convex with respect to the $\|\cdot\|_1$ norm over $\mu > 0$ with $\sum_{j=1}^m \mu_j \le C$.
\end{lem}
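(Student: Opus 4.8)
The plan is to prove the strong convexity bound by reducing to a one-dimensional estimate via the standard second-order characterization of strong convexity together with a Cauchy--Schwarz-type argument that handles the $\ell_1$ norm. Recall that $h$ is $1/C$-strongly convex with respect to $\|\cdot\|_1$ over the convex set $\{\mu > 0 : \sum_j \mu_j \le C\}$ if and only if $v^\top \nabla^2 h(\mu) v \ge \frac{1}{C} \|v\|_1^2$ for all $\mu$ in that set and all $v \in \RR^m$. Since $h(\mu) = \sum_j \mu_j \log \mu_j$, the Hessian is diagonal with $\nabla^2 h(\mu) = \mathrm{diag}(1/\mu_1,\ldots,1/\mu_m)$, so the left-hand side is simply $\sum_j v_j^2/\mu_j$.

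First I would fix $\mu$ with $\mu_j > 0$ and $\sum_j \mu_j \le C$, and an arbitrary $v \in \RR^m$, and aim to show $\sum_j v_j^2/\mu_j \ge \|v\|_1^2 / C$. The key step is the Cauchy--Schwarz inequality applied in the form
\begin{equation*}
\|v\|_1^2 = \left( \sum_{j=1}^m |v_j| \right)^2 = \left( \sum_{j=1}^m \frac{|v_j|}{\sqrt{\mu_j}} \cdot \sqrt{\mu_j} \right)^2 \le \left( \sum_{j=1}^m \frac{v_j^2}{\mu_j} \right) \left( \sum_{j=1}^m \mu_j \right)\,.
\end{equation*}
Since $\sum_j \mu_j \le C$, this rearranges to $\sum_j v_j^2/\mu_j \ge \|v\|_1^2 / C$, which is exactly the desired pointwise Hessian lower bound. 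This establishes $1/C$-strong convexity in the $\ell_1$ norm on the interior of the set, and I would then note that the inequality extends to the closure (allowing $\mu_j = 0$, interpreting the bound trivially, or simply taking the region where all coordinates are positive, which suffices for the application since the dual iterates stay strictly positive under multiplicative updates with a positive initialization).

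I don't expect a genuine obstacle here — the argument is a two-line Cauchy--Schwarz computation once one invokes the second-order characterization of strong convexity. The only point requiring a modicum of care is being explicit about the domain: strong convexity is a statement relative to the convex set $\mathcal D$, so one should either restrict attention to $\mu$ strictly positive (which is harmless for the mirror-descent analysis, where iterates remain in the relative interior) or phrase the Bregman-divergence inequality $V_h(\mu_1,\mu_2) \ge \frac{1}{2C}\|\mu_1-\mu_2\|_1^2$ directly and check it holds for all $\mu_1,\mu_2$ in the scaled simplex by the same Hessian bound integrated along the segment joining them. Either way the computation is routine and the constant $1/C$ is seen to be tight by testing $v = e_j - e_k$ at $\mu = (C/m)\mathbf 1$.
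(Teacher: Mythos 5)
Your proof is correct, but it takes a genuinely different route from the paper. You use the second-order characterization of strong convexity: since all points of the set $\{\mu>0:\sum_j\mu_j\le C\}$ are strictly positive, $h$ is twice differentiable along any segment joining two such points, the Hessian is $\mathrm{diag}(1/\mu_1,\ldots,1/\mu_m)$, and the bound $\sum_j v_j^2/\mu_j\ \ge\ \|v\|_1^2/\big(\sum_j\mu_j\big)\ \ge\ \|v\|_1^2/C$ follows from one application of Cauchy--Schwarz; integrating along segments gives the strong-convexity inequality on the whole (convex) set, so your handling of the domain is fine. The paper instead works at first order: it invokes Proposition 3.1 of Beck and Teboulle to reduce strong convexity to the gradient-monotonicity inequality $(\nabla h(x)-\nabla h(y))^\top(x-y)\ge C^{-1}\|x-y\|_1^2$, then uses the pointwise estimate $(x_j-y_j)\log(x_j/y_j)\ge 2(x_j-y_j)^2/(x_j+y_j)$ (borrowed from the proof of their Proposition 5.1) followed by Jensen's inequality with weights $(x_j+y_j)/\sum_k(x_k+y_k)$. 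Your argument is the more elementary and self-contained of the two (a diagonal Hessian plus Cauchy--Schwarz, no external lemmas beyond the standard second-order criterion), while the paper's first-order argument avoids any differentiability-along-segments discussion and applies verbatim in settings where one prefers to work only with gradients. One small quibble with an aside: your tightness check with $v=e_j-e_k$ at $\mu=(C/m)\mathbf 1$ gives ratio $m/(2C)$ and so only certifies tightness of the constant for $m=2$; tightness in general is seen instead by taking $v=e_j$ and $\mu$ with nearly all mass $C$ on coordinate $j$. This does not affect the validity of the lemma or of your proof.
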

\begin{proof}
Let $\mathcal D = \{ x \in \RR^m : x > 0 \text{ and } \sum_{j=1}^{m} x_j \le C\}$. By Proposition 3.1 of \cite{beck2003mirror}, it is enough to show that $\left( \nabla h(x) - \nabla h(y)\right)^\top (x - y) \ge C^{-1} \| x - y \|_1^2$ for every $x,y \in \mathcal D$. We have that
\begin{align*}
    \left( \nabla h(x) - \nabla h(y)\right)^\top (x - y) &= \sum_{j=1}^m (x_j - y_j) \log\left( \frac{x_j}{y_j}\right)\\
    &\ge \sum_{j=1}^{m} 2 \frac{(x_j - y_j)^2}{x_j + y_j}\\
    &= 2 \left(\sum_{j=1}^m (x_j + y_j)\right) \sum_{j=1}^{m} \frac{x_j + y_j} { \sum_{j=1}^m \left(x_j + y_j\right)} \left( \frac{x_j - y_j}{x_j + y_j}\right)^2\\
    &\ge2 \left(\sum_{j=1}^m (x_j + y_j)\right) \left( \sum_{j=1}^{m} \frac{x_j + y_j} { \sum_{j=1}^m \left(x_j + y_j\right)} \frac{|x_j - y_j|}{x_j + y_j} \right)^2\\
    &=2 \left(\sum_{j=1}^m (x_j + y_j)\right)^{-1} \left( \sum_{j=1}^{m} |x_j - y_j| \right)^2
    \ge C^{-1} \|x - y\|_1^2\,,
\end{align*}
where the first equality follows from the definition of the reference function $h(x) = \sum_{j=1}^m x_j \log(x_j)$, the first inequality follows from the first step of the proof of Proposition 5.1 in \cite{beck2003mirror},  the second inequality from Jensen's inequality with $p_j = (x_j + y_j)/ \sum_{j=1}^m (x_j + y_j)$ because the quadratic function is convex and $\sum_{j=1}^m p_j = 1$, and the last inequality follows from the definition of the $\ell_1$ norm and because $x,y \in \mathcal D$.
\end{proof}

Finally, for the constant $C_3$ recall that the Bregman divergence is given by
\[
    V_h(x,y) = h(x) - h(y) - \nabla h(y)^\top (x-y) = \sum_{j=1}^m x_j \log\left( \frac {x_j} {y_j} \right) - \sum_{j=1}^m x_j + \sum_{j=1}^m y_j\,.
\]
Choosing the initial point to be $\mu_1 = \bar f e/m \in \mathcal D$ we obtain that $V_h(\bar f e_i, \bar f e / m) = \bar f \log(m)$ and $V_h(0, \bar f e / m) = \bar f$. Therefore, we obtain that
\[
    C_3 = \bar f \max(1, \log(m))\,.
\]
Putting everything together, the final regret bound (assuming that $m>2$) is
\[
    \Regret{A}\le C_1 + 2 (\ubb + \ubrho) \bar f \sqrt{\log(m) T}\,.
\]

\section{Proofs in Section \ref{sec:adversarial}}

\subsection{Proof of Theorem \ref{thm:adversial-no-regularizer}}

We here discuss how the steps of the proof of Theorem~\ref{thm:master} need to be adapted to account for the adversarial requests. 

\textbf{Step 1 (Primal performance).} Fix a vector of requests $\vgamma \in \cS^T$ and let $x_t^* \in \cX_t$ be an optimal action of the $\OPT(\vgamma)$ at time $t$. Notice that $x_t\in \arg\max_{x\in \cX} \{f_t(x)-\mu_t^\top b_t(x)\}$, thus we have $f_t(x_t) \ge f_t(x_t^*) - \mu_{t}^{\top} \pran{ b_{t} (x_t^*) - b_{t} (x_t)}$ and $0 = f_t(0) \le f_t(x_t) - \mu_t^\top b_t(x_t)$, whereby
\begin{align}\label{eq:alpha-time}
\begin{split}
    \alpha f_t(x_t) 
     &= f_t(x_t) + (\alpha-1)(f_t(x_t)) \\
    & \ge  f_t(x_t^*) + \mu_t^{\top} b_t(x_t) -\mu_t^{\top} b_t( x_t^*) + (\alpha-1)(\mu_t^\top b_t(x_t)) \\
     &= f_t(x_t^*)  - \alpha \mu_t^{\top} ( \rho-b_t(x_t)) + \alpha \mu_t^{\top} \rho - \mu_t^{\top} b_t(x_t^*)\\
     &\ge f_t(x_t^*)  - \alpha \mu_t^{\top} ( \rho-b_t(x_t)) \ .
\end{split}
\end{align}
where the second inequality is because $\alpha \mu_t^{\top} \rho - \mu_t^{\top} b_t(x_t^*)\ge 0$ due to the definition of $\alpha$ and the fact that $\mu_t\ge 0$. Summing up \eqref{eq:alpha-time} over $t=1,\ldots,\tA$, we arrive at
\begin{align}\label{eq:target}
    \begin{split}
        \alpha \sum_{t=1}^{\tau_A} f_t(x_t)
        &\ge \sum_{t=1}^{\tau_A} f_t(x_t^*)  - \alpha \sum_{t=1}^{\tau_A}  \mu_t^{\top} ( \rho-b_t(x_t)) \ .
    \end{split}
\end{align}

\paragraph{Step 2 (Complementary slackness).} Denoting, as before, $w_t(\mu) = \mu^\top(\rho - b_t(x_t))$, this step applies directly because the analysis is deterministic in nature:
\begin{align}\label{eq:bound-cs-2}
\begin{split}
\sum_{t=1}^{\tA} w_t(\mu_t)
\le \sum_{t=1}^{\tA} w_t(\mu) + E(T,\mu)\,,
\end{split}
\end{align}
where $E(t,\mu)$ is the regret of the online algorithm as specified in~\eqref{eq:regret_omd}.


\paragraph{Step 3 (Putting it all together).}


Choosing $\mu \ge 0$ gives
\begin{align*}
    \OPT(\vec \gamma) - \alpha R(A | \vec \gamma) 
    & \le  \sum_{t=1}^{T} f_t(x_t^*)  - \alpha \sum_{t=1}^{\tA} f_t(x_t)\\
    & \le  \sum_{t=\tA + 1}^{T} f_t(x_t^*)  +  \alpha \sum_{t=1}^{\tA} w_t(\mu_t)  \\
    & \le \sum_{t=\tA + 1}^{T} f_t(x_t^*)   + \alpha \sum_{t=1}^{\tA} w_t(\mu) + \alpha E(T, \mu)
      \\
    & \le (T-\tA) \cdot  \ubf - \alpha \sum_{t=1}^{\tA} \mu^\top (b_t(x_t) - \rho) + \alpha E(T,\mu)\,,
\end{align*}
where the first inequality follows because $\tA \le T$ together with $f_t(\cdot) \ge 0$, the second inequality is from \eqref{eq:target}, the third inequality utilizes \eqref{eq:bound-cs-2}, and the last inequality utilizes $f_t(x^*_t) \le \ubf$.

If $\tA = T$, then set $\mu = 0$, and the result follows. If $\tA < T$, then there exists a resource $j\in[m]$ such that $\sum_{t=1}^{\tA} (b_t(x_t))_j + {\ubbinfty} \ge T \rho_j$. Set $\mu = (\ubf/(\alpha \rho_j)) e_j$ with $e_j$ the unit vector and repeat the steps of the stochastic i.i.d.~case to obtain:
\[
    \OPT(\vec \gamma) - \alpha R(A | \vec \gamma) \le \frac{\ubf \ubbinfty}{\rho_j} + \alpha E(T,\mu)\le \frac{\ubf\ubbinfty}{\lbrho} + \alpha E(T,\mu)\,,
\]
which finishes the proof by substituting the regret bound of online mirror descent (see Proposition \ref{prop:omd}) and using that {\color{black} $\mu \in \{0, (\bar f/(\alpha \rho_1)) e_1,\ldots, (\bar f/(\alpha/\rho_m)) e_m\}$ to bound the term $V_h(\mu,\mu_1)$ in $E(T,\mu)$}. \qed

\subsection{Proof of Proposition \ref{thm:stopping_time}}\label{sec:proof_st}
Define $\mumax_j:=\frac{{\ubf}}{\rho_j}+1$. The key step in the proof of Proposition \ref{thm:stopping_time} is the following lemma, which shows that the dual update \eqref{eq:dual_update} never exceeds the upper bound $\mumax$ when the step-size $\eta$ is small enough.
\begin{lem}\label{lem:dual_update} Fix $\mu \in \cD$.
Let $\tg= \rho - b(\tx)$ with $(b,f,\cX)\in \cS$, $\tx \in \arg\max_{x\in \cX}f(x) - \mu^\top b(x)$, and $\mu^+ = \arg\min_{\hmu\ge 0} \tg^\top \hmu + \frac{1}{\eta} V_h(\hmu, \mu)$. Suppose $\mu\le\mumax$ and $\eta\le \frac{\sigma_2}{\ubbinfty}$, then it holds that $\mu^+\le \mumax$.
\end{lem}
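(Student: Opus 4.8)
The plan is to exploit the coordinate-wise separability of $h$ so that the projection step decouples across resources, and then to show, one resource at a time, that the resulting one-dimensional update cannot overshoot the cap $\mumax_j = \bar f/\rho_j + 1$.

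First I would reduce to one dimension. Since $h(\mu)=\sum_{j=1}^m h_j(\mu_j)$ by Assumption~\ref{ass:h-sep}, the Bregman divergence is separable, $V_h(\hmu,\mu)=\sum_j V_{h_j}(\hmu_j,\mu_j)$, and the objective of the update splits as $\tg^\top\hmu+\frac1\eta V_h(\hmu,\mu)=\sum_j\big(\tg_j\hmu_j+\frac1\eta V_{h_j}(\hmu_j,\mu_j)\big)$. Hence $\mu^+_j$ is the minimizer over $s\ge0$ of the convex univariate function $\phi_j(s):=\tg_j s+\frac1\eta V_{h_j}(s,\mu_j)$, with derivative $\phi_j'(s)=\tg_j+\frac1\eta\big(h_j'(s)-h_j'(\mu_j)\big)$. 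Because $\phi_j$ is convex and $\mumax_j>0$ lies in the interior of $\RR_+$ (where $h_j$ is differentiable), it suffices to prove $\phi_j'(\mumax_j)\ge0$: this makes $\phi_j$ non-decreasing on $[\mumax_j,\infty)$, so its minimizer $\mu^+_j$ over $s\ge0$ lies in $[0,\mumax_j]$.

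The heart of the argument is the inequality $\phi_j'(\mumax_j)\ge0$, equivalently $h_j'(\mumax_j)-h_j'(\mu_j)\ge\eta\big(b_j(\tx)-\rho_j\big)$, which I would prove by splitting on whether $b_j(\tx)\le\rho_j$. If $b_j(\tx)\le\rho_j$, the right-hand side is non-positive while the left-hand side is non-negative because $h_j'$ is non-decreasing and $\mu_j\le\mumax_j$, so we are done. If instead $b_j(\tx)>\rho_j$, I would first use the optimality of $\tx$ together with feasibility of the void action: since $0\in\cX$ and $b(0)=0$, optimality gives $f(\tx)-\mu^\top b(\tx)\ge f(0)\ge0$, and since $\mu,b(\tx)\ge0$ this yields $\mu_j b_j(\tx)\le\mu^\top b(\tx)\le f(\tx)\le\ubf$; combined with $b_j(\tx)>\rho_j$ this forces $\mu_j<\ubf/\rho_j=\mumax_j-1$. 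Then $\sigma_2$-strong convexity of $h_j$ on $[0,\mumax_j]$ gives
$h_j'(\mumax_j)-h_j'(\mu_j)\ge\sigma_2(\mumax_j-\mu_j)>\sigma_2\ge\eta\ubbinfty\ge\eta b_j(\tx)>\eta(b_j(\tx)-\rho_j)$,
using the step-size bound $\eta\le\sigma_2/\ubbinfty$ and $b_j(\tx)\le\|b(\tx)\primalnorm\le\ubbinfty$. This gives $\phi_j'(\mumax_j)\ge0$ in both cases, hence $\mu^+_j\le\mumax_j$ for every $j$, i.e. $\mu^+\le\mumax$.

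The main obstacle I anticipate is the regime in which $\mu_j$ is already near the cap $\mumax_j$: the crude bound $-\eta\tg_j\le\eta\ubbinfty\le\sigma_2$ only recovers one ``unit'' of $h_j'$, which matches the slack from strong convexity exactly when $\mu_j\le\mumax_j-1$ but not when $\mu_j\in(\mumax_j-1,\mumax_j]$. The key observation that resolves this — and the one genuinely delicate point — is that a large multiplier $\mu_j$ automatically makes the opportunity-cost-adjusted maximizer $\tx$ consume little of resource $j$ (at most $\ubf/\mu_j$), so in fact $b_j(\tx)<\rho_j$ as soon as $\mu_j>\ubf/\rho_j$; thus the boundary regime falls into the easy first case. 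A minor amount of care with the regularity of $h_j$ at $0$ is also needed, but this is covered either by differentiability of $h_j$ or, in the essentially smooth case, by the iterates remaining in the interior of $\RR_+^m$, and it does not affect the argument.
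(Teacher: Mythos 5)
Your proof is correct, and it reaches the same conclusion through the same two key facts as the paper — the bound $\mu_j b_j(\tx)\le \mu^\top b(\tx)\le f(\tx)\le\ubf$ extracted from the optimality of $\tx$ against the void action, and the interplay between the step-size condition $\eta\le\sigma_2/\ubbinfty$ and $\sigma_2$-strong convexity that limits the one-step motion to one ``unit'' — but the technical route differs. The paper works with the first-order equality $h_j'(\mu_j^+)=h_j'(\mu_j)-\eta\tg_j$ at coordinates with $\mu_j^+>0$, introduces the conjugate $h_j^*$, and uses its $1/\sigma_2$-smoothness together with the monotonicity of $(h_j^*)'$, splitting cases according to whether $\mu_j\ge\ubf/\rho_j$ or $\mu_j\le\ubf/\rho_j$. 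You instead exploit separability to reduce to the univariate problem $\phi_j(s)=\tg_j s+\frac1\eta V_{h_j}(s,\mu_j)$ and show $\phi_j'(\mumax_j)\ge0$, splitting on whether $b_j(\tx)\le\rho_j$; your two cases are essentially dual to the paper's via the observation that $b_j(\tx)>\rho_j$ forces $\mu_j\le\ubf/\rho_j$. What your route buys is that it avoids the conjugate machinery entirely and only invokes strong convexity of $h_j$ on the interval $[0,\mumax_j]$ (the pair $(\mu_j,\mumax_j)$ you compare lies in that interval), which is exactly what Assumption~\ref{ass:h-sep} provides; the paper's appeal to global $1/\sigma_2$-smoothness of $h_j^*$ is, read literally, a slightly stronger invocation. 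One small point to make airtight: the step ``$\phi_j'(\mumax_j)\ge0$ implies the minimizer over $s\ge0$ lies in $[0,\mumax_j]$'' uses uniqueness of the argmin (or strictness of the inequality); this is harmless here since $h$ is strongly, hence strictly, convex on $\RR^m_+$ under Assumption~\ref{ass:h}, and in your second case the inequality is in fact strict, so there is no gap.
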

{\bf Proof.}
Denote by $J:=\{j \in [m] \mid \mu^+_j>0\}$ the set of indices with positive dual variables. Then, we just need to show $\mu^+_j\le \mumax_j$ for any $j\in J$. Following the update rule \eqref{eq:dual_update}, it holds for any $j\in J$ that 
\begin{equation}\label{eq:l2proof}
    \ddh_j (\mu_j^+) = \ddh_j (\mu_j) - \eta \tg_j =  \ddh_j (\mu_j) + \eta b_j(\tx) - \eta\rho_j.
\end{equation}



Define $h^*_j(c)=\max_{\mu_j} \{c \mu_j - h_j(\mu_j)\}$ as the conjugate function of $h_j(\mu_j)$, then by Assumption~\ref{ass:h-sep} it holds that $h^*_j(\cdot)$ is a $\frac{1}{\sigma_2}$-smooth univariate convex function~\citep{kakade2009duality}. Furthermore, $\ddh^*_j(\cdot)$ is increasing, and $\ddh^*_j(\ddh_j(\mu_j))=\mu_j$. 

By Assumption~\ref{ass:p} and using that $0\in\cX$ is feasible, it holds that $0=f(0)\le f(\tx)-\mu^\top b(\tx)\le \ubf-\mu^\top b(\tx)$, whereby $\mu^\top b(\tx)\le \ubf$. Since $\mu\ge 0, b(x)\ge 0, \tx\in\cX\subseteq \RR^d_{+}$, it holds for any $j\in J$ that $b_j(\tx) \le \frac{\ubf}{\mu_j}$. Meanwhile, it follows by the definition of $\ubbinfty$ that $b_j(\tx)\le \ubbinfty$. Together with \eqref{eq:l2proof}, it holds that
\begin{equation}\label{eq:ba_prox}
    \ddh_j (\mu_j^+) \le  \ddh_j (\mu_j) + \eta \min\pran{\frac{\ubf}{\mu_j}, \ubbinfty} - \eta\rho_j.
\end{equation}
If $\frac{\ubf}{\rho_j}\le \mu_j\le \mumax_j$, we have $\min\pran{\frac{\ubf}{\mu_j}, \ubbinfty} - \rho_j\le 0$, thus it holds that $\mu^+_j\le \mu_j\le\mumax_j$ by utilizing \eqref{eq:ba_prox} and convexity of $\ddh_j$. Otherwise, $\mu_j\le \frac{\ubf}{\rho_j}$, and furthermore,
\begin{align*}
    \mu^+_{j}&=\ddh^*_j(\ddh_j(\mu_j^+)) \le \ddh^*_j(\ddh_j(\mu_j)+ \eta\ubbinfty) \\
    &\le \ddh^*_j(\ddh_j(\mu_j)) + \frac{\eta\ubbinfty}{\sigma_2}\le \frac{\ubf}{\rho_j} + 1 = \mumax_j \ ,
\end{align*}
where the first inequality is from \eqref{eq:ba_prox} and the monotonicity of $\ddh^*_j(\cdot)$, the second inequality is from $\ddh^*_j(\ddh_j(\mu_j))=\mu_j$ and the $\frac{1}{\sigma_2}$-smoothness of $h^*_j(\cdot)$, the last inequality utilizes $\eta\le \frac{\sigma_2}{\ubbinfty}$, and the last equality follows from the definition of $\mumax$. This finishes the proof of Lemma \ref{lem:dual_update}. \qed

\vspace{0.2cm}

{\bf Proof of Proposition \ref{thm:stopping_time}:}
First, a direct application of Lemma~\ref{lem:dual_update} shows that for any $t$, $\mu_t\le\mumax$. Next, it follows by the definition of $\tA$ that there exist $j$ such that $\sum_{t=1}^{\tA} (b_t(x_t))_j + \ubb \ge \rho_j T $. By the definition of $g_t$, we have
\begin{equation*}
    \sum_{t=1}^{\tA} (g_t)_j = \rho_j \tA - \sum_{t=1}^{\tA} (b_t(x_t))_j \le \rho_j \tA - \rho_j T + \ubb \ ,
\end{equation*}
thus 
\begin{equation}\label{eq:stop_time_eq1}
    T-\tA \le \frac{\ubb - \sum_{t=1}^{\tA} (g_t)_j}{\rho_j}.
\end{equation}

On the other hand, it follows the update rule \eqref{eq:dual_update} that for any $t\le \tA$,
$$\ddh_j((\mu_{t+1})_j)\ge \ddh_j((\mu_{t})_j)-\eta (g_t)_j\ .$$
Thus,
\begin{align}\label{eq:stop_time_eq2}
    \begin{split}
        \sum_{t=1}^{\tA} -(g_t)_j & \le \frac{1}{\eta} \left( \ddh_j((\mu_{\tA+1})_j)-\ddh_j((\mu_{0})_j)\right) \\
        &\le \frac{1}{\eta}\left( \ddh_j(\mumax_j)-\ddh_j((\mu_{0})_j)\right) \ ,
    \end{split}
\end{align}
where the last inequality is due to the monotonicity of $\ddh_j(\cdot)$. Combining \eqref{eq:stop_time_eq1} and \eqref{eq:stop_time_eq2}, we reach
$$
T-\tA \le \max_j \left\{ \frac{\ddh_j(\mumax_j)-\ddh_j((\mu_{0})_j)}{\eta \rho_j} + \frac{{\ubbinfty}}{\rho_j} \right\} \ .
$$
We conclude the proof by noticing that $\rho_j\ge \lbrho$ and using that $\ddh_j(\mumax_j)-\ddh_j((\mu_{0})_j) \le \themax$. \qed

\section{Proofs in Section \ref{sec:non-stationary}}

\subsection{Proof of Theorem \ref{thm:independent}}
We discuss how the steps of the proof of Theorem~\ref{thm:master} need to be adapted to account for the time dependence of the distributions of requests. 

\paragraph{Step 1.} Repeating the argument in step 1, we obtain as in equation~\eqref{eq:f_and_r} that
\begin{align}\label{eq:f_and_r_robust}
\mathbb E\left[ \sum_{t=1}^{\tA} f_t(x_t)  \right] 
&= 
\mathbb E\left[ \sum_{t=1}^{\tA} D(\mu_t| \cP_t) \right ]  - \mathbb E\left[ \sum_{t=1}^{\tA} \mu_t^\top \left(  \rho - b_t(x_t) \right) \right]   
\end{align}
where $D(\mu_t| \cP_t) := \EE_{(f,b,\cX) \sim \cP_t} \left[ f^*(\mu_t) \right] +\mu_t^{\top}$ is the dual function when requests are distributed from $\cP_t$. Denoting by $\bar \cP = \frac 1 T \sum_{s=1}^T \cP_s$ the time-averaged distribution of requests, we have that
\begin{align}\label{eq:bound_dual_robust}
    \sum_{t=1}^{\tA} D(\mu_t| \cP_t) &= \sum_{t=1}^{\tA} \left( \EE_{(f,b) \sim \cP_t} \left[ f^*(\mu_t) \right] +\mu_t^{\top} \rho \right) \nonumber\\
    &= \sum_{t=1}^{\tA} D(\mu_t| \bar \cP) + \sum_{t=1}^{\tA} \left( \EE_{(f,b) \sim  \cP_t} \left[ f^*(\mu_t) \right] - \EE_{(f,b) \sim \bar \cP} \left[ f^*(\mu_t) \right] \right) \nonumber\\
    &\ge \tA D(\bar \mu_{\tA}| \bar \cP) - \ubf  \MD(\cP)\,,
\end{align}
where the inequality follows from denoting $\bmu_{\tA} = \frac 1 {\tA} \sum_{t=1}^{\tA} \mu_t$ to be the average dual variable and using that the dual function is convex and because $f^*(\mu)=\max_{x\in\cX}\{f(x)-\mu^\top b(x)\}$ satisfies $0 \le f^*(\mu) \le \ubf$ by Assumption~\ref{ass:p}. {The lower bound $0 \le f^*(\mu)$ follows because $0 \in \cX$ is a feasible choice, $f(0) = 0$ and $b(0)=0$, while the upper bound $f^*(\mu) \le \ubf$ follows because $f(x) \le \ubf$, $\mu \ge 0$, and $b(x) \ge 0$.}

\paragraph{Step 2.} This step applies directly because the analysis is deterministic in nature.

\paragraph{Step 3.} With some abuse of notation, define $\OPT(\vec \cP) := \EE_{\vec \gamma \sim \vec \cP} [ \OPT(\vec \gamma) ]$. Proposition~\ref{prop:upper_bound} implies that for every $\mu \in \cD$ we have 
\[
    \OPT(\vec \cP) = \EE_{\vec \gamma \sim \vec \cP} [ \OPT(\vec \gamma) ] \le 
    \EE_{\vec \gamma \sim \vec \cP} [ D(\mu | \vgamma) ] = 
    \sum_{t=1}^T \left( \EE_{(f,b,\cX) \sim \cP_t} \left[ f^*(\mu) \right] + \rho^\top \mu \right) = T  D(\mu| \bar \cP)\,,
\]
where the second equation follows by the linearity of expectation and the last because $\bar \cP$ is the time-averaged distribution of requests. Therefore, for any distributions $\cP_t \in \Delta(\cS)$ for $t \in [T]$ and $\tA \in [0,T]$ we have that
\begin{align}\label{eq:bound-opt-robust}
\OPT(\vec \cP) &=  \frac{\tA}{T} \OPT(\vec \cP) + \frac{T-\tA}{T}\OPT(\vec \cP) \le  \tA D(\bmu_{\tA}| \bar \cP) + \pran{T-\tA}{\ubf}\ ,
\end{align}
where the inequality uses that $\OPT(\cP)\le T \ubf$. Combining  \eqref{eq:f_and_r_robust}, \eqref{eq:bound_dual_robust} and \eqref{eq:bound-opt-robust}, it holds that
\begin{align*}
    \begin{split}
        \EE \mpran{R(A|\vgamma)}
        &\ge \mathbb E\left[ \sum_{t=1}^{\tA} D(\mu_t| \cP_t) \right ]  - \mathbb E\left[ \sum_{t=1}^{\tA} \mu_t^\top \left(  \rho - b_t(x_t) \right) \right]   \\
        &\ge  \mathbb E\left[ \tA D( \bar \mu | \bar \cP ) \right]  - \ubf   \MD(\vcP) - \mathbb  \EE\left[ \sum_{t=1}^{\tA} w_t(\mu_t)  \right] \\
        &\ge  \OPT(\cP) - \ubf   \MD(\cP)  - \EE\left[ \sum_{t=1}^{\tA} w_t(\mu) + E(T,\mu) \right] \ .
    \end{split}
\end{align*}
Repeating the arguments in the proof of Theorem~\ref{thm:master} and substituting $\mu$  as that in the proof of Theorem~\ref{thm:master}, we obtain
\begin{equation*}
    \OPT(\vcP) - \EE \mpran{R(A|\vgamma)} \le \ubf  \MD(\vcP) + \frac{\bar f \ubb }{\lbrho} + \frac 1 {2 \sigma} (\ubb + \ubrho)^2 \eta T + \frac{1}{\eta} \EE \left[ V_h(\mu, \mu_1) \right]\ ,
\end{equation*}
which finishes the proof by noticing $\MD(\vcP)\le \delta$ and bounding $\EE \left[ V_h(\mu, \mu_1) \right]$ as in the proof of Theorem~\ref{thm:master}.

\subsection{Proof of Theorem \ref{thm:lower-bound-inde}}

\begin{proof}
Let $\mathcal A$ be the class of all randomized algorithms. Notice the stochastic model is a special case of the corruption model with $\delta=0$, thus it directly follows from the lower bound of the stochastic model (see, for example, Lemma 1 from \citealt{ArlottoGurvich2019}) that for any $\delta\ge 0$
\[
    \inf_{A \in \mathcal A} \sup_{\vec \cP : \MD(\vec \cP) \le \delta} \Big[ \OPT(\vec \cP) - R(A | \vec \cP) \Big] \ge C_2'\sqrt{T} \ . 
\]
Here we just need to show that there exists $C_1'$ such that the 
\begin{equation}\label{eq:Delta}
    \inf_{A \in \mathcal A} \sup_{\vec \cP : \MD(\vec \cP) \le \delta} \Big[ \OPT(\vec \cP) - R(A | \vec \cP) \Big] \ge C_1'\delta  \ , 
\end{equation}
then \eqref{eq:lower_bound_corruption} holds with $C_1=\frac{1}{2}C_1'$ and $C_2=\frac{1}{2}C_2'$, and taking the average of the above bounds.


{
By Yao's Lemma \citep{yao1977probabilistic}, to lower bound the worst-case regret of all randomized algorithms it suffices to analyze the performance of deterministic algorithms over a known distribution of inputs. We provide more details for completeness. Let $\mathcal A^d$ be the class of deterministic mechanisms. Then, the space of randomized mechanisms $\mathcal A = \Delta(\mathcal A^d)$ can be thought of as distributions over deterministic mechanism. We write $R(A|\vcP) := \EE_{A^d \sim A} [R(A^d | \vec \cP)]$ for the expected performance of a randomized algorithm $A \in \mathcal A$. Additionally, denote by  $\mathbb Q(\delta) = \Delta\big( \mathcal C^{\rm ID}(\delta) \big) \subset \Delta(\Delta(\cS)^T)$ as the set of distributions over $\vec\cP$ that satisfy $\MD(\vec\cP)\le \delta$. Then, we have
\begin{align}\label{eq:Yao1}
    \begin{split}
        \inf_{A \in \mathcal A} \sup_{\vec \cP : \MD(\vec \cP) \le \delta} \Big[ \OPT(\vec \cP) - R(A|\vcP) \Big] &= \inf_{A \in \mathcal A} \sup_{\cQ \in \mathbb Q(\delta)}\EE_{\vcP \sim \cQ} \Big[ \OPT(\vec \cP) - R(A|\vcP) \Big]\\
        & \ge \sup_{\cQ \in \mathbb Q(\delta)} \pran{\EE_{\vcP \sim \cQ} \Big[ \OPT(\vec \cP) \Big] - \sup_{A \in \mathcal A} \EE_{\vcP \sim \cQ} \mpran{R(A | \vec \cP)}} \,, \\
    \end{split}
\end{align}
where the first equation follows because nature does not benefit from randomizing distributions over $\vcP$, and the inequality from the minimax inequality. Because rewards are bounded by $0 \le R(A | \vgamma) \le T \ubf$, we can use Fubini's theorem to obtain that
\begin{align}\label{eq:Yao2}
    \begin{split}
    \sup_{A \in \mathcal A} \EE_{\vcP \sim \cQ} \mpran{R(A | \vec \cP)}
    &= \sup_{A \in \mathcal A} \EE_{\vcP \sim \cQ} \EE_{A^d \sim A} \mpran{R(A^d | \vec \cP)}
    = \sup_{A \in \mathcal A} \EE_{A^d \sim A} \EE_{\vcP \sim \cQ} \mpran{R(A^d | \vec \cP)}\\
    &= \sup_{A^d \in \mathcal A^d} \EE_{\vcP \sim \cQ} \mpran{R(A^d | \vec \cP)}\,,    
    \end{split}
\end{align}
because the decision maker does not benefit from randomizing when the distribution over distributions of inputs is known. Yao's lemma is obtained from combining \eqref{eq:Yao1} and \eqref{eq:Yao2}.}

The rest of the proof involves constructing a distribution of distributions $\cQ \in \mathbb Q(\delta)$ so that  $\EE_{\vcP \sim \cQ} \Big[ \OPT(\vec \cP) \Big] - \sup_{A^d \in \mathcal A^d} \EE_{\vcP \sim \cQ}\mpran{ R(A^d | \vec \cP)}\ge C_1' \delta$. In other words, every deterministic mechanism incurs a regret of order $\delta$.

Consider a one-dimensional binary knapsack problem with linear reward, namely, $b_t(x) = b_t x$ with $b_t \in \RR_+$, $f_t(x_t)=f_t^\top x_t$ with $f_t\in\RR_+$, and $x_t \in \cX = \{0,1\}$. The average budget is set to $\rho=0.5$.
We define two vectors of distributions $\vec \cP^1$ and $\vec \cP^2$ of requests $(f, b)$ parameterized by $\hht\le \frac{T}{2}$ as follow: 

$$
\vec \cP^1 = \pran{\underbrace{\cP_a, \cP_a, \ldots, \cP_a}_{\hht \text{ periods}}, \underbrace{\cP_b, \cP_b, \ldots, \cP_b}_{\hht \text{ periods}}, \underbrace{\cP_d, \cP_d, \ldots, \cP_d}_{T-2\hht \text{ periods}}}
$$
$$
\vec \cP^2 = \pran{\underbrace{\cP_a, \cP_a, \ldots, \cP_a}_{\hht \text{ periods}}, \underbrace{\cP_c, \cP_c, \ldots, \cP_c}_{\hht \text{ periods}}, \underbrace{\cP_d, \cP_d, \ldots, \cP_d}_{T-2\hht \text{ periods}}} \ ,
$$
where $\cP_a$, $\cP_b$, $\cP_c$ and $\cP_d$ are deterministic distributions on $(f,b)$ defined by $\cP_a((f,b)=(0.5, 1))=1$, $\cP_b((f,b)=(0,1))=1$, $\cP_c((f,b)=(1,1))=1$ and $\cP_d((f,b)=(1,0.5))=1$.
Notice each vector of distributions $\vcP^1$ and $\vcP^2$ has three stages; the first two stages are $\hht$ periods long and the last one is $T-2\hht$ periods long, and the only difference between $\vcP^1$ and $\vcP^2$ are the second stage. 
Moreover, it is easy to see that the support of $\vec\cP^1$ is $\pran{(0.5, 1), (0, 1), (1, 0.5)}$ and the average distribution $\bar \cP_1 = \frac 1 T \sum_{s=1}^T \cP^1_s $ chooses the requests in the support with probability $(\hht/T, \hht/T, 1-2\hht/T)$ respectively, thus the mean deviation of $\vcP^1$ is bounded by
\begin{align}\label{eq:TV_bound}
\begin{split}
    \MD(\vec \cP^1) &= \hht \pran{\pran{1-\frac {\hht} T } + \frac {\hht} T  + \pran{1-\frac {2\hht} T }} + \hht \pran{ \frac {\hht} T  + \pran{1-\frac {\hht} T } + \pran{1-\frac {2\hht} T }}  + (T-2\hht) \frac {4 \hht} T  \\
    &\le 8 \hht \ .
\end{split}
\end{align}
A similar argument yields that
\begin{equation}\label{eq:TV_bound2}
    \MD(\vec \cP^2) \le 8 \hht \ .
\end{equation}

{
Consider a distribution of distributions $\cQ$ that takes value $\vec \cP^1$ and $\vec \cP^2$ with equal probability. Notice that any online, deterministic algorithm $A^d$ cannot distinguish between $\vec \cP^1$ and $\vec \cP^2$ until Stage 2 is reached, thus the number of taken items in Stage 1 is independent from the underlining distribution. Suppose the algorithm decides to take a deterministic amount $l \in [0, \hat t]$ of items from Stage 1. Clearly, any algorithm is better off taking all the items from Stage 3 (for both distributions) as these consume 0.5 units of resource and yield a revenue of 1. Because the total capacity is $T/2$, taking these times yields a remaining budget of $T/2 - l - 0.5 (T - 2 \hat t) = \hat t - l$. Therefore, the best action for Stage 2 is to take $\hht-l$ items. Thus, it holds for any online, deterministic algorithm $A^d$ that
\begin{align*}
\begin{split}
    \EE_{\vec \cP\sim Q} \mpran{R(A^d|\vec \cP)} & = \frac{1}{2} R(A^d|\vec \cP^1) + \frac{1}{2} R(A^d|\vec \cP^2) \\
    & \le \frac{1}{2} \pran{0.5l + (T-2\hht)} + \frac{1}{2} \pran{0.5l + (\hht-l) + (T-2\hht)} \\
    & = T-\frac{3}{2}\hht \ ,
\end{split}
\end{align*}
whereby it holds that 
\begin{equation}\label{eq:supA}
    \sup_{A^d \in \mathcal A^d}  \EE_{\vec \cP\sim \cQ} \mpran{R(A^d|\vec \cP)} \le T-\frac{3}{2}\hht \ .
\end{equation}}%
Meanwhile, it is easy to check that $\OPT(\vec \cP^1)= 0.5\hht+(T-2\hht)$ (with the benefit of hindsight, it is optimal to take all requests from Stage 1 and Stage 3) and $\OPT(\vec \cP^1)= \hht+(T-2\hht)$ (because it is optimal to take all requests from Stage 2 and Stage 3), whereby 
\begin{equation}\label{eq:EOPT}
    \EE_{\vec \cP\sim \cQ}\mpran{\OPT(\vec \cP)} = \frac{1}{2} \OPT(\vec \cP^1) + \frac{1}{2} \OPT(\vec \cP^2) = T-\frac{5}{4}\hht \ .
\end{equation}

Combining \eqref{eq:TV_bound}, \eqref{eq:TV_bound2}, \eqref{eq:supA}, \eqref{eq:EOPT}, it holds for any $\delta=8\hht$ with integer $\hht\le \frac{T}{2}$ that
\begin{align}\label{eq:bound_Delta}
\begin{split}
     \EE_{\cP \sim \cQ} \Big[ \OPT(\vec \cP) \Big] - \sup_{A^d \in \mathcal A^d} \EE_{\cP \sim \cQ}\mpran{R(A^d | \vec \cP)} \ge \pran{T-\frac{5}{4}\hht} - \pran{T-\frac{3}{2}\hht} = \frac{1}{4}\hht \ge \frac{1}{32} \delta \ .
\end{split}
\end{align}
Substituting \eqref{eq:bound_Delta} into the lower bound obtained from Yao's lemma, we arrive at 
$$
\inf_{A \in \mathcal A} \sup_{\vec \cP : \MD(\vec \cP) \le \delta} \Big[ \OPT(\vec \cP) - R(A | \vec \cP) \Big] \ge \frac{1}{32} \delta,
$$
for any $\delta=8\hht$ with integer $\hht\le \frac{T}{2}$. Therefore for any $8\le \delta \le 4 T$, we have 
\begin{align*}
\begin{split}
     \inf_{A \in \mathcal A} \sup_{\vec \cP : \MD(\vec \cP) \le \delta} \Big[ \OPT(\vec \cP) - R(A | \vec \cP) \Big] & \ge \inf_{A \in \mathcal A} \sup_{\vec \cP : \MD(\vec \cP) \le 8 \left\lfloor {\delta/8} \right\rfloor} \Big[ \OPT(\vec \cP) - R(A | \vec \cP) \Big] \\
     \ge & \frac{1}{32} 8 \left\lfloor {\delta/8} \right\rfloor \ge \frac{1}{64} \delta \ ,
\end{split}
\end{align*}
where the first inequality is because $\delta\le 8 \left\lfloor {\delta/8} \right\rfloor$.
Therefore, \eqref{eq:Delta} holds for $C_1'=\frac{1}{64}$, which finishes the proof.
\end{proof}

\subsection{Proof of Theorem \ref{thm:ergodic}}

{
We first show that the dual solutions produced by our algorithm are stable in the sense that the change in the multipliers from one period to the next is proportional to the step size.}

\begin{prop}\label{prop:diff-iterate}
It holds for any $t$ that
$$\|\mu_{t+1}-\mu_{t}\dualnorm\le \frac{\sqrt{2}}{\sigma} \eta\|g_t\primalnorm\le \frac{\sqrt{2}}{\sigma}\eta (\ubb+\ubrho) \ .$$
\end{prop}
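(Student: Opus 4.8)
The plan is to exploit the standard first-order optimality condition for the mirror-descent update \eqref{eq:dual_update} together with strong convexity of the reference function $h$. First I would write the optimality condition for the minimizer $\mu_{t+1} = \arg\min_{\mu \in \cD}\{\eta\, \tg_t^\top \mu + V_h(\mu,\mu_t)\}$: since $V_h(\cdot,\mu_t)$ is differentiable with gradient $\nabla h(\cdot)-\nabla h(\mu_t)$, the variational inequality gives, for every $\mu \in \cD$,
\[
\bigl(\eta\, \tg_t + \nabla h(\mu_{t+1}) - \nabla h(\mu_t)\bigr)^\top (\mu - \mu_{t+1}) \ge 0.
\]
Taking the feasible competitor $\mu = \mu_t$ yields
\[
\bigl(\nabla h(\mu_{t+1}) - \nabla h(\mu_t)\bigr)^\top (\mu_t - \mu_{t+1}) \ge \eta\, \tg_t^\top (\mu_{t+1} - \mu_t).
\]

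Next I would invoke $\sigma$-strong convexity of $h$ in the $\|\cdot\dualnorm$ norm (Assumption \ref{ass:h}), which gives the two-point monotonicity-type inequality
\[
\bigl(\nabla h(\mu_{t}) - \nabla h(\mu_{t+1})\bigr)^\top (\mu_t - \mu_{t+1}) \ge \sigma \|\mu_{t+1}-\mu_t\dualnorm^2 .
\]
Combining with the previous display gives $\sigma \|\mu_{t+1}-\mu_t\dualnorm^2 \le \eta\, \tg_t^\top (\mu_t - \mu_{t+1}) \le \eta \|\tg_t\primalnorm \|\mu_{t+1}-\mu_t\dualnorm$, where the last step is Hölder's inequality with the dual pair of norms $(\|\cdot\primalnorm,\|\cdot\dualnorm)$. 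Dividing by $\|\mu_{t+1}-\mu_t\dualnorm$ (the inequality is trivial when this is zero) gives $\|\mu_{t+1}-\mu_t\dualnorm \le (\eta/\sigma)\|\tg_t\primalnorm$. The bound claimed in the statement has an extra factor $\sqrt 2$, which only makes it weaker, so this suffices; alternatively one can reach the $\sqrt2/\sigma$ constant via the three-point identity for the Bregman divergence, $V_h(\mu_t,\mu_{t+1}) + V_h(\mu_{t+1},\mu_t) = (\nabla h(\mu_t)-\nabla h(\mu_{t+1}))^\top(\mu_t-\mu_{t+1})$, bounding both Bregman terms below by $\tfrac{\sigma}{2}\|\mu_{t+1}-\mu_t\dualnorm^2$. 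Finally, the second inequality in the statement follows from the uniform bound on the sub-gradient established earlier: $\|\tg_t\primalnorm = \|\rho - b_t(\tx_t)\primalnorm \le \|b_t(\tx_t)\primalnorm + \|\rho\primalnorm \le \ubb + \ubrho$ by Assumption \ref{ass:p}.

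I do not anticipate a genuine obstacle here; the only mild subtlety is the handling of the case where $h$ is merely essentially smooth rather than differentiable on all of $\RR_+^m$, so that $\nabla h(\mu_{t+1})$ is well defined — this is guaranteed precisely because essential smoothness forces the minimizer $\mu_{t+1}$ into the interior of $\mathrm{dom}\,h$, so the stated optimality condition is valid. Apart from that, the argument is the textbook non-expansiveness-type estimate for mirror descent specialized to consecutive iterates.
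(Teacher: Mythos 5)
Your proof is correct, but it takes a different route from the paper's. The paper first rewrites the update in its ``lazy'' two-step form, $\nabla h(\tilde\mu_t)=\nabla h(\mu_t)-\eta \tg_t$ followed by a Bregman projection of $\tilde\mu_t$ onto $\cD$, then bounds $\|\tilde\mu_t-\mu_t\dualnorm$ via the $\tfrac{1}{\sigma}$-Lipschitz continuity of $\nabla h^*$, and finally transfers this to $\|\mu_{t+1}-\mu_t\dualnorm$ using the generalized Pythagorean inequality for Bregman projections, which is where the factor $\sqrt{2}$ comes from. You instead work directly with the one-step (proximal) form: the variational inequality at $\mu_{t+1}$ with the feasible competitor $\mu_t$, combined with the strong monotonicity $(\nabla h(\mu_t)-\nabla h(\mu_{t+1}))^\top(\mu_t-\mu_{t+1})\ge \sigma\|\mu_{t+1}-\mu_t\dualnorm^2$ implied by Assumption~\ref{ass:h}, and H\"older for the dual pair $(\|\cdot\primalnorm,\|\cdot\dualnorm)$. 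This avoids the conjugate function and the Pythagorean step entirely and in fact yields the sharper constant $\eta\|\tg_t\primalnorm/\sigma$, which implies the stated $\sqrt{2}\,\eta\|\tg_t\primalnorm/\sigma$; the second inequality via $\|\tg_t\primalnorm\le\ubb+\ubrho$ is identical to the paper's. Your remark on essential smoothness (the minimizer lies in the interior of $\mathrm{dom}\,h$, so $\nabla h(\mu_{t+1})$ exists and the variational inequality is legitimate) addresses the only real technical subtlety, and it is the same fact the paper relies on implicitly when it equates the optimality conditions of the agile and lazy updates. The only slight inaccuracy is your parenthetical claim that the three-point identity route recovers the $\sqrt{2}/\sigma$ constant—bounding both Bregman terms below by $\tfrac{\sigma}{2}\|\mu_{t+1}-\mu_t\dualnorm^2$ again gives the $1/\sigma$ bound—but this is immaterial since you only need the weaker stated inequality.
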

\begin{proof}
The dual mirror descent update \eqref{eq:dual_update} can be written as 
\begin{equation}\label{eq:another_update}
    \nabla h(\tmu_t)=\nabla h(\mu_t)-\eta g_t\ , \ \ \ \  \mu_{t+1}=\arg\min_{\mu\in\cD} V_h(\mu,\tmu_t) \ ,
\end{equation}
which can be easily verified by noticing that the optimality condition of \eqref{eq:dual_update} and \eqref{eq:another_update} are both
$0\in \nabla h(\mu_{t+1})-\nabla h(\mu_t) +\eta g_t + \mathcal N_{\cD}(\mu_{t+1})$, where { $\mathcal N_{\cD}(\mu_{t+1})=\{g\in\RR^m | g^\top(\mu-\mu_{t+1})\le 0 \text{ for all } \mu\in\cD\}$} is the normal cone of $\cD$ at $\mu_{t+1}$. Let $h^*$ be the convex conjugate of $h$, then $\nabla h^*=(\nabla h)^{-1}$, and $\nabla h^*$ is $1/\sigma$-Lipschitz continuous in the primal norm by recalling that $h$ is $\sigma$-strongly-convex in the dual norm (see, e.g., \citealt{kakade2009duality}). Thus,
\begin{align}\label{eq:diff-mu}
\begin{split}
        \|\tmu_t-\mu_t\dualnorm=&\|\nabla h^*(\nabla h (\tmu_{t})) - \mu_t\dualnorm=\|\nabla h^*(\nabla h (\mu_{t})-\eta g_t) - \mu_t\dualnorm\\
    =&\|\nabla h^*(\nabla h (\mu_{t})-\eta g_t) - \nabla h^*(\nabla h (\mu_{t}))\dualnorm\le \frac{\eta}{\sigma} \|g_t\primalnorm \ .
\end{split}
\end{align}
Meanwhile, it follows by generalized Pythagorean Theorem of Bregman projection~\citep{nielsen2007bregman} that
\begin{equation}\label{eq:m-1}
    V_h(\mu_t, \tmu_t)\ge V_h(\mu_t, \mu_{t+1})+V_h(\mu_{t+1},\tmu_t) \ge V_h(\mu_t, \mu_{t+1}) \ge \frac{\sigma}{2}\|\mu_{t+1}-\mu_t\dualnorm^2 \ ,
\end{equation}
where the second inequality is from the non-negativity of Bregman divergence and the last inequality uses the strong-convexity of $h$ w.r.t.~dual norm. On the other hand, we have
\begin{align}\label{eq:m-2}
\begin{split}
    V_h(\mu_t, \tmu_t)&\le V_h(\mu_t, \tmu_t)+V_h(\tmu_t,\mu_t)= \left(\nabla h (\tmu_t)-\nabla h(\mu_t) \right)^\top \left(\tmu_t-\mu_t\rangle\right)\\
    &\le \|\nabla h (\tmu_t)-\nabla h(\mu_t)\| \|\tmu_t-\mu_t\dualnorm \le \frac{\eta^2}{\sigma}\|g_t\primalnorm^2\ ,
\end{split}
\end{align}
where the second inequality follows from Cauchy-Schwartz, and the third inequality utilizes \eqref{eq:another_update} and \eqref{eq:diff-mu}. 
We finish the proof by combining \eqref{eq:m-1} and \eqref{eq:m-2}, and noticing $\|g_t\primalnorm\le \|b_t(x_t)\primalnorm+\|\rho\primalnorm\le \ubb+\ubrho$.
\end{proof}

\textbf{Proof of Theorem \ref{thm:ergodic}}
\begin{proof} We discuss how the steps of the proof of Theorem~\ref{thm:master} need to be adapted to account for the time dependence of distributions of requests.

\paragraph{Step 1.} Fix an ergodic process $\cP \in \mathcal C^{\rm E}(\delta, k)$. By definition, there exists a one-period distribution $\bar \cP \in \Delta(\cS)$ such that $\TV_k(\cP, \bar \cP) \le \delta$. Consider a time $t \le \tA$ so that actions are not constrained by resources. By the definition of $f^*_t(\mu_t)$, we have that 
\[
f_t(x_t)  = f_t^*(\mu_t) + \mu_t^{\top} \rho - \mu_{t}^{\top} (\rho - b_t(x_t))\,.
\]
Summing up with $t=1,\ldots,\tA$ and taking expectations, we obtain that
\begin{align}\label{eq:f_and_r_ergodic}
\mathbb E\left[ \sum_{t=1}^{\tA} f_t(x_t)  \right] 
&= 
\mathbb E\left[ \sum_{t=1}^{\tA} f_t^*(\mu_t) + \mu_t^{\top} \rho \right ]-  \mathbb E\left[ \sum_{t=1}^{\tA} \mu_t^\top \left(  \rho - b_t(x_t) \right) \right]\,.
\end{align}
Let $\Psi_t(\mu) = f_t^*(\mu)$ and $\Psi(\mu) = \mathbb E_{(f,b) \sim \bar \cP}[f^*(\mu)]$. Then, we have that
\begin{align}
\begin{split}\label{eq:ergodic-four}
        \sum_{t=1}^{\tA} \Psi_t(\mu_t) =& 
    \sum_{t=1}^{\tA-k} \Big( \Psi_{t+k}(\mu_{t+k}) - \Psi_{t+k}(\mu_t) \Big) \\
    &+\sum_{t=1}^{\tA-k} \Big( \Psi_{t+k}(\mu_{t}) - \Psi(\mu_t) \Big)\\
    &+\sum_{t=1}^{k} \Psi_t(\mu_{t}) -  \sum_{t=\tA-k+1}^{\tA} \Psi(\mu_t)\\
    &+\sum_{t=1}^{\tA} \Psi(\mu_t)\,.
\end{split}
\end{align}
We bound each term in \eqref{eq:ergodic-four} at a time. For the first term, we use that for all $\mu \in \cD$ we have $\Psi_t(\mu_t) \ge \Psi_t(\mu) - \ubb \| \mu_t - \mu\dualnorm$. To see this, fix $\mu \in \cD$ and use that by definition of the convex conjugate that for all $x \in \cX$
\[
    f^*_t(\mu_t) \ge f_t(x) - \mu_{t}^{\top} b_t(x) = f_t(x) - \mu^{\top} b_t(x) + (\mu - \mu_t)^{\top} b_t(x) \ge f_t(x) - \mu^{\top} b_t(x) - \bar b \| \mu_t - \mu\dualnorm \,,
\]
where the last inequality follows from Cauchy-Schwartz and $\|b(x) \primalnorm \le \bar b$. Taking supremum over $x \in \mathcal X$, we obtain that $f^*_t(\mu_t) \ge f^*_t(\mu) - \bar b \| \mu_t - \mu\dualnorm $.  Let $C=\frac{\sqrt{2}}{\sigma} (\ubb+\ubrho)$, then it follows from Proposition \ref{prop:diff-iterate} that $\|\mu_{t}-\mu_{t-1}\dualnorm\le C \eta$. Therefore,
\[
    \sum_{t=1}^{\tA-k} \Psi_{t+k}(\mu_{t+k}) - \Psi_{t+k}(\mu_t) \ge - \ubb \sum_{t=1}^{\tA-k} \| \mu_{t+k} - \mu_t \dualnorm \ge - \ubb C T \eta k\,.
\]
For the second term, we first take expectations and use a martingale argument to write
\begin{align*}
    \mathbb E\left[\sum_{t=1}^{\tA-k} \Psi_{t+k}(\mu_{t}) - \Psi(\mu_t) \right]
    &= \mathbb E\left[\sum_{t=1}^{\tA-k} \mathbb E_{(f,b)\sim\cP_{t+k}(\gamma_{1:t - 1})}[ f^*(\mu_t) ] - \Psi(\mu_t ) \right]\\
    &\ge -\ubf \mathbb E\left[ \sum_{t=1}^{\tA-k} \| \cP_{t+k}(\gamma_{1:t - 1}) - \bar \cP \|_{\TV} \right]\\
    &\ge - \ubf \cdot T \cdot  \mathrm{TV}_k(\cP, \bar \cP)\,,
\end{align*}
because $0 \le f^*(\mu) \le \ubf$ as argued in the proof of Theorem~\ref{thm:independent}. For the third term, use our bounds on $f^*(\mu)$ again to obtain
\[
\sum_{t=1}^{k} \Psi_t(\mu_{t}) -  \sum_{t=\tA-k+1}^{\tA} \Psi(\mu_t) \ge - \ubf k\,.
\]
Therefore, denoting by $D(\mu | \bar \cP) = \Psi(\mu) + \mu^\top \rho$ the expected dual function at the stationary distribution,
\begin{align}\label{eq:bound_dual_ergodic}
    \mathbb E\left[ \sum_{t=1}^{\tA} f_t^*(\mu_t) + \mu_t^{\top} \rho \right ] 
    &\ge  {\mathbb E\left[ \sum_{t=1}^{\tA} \Psi(\mu_t)+\mu_t^{\top} \rho \right]  - \ubb C T \eta k - \ubf \cdot T \cdot  \mathrm{TV}_k(\cP, \bar \cP) - \ubf k} \nonumber \\
    & = \mathbb E\left[ \sum_{t=1}^{\tA} D( \mu_t | \bar \cP ) \right]  - \ubb C T \eta k - \ubf \cdot T \cdot  \mathrm{TV}_k(\cP, \bar \cP) - \ubf k \nonumber \\
    & \ge \mathbb E\left[ \tA D( \bar \mu | \bar \cP ) \right]  - \ubb C T \eta k - \ubf \cdot T \cdot  \mathrm{TV}_k(\cP, \bar \cP) - \ubf k
    \,,
\end{align}
where we denote by $\bar \mu = \frac 1 \tA \sum_{t=1}^{\tA} \mu_{t}$ the average multiplier and use that $D(\mu | \bar \cP)$ is convex in $\mu$.

\paragraph{Step 2.} This step applies directly because the analysis is deterministic in nature.

\paragraph{Step 3.} With some abuse of notation, define $\OPT(\cP) := \EE_{\vec \gamma \sim \cP} [ \OPT(\vec \gamma) ]$. Proposition~\ref{prop:upper_bound} implies that for every $\mu \in \cD$ we have 
\begin{align*}
    \OPT(\cP) &= \EE_{\vec \gamma \sim \cP} [ \OPT(\vec \gamma) ] \le \mathbb E_{\vec \gamma} \left[ \sum_{t=1}^T \left( f_t^*(\mu)+ \rho^\top \mu \right) \right]\,.
\end{align*}
Using the tower rule for conditional expectations, we obtain that
\begin{align*}
    \mathbb E_{\vec \gamma \sim \cP} \left[ \sum_{t=1}^T f_t^*(\mu) \right]
    &= \mathbb E_{\vec \gamma \sim \cP} \left[ \sum_{t=1}^k f_t^*(\mu) \right]
    + \mathbb E_{\vec \gamma \sim \cP} \left[ \sum_{t=k+1}^T \mathbb E_{(f,b) \sim \cP_{t+k}(\gamma_{1:t-1})}\left[ f^*(\mu) \right] \right]\\
    &\le \ubf k + (T-k) \Psi(\mu| \bar \cP) + \ubf \cdot (T-k) \cdot \mathrm{TV}_k(\cP, \bar \cP)\,,
\end{align*}
because $0 \le f^*(\mu) \le \ubf$. Thus, we obtain that
\[
    \OPT(\cP) \le T D(\mu | \bar \cP) + \ubf k  + \ubf \cdot T \cdot \mathrm{TV}_k(\cP, \bar \cP) \,,
\]
because $D(\mu | \bar \cP) = \Psi(\mu) + \rho^\top \mu$. Therefore, for any  $\tA \in [0,T]$ and $\mu \in \cD$ we have that
\begin{align}
\begin{split}\label{eq:bound-opt-ergodic}
\OPT(\cP) &=  \frac{\tA}{T} \OPT(\cP)  + \frac{T-\tA}{T}\OPT(\cP) \\
&\le  \tA D(\mu| \bar \cP) + \pran{T-\tA}{\ubf} + \ubf k  + \ubf \cdot T \cdot \mathrm{TV}_k(\cP, \bar \cP)\ ,
\end{split}
\end{align}
where the inequality uses that $\OPT(\cP) \le T\ubf$. Combining \eqref{eq:regret_omd}, \eqref{eq:f_and_r_ergodic}, \eqref{eq:bound_dual_ergodic} and \eqref{eq:bound-opt-ergodic}, and utilizing \eqref{eq:regret_omd}, it holds that
\begin{align*}
    \begin{split}
        \EE \mpran{R(A|\vgamma)}&
        \ge \mathbb E\left[ \sum_{t=1}^{\tA} f_t^*(\mu_t) + \mu_t^{\top} \rho \right ] - \mathbb E\left[ \sum_{t=1}^{\tA} \mu_t^\top \left(  \rho - b_t(x_t) \right) \right]\\
        &\ge  \mathbb E\left[ \tA D( \bar \mu | \bar \cP ) \right]  - \ubb C T \eta k - \ubf \cdot T \cdot  \mathrm{TV}_k(\cP, \bar \cP) - \ubf k - \EE\left[ \sum_{t=1}^{\tA}  w_t(\mu_t)  \right] \\
        &\ge  \OPT(\cP) - \ubb C T \eta k - 2 \ubf \cdot T \cdot  \mathrm{TV}_k(\cP, \bar \cP)  -\pran{T-\tA}{\ubf} - 2k \ubf  - \EE\left[ \sum_{t=1}^{\tA} w_t(\mu) + E(T,\mu) \right]\ .
    \end{split}
\end{align*}
Repeating the arguments in the proof of Theorem~\ref{thm:master} and substituting $\mu$ as that in the proof of Theorem~\ref{thm:master}, we obtain
\begin{align*}
\OPT(\cP) - \EE \mpran{R(A|\vgamma)} \le \ubb C T \eta k + 2 \ubf \cdot T \cdot  \mathrm{TV}_k(\cP, \bar \cP)  + 2k \ubf + \frac{\bar f \ubb }{\lbrho} + \frac 1 {2 \sigma} (\ubb + \ubrho)^2 \eta\cdot T + \frac{1}{\eta} \EE \left[ V_h(\mu, \mu_1) \right] \ ,
\end{align*}
which finishes the proof by noticing $\mathrm{TV}_k(\cP, \bar \cP)\le \delta$ {\color{black} and bounding $\EE \left[ V_h(\mu, \mu_1) \right]$ as in the proof of Theorem~\ref{thm:master}.}
\end{proof}

\subsection{Proof of Theorem \ref{thm:periodic}}

Here, we prove a more general statement. Theorem~ \ref{thm:periodic} the follows as a special case.

We assume $\vec \gamma \in \cS^T$ follows an arbitrary stochastic process $\cP \in \Delta(\cS^T)$. We need some definitions before stating our main result. We denote by $\gamma_{1:t} = (\gamma_s)_{s=1}^t$ and let $\mathcal P_t(\gamma_{1:s})$ be the conditional distribution of $\gamma_t$ given $\gamma_{1:s}$ for $s < t$. Let $\varPi = (\mathcal T_k)_{k = 1}^K$ be a partition of $[T]$ with $\mathcal T_k = \{t_{k}, \ldots, t_{k+1}-1\}$, $t_1 = 1$ and $t_{K+1} = T$. We denote by $\ell^2(\varPi) = \sum_{k=1}^K (t_{k+1} - t_{k})^2$ the sum of squared-lengths and 
\[
    \TV(\cP, \bar \cP, \varPi) = \sup_{\vec \gamma \in \mathcal S^{T}} \sum_{k=1}^K \left \| \sum_{t \in \mathcal T_k} \left( \cP_t \left(\gamma_{1:{t_k-1}}\right) -  \bar \cP \right) \right\|_\TV\,,
\]
the worst-case total variation distance between the distributions conditional on the data at the beginning of a partition and $\bar \cP \in \Delta(\cS)$.

\begin{thm}\label{thm:periodic-general}
Consider Algorithm \ref{al:sg} with step-size $\eta \ge 0$ and initial solution $\mu_1\in \cD$. Suppose Assumptions~\ref{ass:p}-\ref{ass:h} are satisfied, and the requests are drawn from an arbitrary stochastic process with a partition $\varPi$. Then, it holds for any $T\ge 1$ and mean deviation $\delta>0$ that
\[
 \EE_{\vec \gamma \sim \cP} \Big[ \OPT(\vec \gamma) - R(A | \vec \gamma) \Big] \le C_1 + C_2 \eta T + \frac {1} {\eta} C_3 + 2 \ubf  \TV(\cP, \bar \cP, \varPi) + C_4 \eta \ell^2(\varPi)\,,
\]
where $C_1=\frac{\bar f \ubb }{\lbrho}$, 
$C_2=\frac 1 {2 \sigma} (\ubb + \ubrho)^2$, 
$C_3= \sup_{\mu \in \cD: \|\mu\dualnorm \le  C_1  \maxeiterm}  V_h(\mu, \mu_1)$ and 
$C_4= \frac{\sqrt{2}}{2\sigma} (\ubb+\ubrho)^2$.
\end{thm}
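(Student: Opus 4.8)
The plan is to follow the three–step template of the proof of Theorem~\ref{thm:master}—(i) lower bound the cumulative reward up to the stopping time $\tA$ in terms of the dual objective and a complementary-slackness term, (ii) control the complementary-slackness term by online mirror descent, (iii) combine with weak duality and the stopping-time truncation—and to concentrate all the new difficulty in Step~(i), where the i.i.d.\ structure must be replaced by the block structure induced by the partition $\varPi=(\mathcal T_k)_{k=1}^K$. The guiding idea, borrowed from the ergodic proof of Theorem~\ref{thm:ergodic} but with a different ``time shift'': instead of comparing the dual performance at time $t$ to that at time $t+k$, I would compare it to the performance obtained by \emph{freezing} the multiplier at $\mu_{t_k}$, the last iterate whose value is already determined when block $\mathcal T_k$ begins. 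Steps~(ii) and~(iii) then go through essentially verbatim.

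For Step~(i), as before, for $t\le\tA$ the action is unconstrained so that $f_t(x_t)=f_t^*(\mu_t)+\mu_t^\top\rho-\mu_t^\top(\rho-b_t(x_t))$, and after summing to $\tA$ and taking expectations the task reduces to lower bounding $\EE[\sum_{t=1}^{\tA}(\Psi_t(\mu_t)+\mu_t^\top\rho)]$, where $\Psi_t(\mu)=f_t^*(\mu)$ and $\Psi(\mu)=\EE_{(f,b)\sim\bar\cP}[f^*(\mu)]$. Within each block I would write $\Psi(\mu_t)-\Psi_t(\mu_t)=(\Psi(\mu_t)-\Psi(\mu_{t_k}))+(\Psi(\mu_{t_k})-\Psi_t(\mu_{t_k}))+(\Psi_t(\mu_{t_k})-\Psi_t(\mu_t))$. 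The first and third summands are purely deterministic and are bounded using that $f_t^*$ and its $\bar\cP$-average are $\ubb$-Lipschitz in the dual norm, together with the iterate-stability estimate $\|\mu_{t+1}-\mu_t\dualnorm\le\tfrac{\sqrt2}{\sigma}\eta(\ubb+\ubrho)$ from Proposition~\ref{prop:diff-iterate}; this gives a per-block contribution proportional to $\eta\sum_{t\in\mathcal T_k}(t-t_k)\le\tfrac12\eta(t_{k+1}-t_k)^2$, hence an aggregate term of order $C_4\,\eta\,\ell^2(\varPi)$. The middle summand is handled by a martingale / tower-rule argument: conditioning on $\gamma_{1:t_k-1}$ makes $\mu_{t_k}$ measurable, so $\EE[\Psi_t(\mu_{t_k})\mid\gamma_{1:t_k-1}]=\EE_{(f,b)\sim\cP_t(\gamma_{1:t_k-1})}[f^*(\mu_{t_k})]$, and using $0\le f^*(\cdot)\le\ubf$ one bounds $\sum_{t\in\mathcal T_k}(\Psi(\mu_{t_k})-\Psi_t(\mu_{t_k}))$ by $\ubf$ times the total-variation norm of $\sum_{t\in\mathcal T_k}(\cP_t(\gamma_{1:t_k-1})-\bar\cP)$, which after summing over blocks and maximizing over $\vec\gamma$ yields $\ubf\,\TV(\cP,\bar\cP,\varPi)$. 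Jensen's inequality (convexity of $D(\cdot\mid\bar\cP)$ evaluated at $\bar\mu_{\tA}=\tfrac1{\tA}\sum_{t}\mu_t$) then finishes Step~(i).

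Step~(ii) is unchanged: Algorithm~\ref{al:sg} runs online mirror descent on $w_t(\mu)=\mu^\top(\rho-b_t(x_t))$, whose gradients have primal norm at most $\ubb+\ubrho$, so Proposition~\ref{prop:omd} gives $\sum_{t=1}^{\tA}w_t(\mu_t)\le\sum_{t=1}^{\tA}w_t(\mu)+E(T,\mu)$ for all $\mu\in\cD$. In Step~(iii), weak duality (Proposition~\ref{prop:upper_bound}) plus the tower rule yields $\EE_{\vec\gamma\sim\cP}[\OPT(\vec\gamma)]\le T\,D(\mu\mid\bar\cP)+\ubf\,\TV(\cP,\bar\cP,\varPi)$ for every $\mu\in\cD$; splitting $T=\tA+(T-\tA)$ and using $\OPT\le T\ubf$ produces the analogue of \eqref{eq:bound-opt-ergodic}. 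I would then pick the pivot exactly as in Theorem~\ref{thm:master}—$\mu=0$ if $\tA=T$, and $\mu=(\ubf/\rho_j)e_j$ for the resource $j$ that triggers the stopping time otherwise—which in either case has $\|\mu\dualnorm\le\ubf/\lbrho$ and therefore lies in the domain over which the supremum defining $C_3$ is taken, so $\tfrac1\eta V_h(\mu,\mu_1)\le C_3/\eta$; the $(T-\tA)\ubf$ contributions cancel exactly as in Step~3 of Theorem~\ref{thm:master}, leaving $C_1+C_2\eta T+C_3/\eta+2\ubf\,\TV(\cP,\bar\cP,\varPi)+C_4\,\eta\,\ell^2(\varPi)$. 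Theorem~\ref{thm:periodic} then follows by taking $\varPi$ to be the partition of $[T]$ into the $T/q$ cycles and $\bar\cP$ the within-cycle time-averaged marginal: $q$-periodicity forces $\sum_{t\in\mathcal T_k}\cP_t(\gamma_{1:t_k-1})=\sum_{t\in\mathcal T_k}\bar\cP$, so $\TV(\cP,\bar\cP,\varPi)=0$, while $\ell^2(\varPi)=(T/q)q^2=qT$.

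The step I expect to be the main obstacle is Step~(i), specifically making the two bookkeeping devices coexist cleanly: the conditioning/martingale argument forces the comparison point to be the block-boundary iterate $\mu_{t_k}$, while the deterministic Lipschitz-plus-stability bounds must then absorb the drift $\|\mu_t-\mu_{t_k}\dualnorm$ accumulated across the block—all of this while keeping the stopping-time truncation intact (in particular handling the block that straddles $\tA$) and verifying that the pivot used at the end still lies in the ball defining $C_3$. Getting the constant $C_4$ right, through the $\sum_{t\in\mathcal T_k}(t-t_k)\le\tfrac12(t_{k+1}-t_k)^2$ estimate combined with the Lipschitz constant of the dual objective and the stability bound of Proposition~\ref{prop:diff-iterate}, is the one genuinely delicate calculation; everything else is a direct transcription of the ergodic argument.
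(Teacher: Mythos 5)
Your proposal is correct and follows essentially the same route as the paper's proof: decompose the sum over the blocks of $\varPi$, compare the multipliers within each block to the block-boundary iterate $\mu_{t_k}$ via Lipschitzness of the dual together with the stability bound of Proposition~\ref{prop:diff-iterate} (giving the $\eta\,\ell^2(\varPi)$ term), use a tower-rule/martingale argument to replace the conditional distributions $\cP_t(\gamma_{1:t_k-1})$ by $\bar\cP$ (giving the $\TV(\cP,\bar\cP,\varPi)$ term), and leave Steps 2--3, the stopping-time bookkeeping, and the pivot choice exactly as in Theorem~\ref{thm:master}. The only (cosmetic) deviation is that you route the comparison through the stationary dual $\Psi$ evaluated at $\mu_t$ and apply Jensen at the plain average of the $\mu_t$, paying $2\ubb\|\mu_t-\mu_{t_k}\dualnorm$ per step, whereas the paper compares the full per-period dual (including $\rho^\top\mu$) at $\mu_{t_k}$ and applies Jensen at the $\tau_k$-weighted average of the $\mu_{t_k}$, paying $(\ubb+\ubrho)\|\mu_t-\mu_{t_k}\dualnorm$; this changes the coefficient of $\eta\,\ell^2(\varPi)$ only by the factor $2\ubb/(\ubb+\ubrho)\le 2$, and the paper's version recovers the stated constant $C_4$ exactly.
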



\begin{proof} Fix a partition $\varPi$ of $[T]$. We discuss how the steps of the proof of Theorem~\ref{thm:master} need to be adapted to account for the time dependence of distributions of requests. 

\paragraph{Step 1.} Consider a time $t \le \tA$ so that actions are not constrained by resources. Again, we have that 
\[
f_t(x_t)  = f_t^*(\mu_t) + \mu_t^{\top} \rho - \mu_{t}^{\top} (\rho - b_t(x_t))\,.
\]
Summing from $t=1,\ldots,\tA$, decomposing the sum over the partitions and taking expectations we obtain that
\begin{align}\label{eq:f_and_r_stationary}
\begin{split}
    \mathbb E\left[ \sum_{t=1}^{\tA} f_t(x_t)  \right] 
&= 
\mathbb E\left[ \sum_{t=1}^{\tA} f_t^*(\mu_t) + \mu_t^{\top} \rho \right ]-  \mathbb E\left[ \sum_{t=1}^{\tA} \mu_t^\top \left(  \rho - b_t(x_t) \right) \right] \\
&= \sum_{k=1}^K \underbrace{ \mathbb E\left[  \sum_{t=t_k}^{(t_{k+1} - 1) \wedge \tA} \left( f_t^*(\mu_t) + \mu_t^{\top} \rho \right) \right]}_{A_k} -  \mathbb E\left[ \sum_{t=1}^{\tA} \mu_t^\top \left(  \rho - b_t(x_t) \right) \right]\,,
\end{split}
\end{align}
where by convention we take the sum to be zero if $\tA < t_k$. 

Fix a subset $k \in [K]$ of the partition. We next lower bound $A_k$. Fix $\mu \in \cD$. In the proof of Theorem~\ref{thm:ergodic} we showed that $f^*_t(\mu_t) \ge f^*_t(\mu) - \bar b \| \mu_t - \mu\dualnorm $. Therefore, taking $\mu = \mu_{t_k}$ we obtain:
\begin{align*}
   A_k &\ge \mathbb E\left[ \sum_{t=t_k}^{(t_{k+1} - 1) \wedge \tA} \left( f_t^*( \mu_{t_k}) + \rho^\top \mu_{t_k} \right) \right] - (\bar a + \bar b)  \sum_{t \in \mathcal T_k} \mathbb E\left[ \| \mu_t - \mu_{t_k}\dualnorm\right]\,.
\end{align*}
{ Let $D(\mu | \tilde{\cP}) = \EE_{(f,b,\cX) \sim \tilde{\cP}} \left[ f^*(\mu) \right] + \mu^\top \rho $ be the expected dual function when requests are distributed according to $\tilde{\cP} \in \Delta(\cS)$.} For the first term, use a martingale argument to write
\begin{align*}
    &\mathbb E\left[  \sum_{t=t_k}^{(t_{k+1} - 1) \wedge \tA} \left( f_t^*( \mu_{t_k}) + \rho^\top \mu_{t_k} \right) \right]
    = \mathbb E\left[  \sum_{t=t_k}^{(t_{k+1} - 1) \wedge \tA} D(\mu_{t_k} | \cP_t (\gamma_{1:t_k - 1}) ) \right]\\
    \quad&\ge \mathbb E\left[ \underbrace{\max( \min(t_{k+1} - t_k, \tA +1 -t_k ), 0 )}_{\tau_k} D(\mu_{t_k} | \bar \cP ) \right]
    - \ubf \mathbb E\left[ \left \| \sum_{t \in \mathcal T_k} \cP_t (\gamma_{1:t_k-1}) -  \bar \cP \right\|_\TV \right] \ ,
\end{align*}
because $0 \le f^*(\mu) \le \ubf$ as argued in the proof of Theorem~\ref{thm:independent}. Let $C=\frac{\sqrt{2}}{\sigma} (\ubb+\ubrho)$, then it follows from Proposition \ref{prop:diff-iterate} that $\|\mu_{t}-\mu_{t-1}\dualnorm\le C \eta$. Meanwhile, it holds from triangle inequality that:
\[
    \sum_{t=t_k}^{t_{k+1}-1} \| \mu_t - \mu_{t_k}\dualnorm \le C \eta \sum_{t=t_k}^{t_{k+1}-1} (t - t_{k}) \le \frac C 2 \eta  (t_{k+1} - t_{k})^2\,.
\]
Therefore,
\begin{align}\label{eq:bound_dual_stationary}
    \sum_{k=1}^K A_k &\ge \mathbb E\left[ \sum_{k=1}^K \tau_k D(\mu_{t_k} | \bar \cP )  - \ubf \sum_{k=1}^K \left \| \sum_{t \in \mathcal T_k} \cP_t (\gamma_{1:t_k-1}) -  \bar \cP \right\|_\TV \right] - \frac C 2 (\bar a + \bar b) \eta \sum_{k=1}^K (t_{k+1} - t_{k})^2 \nonumber\\
    &= \mathbb E\left[ \sum_{k=1}^K \tau_k D(\mu_{t_k} | \bar \cP ) \right]  - \ubf \cdot \TV(\cP, \bar \cP, \varPi) - \frac C 2 (\bar a + \bar b) \eta \cdot \ell^2(\varPi) \nonumber\\
    &\ge \mathbb E\left[ \tA D(\bar \mu | \bar \cP) \right]  - \ubf \cdot \TV(\cP, \bar \cP, \varPi) - \frac C 2 (\bar a + \bar b) \eta \cdot \ell^2(\varPi)\,,
\end{align}
where we used that $\sum_{k=1}^K \tau_k = \tA$ together with $\bar \mu = \frac 1 \tA \sum_{k=1}^k \tau_k \mu_{t_k}$ and that $D(\mu | \cP)$ is convex in $\mu$.

\paragraph{Step 2.} This step applies directly because the analysis is deterministic in nature.

\paragraph{Step 3.} Define $\OPT(\cP) := \EE_{\vec \gamma \sim \cP} [ \OPT(\vec \gamma) ]$. Proposition~\ref{prop:upper_bound} implies that for every $\mu \in \cD$ we have 
\begin{align*}
    \OPT(\cP) &= \EE_{\vec \gamma \sim \cP} [ \OPT(\vec \gamma) ] \le \mathbb E_{\vec \gamma \sim \cP} \left[ \sum_{t=1}^T \left( f_t^*(\mu)+ \rho^\top \mu \right) \right]\\
    &= \mathbb E_{\vec \gamma \sim \cP} \left[ \sum_{k=1}^K \sum_{t \in \mathcal T_k} D(\mu | \cP_t (\gamma_{1:t_k-1}) ) \right] \le T D(\mu | \bar \cP) + \ubf \cdot \TV(\cP, \bar \cP, \varPi) \,,
\end{align*}
where the second equality follows form the tower rule for conditional expectations, and the last inequality is because $0 \le f_t^*(\mu) \le \ubf$. Therefore, for any  $\tA \in [0,T]$ and $\mu \in \cD$ we have that
\begin{align}\label{eq:bound-opt-stationary}
\OPT(\cP) &=  \frac{\tA}{T} \OPT(\cP)  + \frac{T-\tA}{T}\OPT(\cP) \le  \tA D(\mu| \bar \cP) + \pran{T-\tA}{\ubf} + \ubf \cdot \TV(\cP, \bar \cP, \varPi)\ ,
\end{align}
where the inequality uses that $\OPT\le T \ubf$. Combining \eqref{eq:f_and_r_stationary}, \eqref{eq:bound_dual_stationary} and \eqref{eq:bound-opt-stationary}, it holds that
\begin{align*}
    \begin{split}
        \EE \mpran{R(A|\vgamma)}
        &\ge  \sum_{k=1}^K A_k - \mathbb E\left[ \sum_{t=1}^{\tA} \mu_t^\top \left(  \rho - b_t(x_t) \right) \right]\\
        &\ge  \mathbb E\left[ \tA D(\bar \mu | \bar \cP) \right]  - \ubf \cdot \TV(\cP, \bar \cP, \varPi) - \frac C 2 (\bar a + \bar b) \eta \cdot \ell^2(\varPi)- \mathbb E\left[ \sum_{t=1}^{\tA} \mu_t^\top \left(  \rho - b_t(x_t) \right) \right] \\
        &\ge  \OPT   - 2 \ubf \cdot \TV(\cP, \bar \cP, \varPi) -   \frac C 2 (\bar a + \bar b) \eta \cdot \ell^2(\varPi) - \pran{T-\tA}{\ubf}- \mathbb E\left[ \sum_{t=1}^{\tA} w_t(\mu) + E(T,\mu) \right]\ ,
    \end{split}
\end{align*}
where the last inequality utilize \eqref{eq:regret_omd}.
Repeating the arguments in the proof of Theorem~\ref{thm:master} and substituting $\mu$ as that in the proof of Theorem~\ref{thm:master}, we obtain
\begin{align*}
\OPT(\cP) - \EE \mpran{R(A|\vgamma)} \le 2 \ubf \cdot \TV(\cP, \bar \cP, \varPi) +   \frac C 2 (\bar a + \bar b) \eta \cdot \ell^2(\varPi) + \frac{\bar f \ubb }{\lbrho} + \frac 1 {2 \sigma} (\ubb + \ubrho)^2 \eta\cdot T + \frac{1}{\eta} \mathbb E\left[ V_h(\mu, \mu_1) \right] \ ,
\end{align*}
which finishes the proof by noticing $\mathrm{TV}_k(\cP, \bar \cP,\varPi)\le \delta$ {\color{black} and bounding $\EE \left[ V_h(\mu, \mu_1) \right]$ as in the proof of Theorem~\ref{thm:master}.}
\end{proof}

Theorem \ref{thm:periodic} is a special case of Theorem \ref{thm:periodic-general}. Suppose the requests are periodic with cycles of length $q$, that is, $\gamma_{1:q} \stackrel{(d)}{=} \gamma_{q+1:2q} \stackrel{(d)}{=} \gamma_{2q+1:3q}$ and so on. In this case, we let the partition $\varPi$ coincide with the cycles, i.e., $\mathcal T_k = \{(k-1)q + 1,\ldots, kq\}$ and $\bar \cP = \frac 1 q \sum_{t=1}^q \cP_t$ where $\cP_t$ denotes the unconditional distribution of the $t$-th entry of a cycle, thus $\ell^2(\varPi) = q^2 \lceil T/q \rceil\le 2qT$ and $\TV(\cP, \varPi, \bar \cP) = 0$. Applying Theorem \ref{thm:periodic-general} directly results in Theorem \ref{thm:periodic}.


\section{Proofs in Section~\ref{sec:applications}}

\subsection{Proof of Proposition~\ref{prop:matching} }\label{sec:stochastic-functions}

Suppose that both rewards and resource consumption are random. Let $\vec \zeta = (\zeta_1, \ldots, \zeta_T)$ denote a vector of i.i.d.~random variables that determine the realization of the above processes. We assume without loss that $\zeta_t$ are uniformly distributed in $[0,1]$ as any other random variable can be obtained from inverse transform sampling. At time $t$, the request $\gamma_t = (f_t, b_t, \cX_t)$ gives a random reward function $f_t : \cX_t \times [0,1] \rightarrow \RR_+$ and random resource consumption function $b_t : \cX_t \times [0,1] \rightarrow \RR_+^m$. As in Assumption~\ref{ass:p}, we assume these random quantities are bounded almost surely as follows $0\le f_t(x,\zeta) \le \ubf$ and $\|b_t(x,\zeta) \primalnorm \le \ubb$.

\paragraph{Online Algorithm.} The timing of events is as follows. At the beginning of time period $t$, the decision maker receives a request $\gamma_t$. The decision maker chooses an action $x_t \in \cX_t$. Afterwards, the random variable $\zeta_t$ is revealed, which leads to a reward $f_t(x_t, \zeta_t)$ and consumes $b_t(x_t, \zeta_t)$ resources. We emphasize that the realization of $\zeta_t$ is revealed \emph{after} taking action.  We denote by $R(A | \vgamma, \vec \zeta) = \sum_{t=1}^T f_t(x_t, \zeta_t)$ the performance of algorithm $A$.

Algorithm~\ref{al:sg} can be adapted to this setting by computing, for each request, the expected reward function $f_t(x) = \mathbb E_{\zeta_t}[ f_t(x, \zeta_t) ]$ and expected resource consumption function $b_t(x) = \mathbb E_{\zeta_t}[ b_t(x, \zeta_t)]$, and then choosing the primal decision and the sub-gradient in expectation. That is, we set 
	\begin{align*}
		\tilde{x}_{t} &= \arg\max_{x\in\cX_t}\left\{f_{t}(x)-\mu_{t}^{\top} b_{t}(x)\right\} \\ 
		\tg_t &= -b_t(\tilde x_t) + \rho\ .
	\end{align*}
Resources are updated according to the realized resource consumption.

\paragraph{Benchmark.} The benchmark is now defined in terms of the expected reward and resource consumption functions.  Let $\Omega_t = (f_t(x), b_t(x))_{x\in\cX_t} \subseteq \mathbb \RR_+^{m+1}$ be the set of achievable expected reward and resource consumption pairs, where we denote the expected reward function by $f_t(x) = \mathbb E_{\zeta_t}[ f_t(x, \zeta_t) ]$ and the expected resource consumption function by $b_t(x) = \mathbb E_{\zeta_t}[ b_t(x, \zeta_t) ]$. Let $\hat \Omega_t$ be its convex hull, i.e., the smallest convex set containing $\Omega_t$. Then, we define the offline optimum as follows
\begin{align}\label{eq:OPT-random}
    \OPT(\vec \gamma)=
	\max_{(\hat f_t, \hat b_t) \in \hat \Omega_t} \left\{  \sum_{t=1}^T \hat f_t    \,\,\text{s.t.}\,
	\sum_{t=1}^T \hat b_t \le T\rho \right\}\,.
\end{align}
The offline problem provides an upper bound on the performance of every algorithm, i.e., $$\EE_{\vec \zeta} [R(A | \vgamma, \vec \zeta)] \le \OPT(\vec \gamma)\,.$$ To see this, set $\hat f_t = \mathbb E_{\vec \zeta}[ f_t(x_t, \zeta_t) ]$ and $\hat b_t = \mathbb E_{\vec \zeta}[ b_t(x_t, \zeta_t) ]$ and note that $(\hat f_t, \hat b_t) \in \hat \Omega_t$. Here we used that $\mathbb E_{\vec \zeta}[ f_t(x_t, \zeta_t) ] = \mathbb E_{\vec \zeta_{1:t-1}}[ \mathbb E_{\zeta_t}[ f_t(x_t, \zeta_t) ]] = \mathbb E_{\zeta_{1:t-1}}[ f_t(x_t) ]$ by the tower rule and because $x_t$ is independent of $\zeta_s$ for $s \ge t$. We remark that the action $x_t$ could be potentially dependent on the past realizations $\zeta_{1:t-1}$. Every feasible algorithm should satisfy $\sum_{t=1}^T b_t(x_t, \zeta_t) \le T \rho$. Taking expectations, we obtain that $(\hat f_t, \hat b_t)_{t=1}^T$ is feasible for \eqref{eq:OPT-random} and attains an objective value of $\sum_{t=1}^T \hat f_t$. The claim follows.

The same duality bound in terms of the expected reward and resource consumption functions applies. To see this, dualize the resource constraint with a Lagrange multiplier $\mu \ge 0$ to obtain
\begin{align*}
 \OPT(\vgamma)
&\le   \sum_{t=1}^T \sup_{(\hat f_t, \hat b_t) \in \hat \Omega_t} \left\{ \hat f_t - \mu^\top \hat b_t \right\} + T \rho^\top \mu\\
&=  \sum_{t=1}^T \sup_{x_t \in \cX_t} \left\{ f_t(x_t) - \mu^\top b_t(x_t) \right\} + T \rho^\top \mu\\
&=   \sum_{t=1}^T f_t^*(\mu)  +T \rho^\top \mu 
=  D(\mu | \vgamma ) \ ,
\end{align*}
where the first equality uses that the maximum of a linear function over a convex set is always verified at an extreme point together with the fact that the extreme points of $\hat \Omega_t$ lie in $\Omega_t$, and the last equality utilizes the definition of $f^*_t$ in terms of the expected functions.

\paragraph{Performance Bounds.} Our theorems still hold after taking the expectation over $\zeta$. In the sequel, we prove the result for the case of stochastic i.i.d.~input.

The proof essentially follows exactly from the proof of Theorem~\ref{thm:master} after taking the expectation on $\zeta$. The only difference is that the stopping time $\tA$ is now defined in terms of the realized consumption $b_t(x_t, \zeta_t)$ instead of the expected consumption $b_t(x_t)$. In particular, we define the stopping time $\tA$ as the first time less than $T$ that there exists resource $j$ such that 
\begin{equation*}
\sum_{t=1}^{\tA} (b_t(x_t, \zeta_t))_j + \ubbinfty \ge \rho_j T \ . 
\end{equation*}
As before, we are guaranteed not to violate the resource constraints before the stopping time $\tA$.

In the last step, where we used the complementary slackness term to bound the stopping time, we now have that the functions
\[
    w_t(\mu) = \mu^\top(\rho - b_t(x_t))
\]
are given in terms of the expected resource depletion. Using a martingale argument, we can show that
\begin{equation}\label{eq:martingal}
    \mathbb E\left[ \sum_{t=1}^{\tA} w_t(\mu) \right] 
    = 
    \mathbb E\left[ \sum_{t=1}^{\tA} \mu^\top(\rho - b_t(x_t))  \right]
    = 
    \mathbb E\left[ \sum_{t=1}^{\tA} \mu^\top(\rho - b_t(x_t, \zeta_t))  \right]\,,
\end{equation}
and the rest of the proof follows. To prove \eqref{eq:martingal}, consider the sigma algebra $\mathcal F_t = \sigma(\gamma_{1:t+1}, \zeta_{1:t})$. Note that $b_{t+1}(x_{t+1}) \in \mathcal F_t$ but $b_{t+1}( x_{t+1}, \zeta_{t+1}) \in \mathcal F_{t+1}$. At first, it holds that $M_t:=\sum_{s=1}^t b_s(x_s, \zeta_s) - b_s(x_s)$ is a martingale with respect to $\mathcal F_t$, because $M_t \in \mathcal F_t$, $\EE_\zeta(\|M_t\primalnorm)\le 2\ubb t$ for any $t$, and
\begin{equation*}
    \EE \mpran{M_{t+1}-M_{t}|\mathcal F_t}= \EE_{\zeta_{t+1}}[b_{t+1}(x_{t+1}, \zeta_{t+1})] - b_{t+1}(x_{t+1}) = 0 \ .
\end{equation*}
Moreover, $\tA$ is measurable with respect to $\mathcal F_t$ and, thus, $\tA$ is a stopping time with respect to $\mathcal F_t$. It then follows by Martingale Optional Stopping Theorem that $\EE_{\zeta} [M_{\tA}] = \EE_{\zeta}[M_1]=0$. The claim follows.

\section{Online Mirror Descent}

We reproduce a standard result on online mirror descent for completeness.

\begin{prop}\label{prop:omd}
Consider the sequence of convex functions $w_t:\mathcal D \rightarrow \mathbb R$ with $\mathcal{D} \subseteq \RR^m$ a convex set. Let $g_t \in \partial_\mu w_t(\mu_t)$ be a sub-gradient and
   \begin{equation*}
       \mu_{t+1} = \arg\min_{\mu\in\mathcal{D}} g_t^\top \mu + \frac{1}{\eta} V_h(\mu, \mu_t)\,.
   \end{equation*}
Suppose sub-gradients are bounded by $\| g_t \primalnorm \le G$ and the reference function $h$ is $\sigma$-strongly convex with respect to $\|\cdot\dualnorm$-norm. Then, for every $\mu \in \mathcal D$ we have
\begin{align}\label{eq:mirror-descent-bound}
    \sum_{t=1}^{T} w_t(\mu_t) - w_t(\mu) \le \frac{G^2 \eta}{2\sigma}  T + \frac 1 \eta V_h(\mu,\mu_1)\,.
\end{align}
\end{prop}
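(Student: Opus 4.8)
The plan is to follow the classical analysis of online mirror descent, combining convexity of the losses with a one-step ``progress lemma'' expressed through the Bregman divergence. First I would reduce the claim to a linear regret bound: since each $w_t$ is convex and $g_t \in \partial_\mu w_t(\mu_t)$, we have $w_t(\mu_t) - w_t(\mu) \le g_t^\top(\mu_t - \mu)$, so it suffices to bound $\sum_{t=1}^{T} g_t^\top(\mu_t - \mu)$ for an arbitrary fixed $\mu \in \mathcal D$.

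Next I would extract the key inequality from the first-order optimality condition of the update $\mu_{t+1} = \arg\min_{\mu\in\mathcal D}\{g_t^\top\mu + \frac{1}{\eta}V_h(\mu,\mu_t)\}$. Stationarity over the convex set $\mathcal D$ gives, for every $\mu \in \mathcal D$,
\[
\Big(g_t + \tfrac{1}{\eta}\big(\nabla h(\mu_{t+1}) - \nabla h(\mu_t)\big)\Big)^\top(\mu - \mu_{t+1}) \ge 0 .
\]
Rearranging and using the three-point identity $(\nabla h(\mu_{t+1}) - \nabla h(\mu_t))^\top(\mu - \mu_{t+1}) = V_h(\mu,\mu_t) - V_h(\mu,\mu_{t+1}) - V_h(\mu_{t+1},\mu_t)$ yields
\[
g_t^\top(\mu_{t+1} - \mu) \le \tfrac{1}{\eta}\big(V_h(\mu,\mu_t) - V_h(\mu,\mu_{t+1}) - V_h(\mu_{t+1},\mu_t)\big).
\]

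Then I would write $g_t^\top(\mu_t - \mu) = g_t^\top(\mu_t - \mu_{t+1}) + g_t^\top(\mu_{t+1}-\mu)$ and control the ``drift'' term $g_t^\top(\mu_t-\mu_{t+1})$. By duality of norms, $g_t^\top(\mu_t-\mu_{t+1}) \le \|g_t\primalnorm\,\|\mu_t-\mu_{t+1}\dualnorm \le G\,\|\mu_t-\mu_{t+1}\dualnorm$, and by $\sigma$-strong convexity of $h$, $V_h(\mu_{t+1},\mu_t) \ge \frac{\sigma}{2}\|\mu_{t+1}-\mu_t\dualnorm^2$. Maximizing $Gs - \frac{\sigma}{2\eta}s^2$ over $s = \|\mu_t-\mu_{t+1}\dualnorm \ge 0$ gives $g_t^\top(\mu_t-\mu_{t+1}) - \frac{1}{\eta}V_h(\mu_{t+1},\mu_t) \le \frac{G^2\eta}{2\sigma}$. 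Combining the last two displays, $g_t^\top(\mu_t-\mu) \le \frac{G^2\eta}{2\sigma} + \frac{1}{\eta}\big(V_h(\mu,\mu_t) - V_h(\mu,\mu_{t+1})\big)$. Summing over $t = 1,\ldots,T$, the Bregman terms telescope, and dropping the nonnegative term $V_h(\mu,\mu_{T+1})\ge 0$ gives $\sum_{t=1}^{T} g_t^\top(\mu_t-\mu) \le \frac{G^2\eta}{2\sigma}T + \frac{1}{\eta}V_h(\mu,\mu_1)$, which with the convexity bound is exactly \eqref{eq:mirror-descent-bound}.

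The main obstacle is making the optimality-condition step rigorous under the paper's standing assumptions on $h$: since $h$ is only required to be differentiable or essentially smooth on $\RR^m_+$, I need the minimizer $\mu_{t+1}$ to exist (Weierstrass, via coercivity from strong convexity) and to lie at a point where $\nabla h$ is defined, so that the stationarity inequality and the three-point identity are valid; this is precisely what essential smoothness provides, since minimizers of $V_h$-regularized problems stay in the interior of $\mathrm{dom}\,h$. Everything else is elementary algebra and telescoping.
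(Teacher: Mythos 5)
Your proposal is correct and follows essentially the same route as the paper's proof: the first-order optimality condition of the Bregman-regularized step, the three-point identity to convert it into a telescoping difference of divergences, strong convexity plus norm duality to absorb the drift term $g_t^\top(\mu_t-\mu_{t+1})$ into a $\frac{G^2\eta}{2\sigma}$ penalty (your maximization of $Gs-\frac{\sigma}{2\eta}s^2$ is the same algebra as the paper's use of $a^2+b^2\ge 2ab$ with Cauchy--Schwarz), and finally convexity of $w_t$ and summation. Your closing remark on existence and interiority of the minimizer matches how the paper handles this via its standing assumptions on $h$ (essential smoothness and strong convexity, with Weierstrass guaranteeing the projection step is well defined), so no gap remains.
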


\begin{proof}
Because the Bregman divergence $V_h$ is differentiable and convex in its first argument, and $\mathcal D$ is convex, the first order conditions for the Bregman projection (see, e.g., Proposition 2.1.2 in \citealt{Bertsekas}) are given by
\begin{align}\label{eq:foc-bregman}
    \left( g_t + \frac 1 \eta \left( \nabla h(\mu_{t+1}) - \nabla h(\mu_t)\right)\right)^\top \left( \mu - \mu_{t+1} \right) \ge 0\,, \quad \forall \mu \in \mathcal D\,.
\end{align}
Therefore, it holds for any $\mu \in \mathcal D$ that
\begin{equation}  \label{eq:ineq_chain1}
\begin{split}
     \langle g_t, \mu_t - \mu \rangle\
    &=\langle g_t, \mu_t - \mu_{t+1} \rangle\ + \langle g_t, \mu_{t+1} - \mu \rangle\ \\
    &\le\langle g_t, \mu_t - \mu_{t+1} \rangle\ + \frac 1 \eta \left( \nabla h(\mu_{t+1}) - \nabla h(\mu_t)\right)^\top \left( \mu - \mu_{t+1} \right) \\
    &=\langle g_t, \mu_t - \mu_{t+1}\rangle + \frac{1}{\eta} V_h(\mu,\mu_t) - \frac{1}{\eta} V_h(\mu,\mu_{t+1}) - \frac{1}{\eta} V_h(\mu_{t+1},\mu_t) \\
 &\le  \langle g_t, \mu_t - \mu_{t+1}\rangle + \frac{1}{\eta} V_h(\mu,\mu_t) - \frac{1}{\eta} V_h(\mu,\mu_{t+1}) - \frac{\sigma}{2\eta} \|\mu_{t+1}-\mu_t\dualnorm^2 \\
  &\le   \frac{\eta}{2\sigma}\|g_t\primalnorm^2  + \frac{1}{\eta} V_h(\mu,\mu_t) - \frac{1}{\eta} V_h(\mu,\mu_{t+1}) \\
  &\le \frac{\eta}{2\sigma} G^2  + \frac{1}{\eta} V_h(\mu,\mu_t) -\frac{1}{\eta} V_h(\mu,\mu_{t+1}) \ , \\
\end{split}
\end{equation}
where the first inequality follows from \eqref{eq:foc-bregman}; the second equality follows from Three-Point Property stated in Lemma 3.1 of \citet{chen1993convergence}; the second inequality is by strong convexity of $h$; the third inequality uses that $a^2 + b^2 \ge 2 a b$ for $a,b \in \RR$ and Cauchy-Schwarz to obtain
\begin{align*}
    \frac{\sigma}{2\eta} \|\mu_{t+1}-\mu_t\dualnorm^2+\frac{\eta}{2\sigma}\|g_t\primalnorm^2 &\ge \|\mu_{t+1}-\mu_t\dualnorm\|g_t\primalnorm\\ 
    &\ge   |\langle g_t, \mu_t - \mu_{t+1}\rangle| \ ,
\end{align*}
and the last inequality follows from the bound on gradients. Therefore, by convexity of $w_t(\cdot)$, we obtain that
\begin{align}\label{eq:part1}
     \sum_{t=1}^{T} w_t(\mu_t) - w_t(\mu) \le \sum_{t=1}^{T} \langle g_t, \mu_t - \mu \rangle
     \le  \frac{G^2 \eta}{2\sigma}  T + \frac 1 \eta V_h(\mu,\mu_1) \ .
\end{align}
where the inequality follows from summing up  \eqref{eq:ineq_chain1} from $t=1$ to $t=T$, telescoping, and using that the Bregman divergence is non-negative.
\end{proof}

\vspace{0.2cm}

\section{Numerical Experiments}\label{sec:numerical}

{\color{black}
\subsection{Online Linear Programming}\label{sec:num-olp}
In this section, we present numerical experiments on online linear programming (Section \ref{sec:nrm}) under synthetic stochastic i.i.d.~inputs. The goal of this section is to verify the dependence of regret for Algorithm \ref{al:sg} on time horizon $T$, resource dimension $m$ and primal decision dimension $d$ as stated in Theorem \ref{thm:master} with different reference functions.

\textbf{Data generation:} We consider an online linear program, as specified in Section \ref{sec:nrm}. At time $t=1,\ldots,T$, the firm receives a consumption matrix $c_t\in\RR^{m \times d}$ and a reward vector $r_t\in\RR^d$, and needs to make a real-time decision $x_t\in \cX_t=\Delta_d$, where $\Delta_d$ is the standard simplex in $\RR^d$, i.e., $\Delta_d=\{x\in \RR^d: \|x\|_1=1, x\ge 0\}$. In our experiment, the entries in the $i$-th row of the matrix $c_t$ are generated from a Bernoulli distribution with probability parameter $p_i$, for $i=1,\ldots,m$. Then the reward vector is generated as $r_t=\text{Proj}_{[0,\bar r]}\left\{\theta^\top c_t + \delta_t \mathbf 1\right\}$, where $\theta \in \RR^m$ is an unknown parameter that connects consumption $c_t$ with reward $r_t$ and $\delta_t \in \RR$ is an i.i.d.~Gaussian noise. The projection step makes sure that each entry of the reward vector is always larger than $0$ and smaller than $\bar{r}$. The resource vector $\rho$ is set to be $\beta * p$, where $\beta\in\RR^m$ and each entry is generated from a uniform distribution $\text{U}(0.25, 0.75)$. The values of $\beta$ control the consumption budget-ratio for each resource.

In our experiments, we set $p_i=(1+\alpha)/2$ where $\alpha\sim \text{Beta}(1,3)$ comes from a beta distribution. This guarantees $\rho=\beta * p \in [0.125, 0.75]^m$ are generated from a rescaled beta distribution with $\ubrho=0.75$ and $\lbrho=0.125$. Furthermore, setting $\bar{r} = 10$ guarantees that $\ubf=10$ is a valid upper bound on the reward function, and $\ubb=\max_{x\in\Delta_d}\|c_t x\|_{\infty}=1$ provides a valid upper bound on the consumption. We generate $\theta$ from a standard multi-variate Gaussian distribution $N(0,\text{diag}(1))$ and then it is normalized to satisfy $\|\theta\|_2=1$. The noise $\delta_t$ is generated i.i.d.~from Gaussian distribution $N(0,1)$ for each time $t$.

\textbf{Random trials:} There are two layers of randomness in the experiments: randomness coming from generating the parameters of the model (namely $\alpha, \beta, p, \theta$), and randomness coming from generating the reward vector $r_t$ and consumption matrix $c_t$ from the model with given parameters. In the numerical experiments, we first obtain $10$ random sets of the parameters (for the first layer of randomness), and for each set of parameters, we run our algorithm (Algorithm \ref{al:sg}) $10$ times (for the second layer of randomness). In total, we have $100$ random trials and report the average regret as well as its confidence interval.

\textbf{Regret computation:} For each random trial, we compute the cumulative reward obtained by Algorithm \ref{al:sg}. We then compute the average cumulative reward over the $100$ trials as our expected reward of Algorithm \ref{al:sg}, i.e., $\EE_{\vgamma\in\vcP} \left[ R(A|\vgamma) \right]$. Computing $\OPT(\vgamma)$ exactly is computationally prohibiting for the scale of our experiments as the offline problem can be too large to be stored in memory. We instead utilize $\EE_{\vgamma\in\vcP} \left[ D(\bmu_T|\vgamma)\right]$ as an upper bound of $\OPT$, where $\bmu_T = \frac{1}{T} \sum_{t=1}^T \mu_t$ is the average of the dual variables produced by our algorithm. We compute (an upper bound of) the regret as $ \EE_{\vgamma \sim \vcP} \left[D(\bmu_T|\vgamma) - R(A|\vgamma) \right]$.

\textbf{Results:} We here examine the dependence of the regret for Algorithm \ref{al:sg} over $T, m, d$  with three dual updates: online gradient descent (OGD) as defined in \eqref{eq:ogd}, multiplicative weights update (MWU) as defined \eqref{eq:mwu}, and multiplicative weights update with projection (MWU-P) as explained in Appendix \ref{sec:mwu-p}. The step-sizes $\eta$ in the experiments are set as follows
\begin{itemize}
    \item Online gradient descent ({OGD}): $\eta=s/\sqrt{Tm}$,
    \item Multiplicative weights update ({MWU}): $\eta=s/\sqrt{T}$, 
    \item Multiplicative weights update with projection ({MWU-P}): $\eta=s/\sqrt{T}$ ,
\end{itemize}
where $s$ is tuned from the set $\{0.1, 1, 10, 100\}$ to achieve the smallest regret.

\begin{figure}
\centering
  \begin{subfigure}[b]{0.49\textwidth}
    \includegraphics[width=\textwidth]{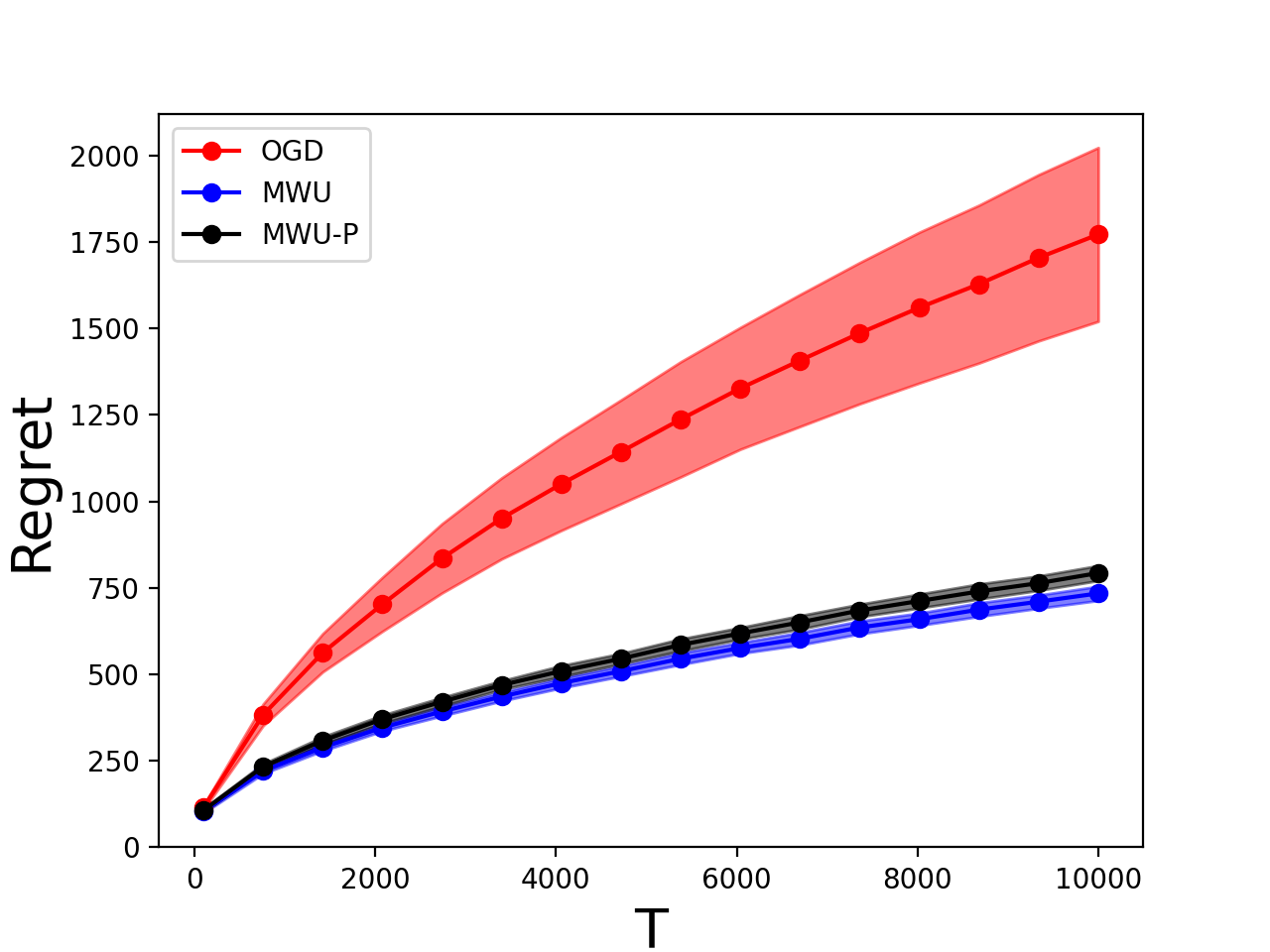}
    \caption{regret versus $T$, with $m=100$ and $d=10$}
  \end{subfigure}
  \hfill
  \begin{subfigure}[b]{0.49\textwidth}
    \includegraphics[width=\textwidth]{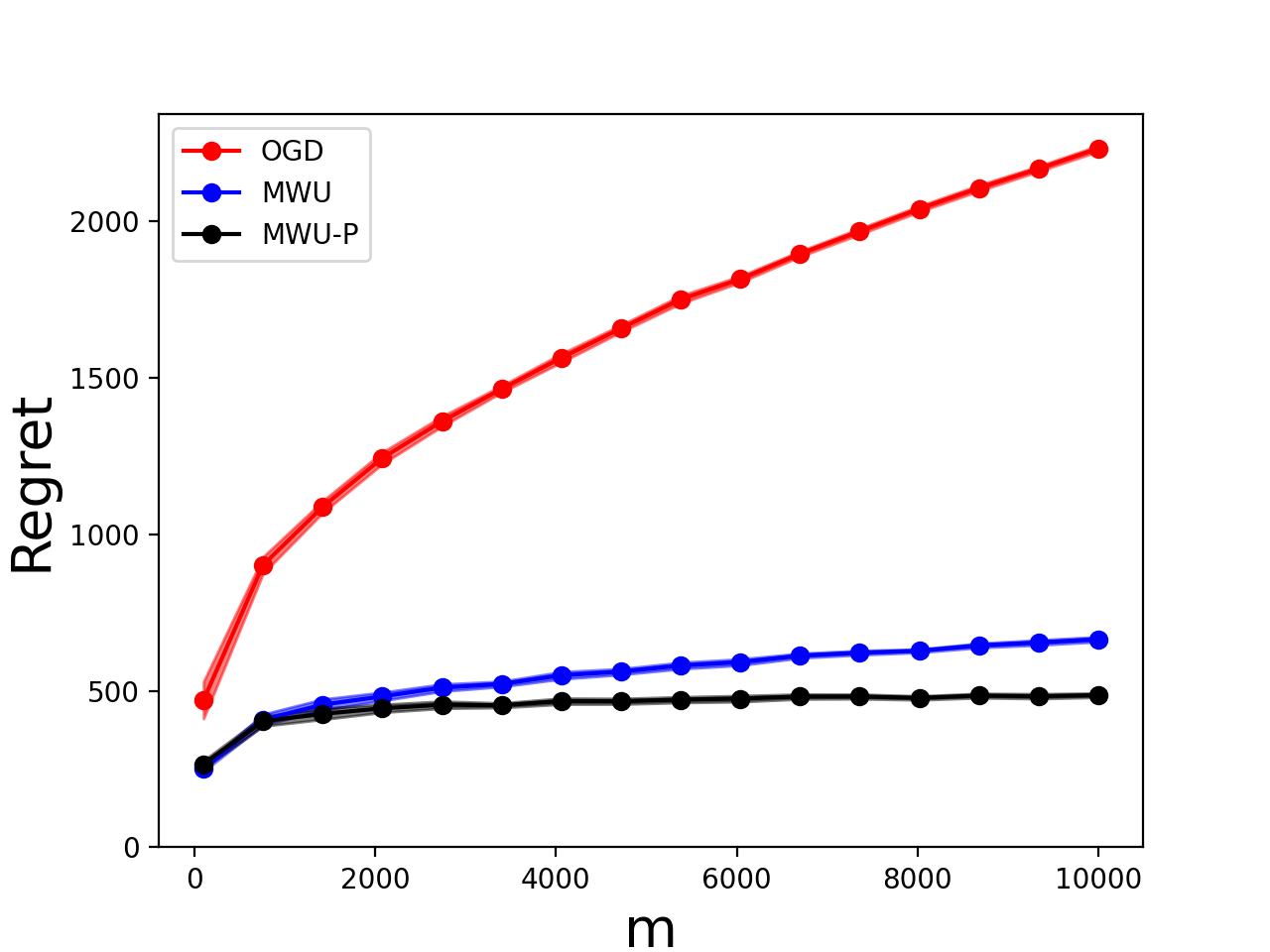}
    \caption{regret versus $m$, with $T=1000$ and $d=10$}
  \end{subfigure}
  \hfill
  \begin{subfigure}[b]{0.49\textwidth}
    \includegraphics[width=\textwidth]{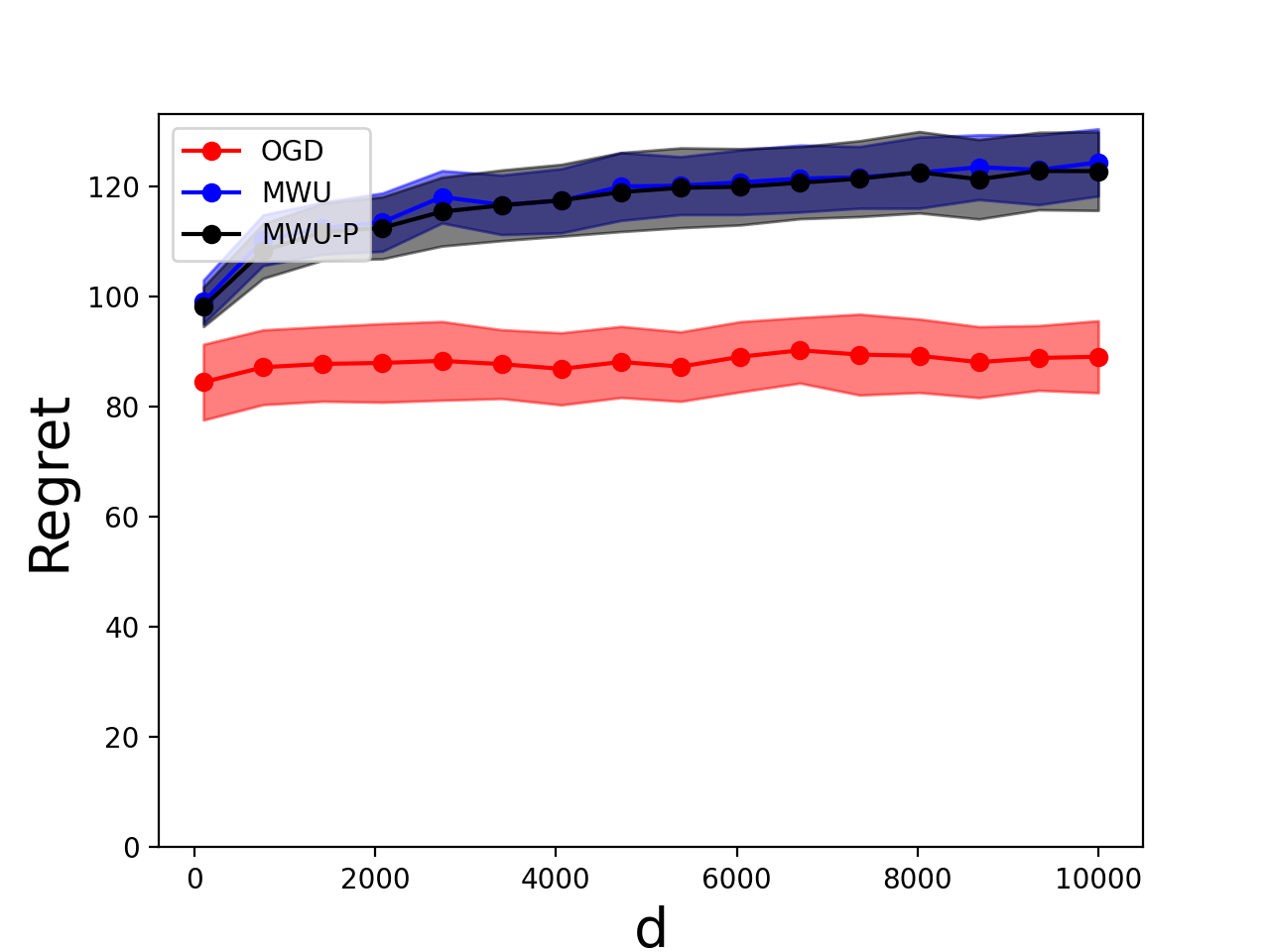}
    \caption{regret versus $d$, with $T=1000$ and $m=100$}
  \end{subfigure}
  \caption{Plots showing the regret and its $95\%$ confidence interval (in shadow) obtained by OGD, MWU and MWU-P versus time horizon $T$, resource dimension $m$, and decision dimension $d$, respectively. \label{fig:online-lp}}
\end{figure}

Figure \ref{fig:online-lp} plots the regret as well as its $95\%$ confidence interval (in shadow) obtained by OGD, MWU and MWU-P versus time horizon $T$, resource dimension $m$, and decision dimension $d$. Each dot is the average regret for $100$ random trials of the corresponding algorithm. The follow observations are in order.
\begin{enumerate}[(a)]
    \item \emph{Length of horizon.} For all three algorithms, the regret has $O(T^{1/2})$ growth, which is consistent with our theory.
    \item \emph{Number of resources.} The regret of MWU-P has slower growth compared with MWU and OGD in $m$, which is consistent with our theory that the regret of MWU-P has $O(\log(m))$ growth, while the regrets of OGD and MWU have $O(m^{1/2})$ growth.
    \item \emph{Primal decision dimension.} For all three algorithms, the regret does not grow with $d$, which is consistent with our theory.
\end{enumerate}

}


\subsection{Proportional Matching}\label{sec:num-pm}
In this section, we present numerical experiments on proportional matching with high entropy (Section \ref{sec:matching}) under i.i.d.~input and ergodic input.

\textbf{Data generation:} For stochastic i.i.d.~input, we use the dataset introduced by  \citet{balseiro2014yield}. They consider the problem faced by a publisher who has to deliver impressions to advertisers so as to maximize click-through rates. (They consider the secondary objective of maximizing revenue from a spot market, which we do not take into account in this experiments). We incorporate the entropy regularizer $H(x)$ to the objective with parameter $\lambda = 0.0002$, which was tuned to balance the diversity and efficiency of the allocation. In each problem instance, there are $m$ advertisers; advertiser $j$ can be assigned at most $\rho_j T$ impressions. The reward vector $r_t$ gives the expected click-through rate of assigning the impression to each advertiser. In their paper, they parametrically estimate click-through rates using mixtures of log-normal distributions. Because they do not report the actual data used to estimate their model, we instead take their estimate model as a generative model and sample impressions from the distributions provided in their paper. We generated 500,000 samples for each publisher, and we present the results for publisher 2 and publisher 5 from their dataset.

{
To test the performance of our algorithm under ergodic input, we perturb the dataset of \citet{balseiro2014yield} to introduce autocorrelation in the click-through rates while maintaining the same marginal distributions for each time period. In our ergodic dataset, the click-through rates follow an AR(1) process, i.e., an autoregressive process with a one-period lag. Namely, we set $\log(r_t) = c \log(r_{t-1}) + \epsilon_t$ where $c \in [0,1]$ captures the amount of autocorrelation and $\epsilon_t$ are multivariate i.i.d.~random variables with mean and covariance matrix chosen so that $r_t$ is distributed as in the original dataset. In particular, setting $c=0$ recovers the case of i.i.d.~input, while setting $c=1$ leads to rewards being constant over time. In our experiments, we set $c \in \{0, 0.5, 0.9\}$.}

\textbf{Random trials:} There are two layers of randomness in Algorithm~\ref{al:sg} when resource consumption is stochastic: randomness coming from the data (i.e., $\cP$), and randomness coming from the proportional matching (i.e., $\vec\zeta$). In the numerical experiments, we first obtain $50$ random datasets with size $T$ (for the first layer of randomness), and for each dataset, we run our algorithm $50$ times (for the second layer of randomness).

\textbf{Regret and relative reward computation:} For each random trial with given round $T$, we compute the cumulative reward obtained by Algorithm~\ref{al:sg}. We then compute the average cumulative reward over the 2,500 trials as our expected reward of Algorithm~\ref{al:sg}, i.e., $\EE_{\vgamma\in\vcP} \left[ R(A|\vgamma) \right]$. {Computing $\OPT(\vgamma)$ exactly can be expensive, as this is a large convex optimization problem. We instead utilize $\EE_{\vgamma\in\vcP} \left[ D(\bmu_T|\vgamma)\right]$ as an upper bound of $\OPT$, where $\bmu_T = \frac{1}{T} \sum_{t=1}^T \mu_t$ is the average of the dual variables produced by our algorithm. We compute (an upper bound of) the regret as $ \EE_{\vgamma \sim \vcP} \left[D(\bmu_T|\vgamma) - R(A|\vgamma) \right]$, and (an lower bound of) the relative reward as $\EE_{\vgamma \sim \vcP} \left[ R(A|\vgamma) \right] / \EE_{\vgamma\in\vcP} \left[D(\bmu_T|\vgamma)\right]$. To showcase the robustness of our algorithm, we use the same step-size for both input models.}

\textbf{Results:}
We report in Figure \ref{fig:regret} the regret as well as its $95\%$ confidence intervals for publisher 2 and publisher 5 with both i.i.d.~input (i.e. $c=0$) and ergodic input (with different correlation level $c$). Each dot plots the average regret for 2,500 random trials. In these experiments, we utilize step-size $\eta=T^{-1/2}$ without much tuning. We can clearly see that the regret for both stochastic i.i.d.~input and ergodic input have $T^{1/2}$ growth, verifying our theory. As expected, the performance when requests have autocorrelation get worse.

Similarly, Figure \ref{fig:relative_reward} plots the relative rewards for both publishers with i.i.d.~input and ergodic input. As $T$ increases, the relative rewards increases, and eventually should converge to one for all correlation levels since the regret is $\tilde O(T^{1/2})$ and the rewards collected by our algorithm grow at a rate $\Omega(T)$. The relative reward goes above $80\%$ within 10,000 online samples for i.i.d.~input ($c=0$) and ergodic input with small correlation ($c=0.5$), which showcases the effectiveness of our proposed algorithm.

Figure \ref{fig:step_size} plots the regret and its $95\%$ confidence intervals as a function of $T$ with different step-size $\eta=s \cdot T^{-1/2}$ for the i.i.d.~input with $s \in \{0.1, 1, 10\}$. We see that the regret has $T^{1/2}$ grow for all step-size levels $s$. Furthermore, as we can see from Figure \ref{fig:step_size}, the regret is less sensitive to the step-size for publisher~2, while it is more sensitive for publisher~5. This suggests that further performance improvements can be obtained by properly tuning the step-size.

\begin{figure}
\centering
  \begin{subfigure}[b]{0.49\textwidth}
    \includegraphics[width=\textwidth]{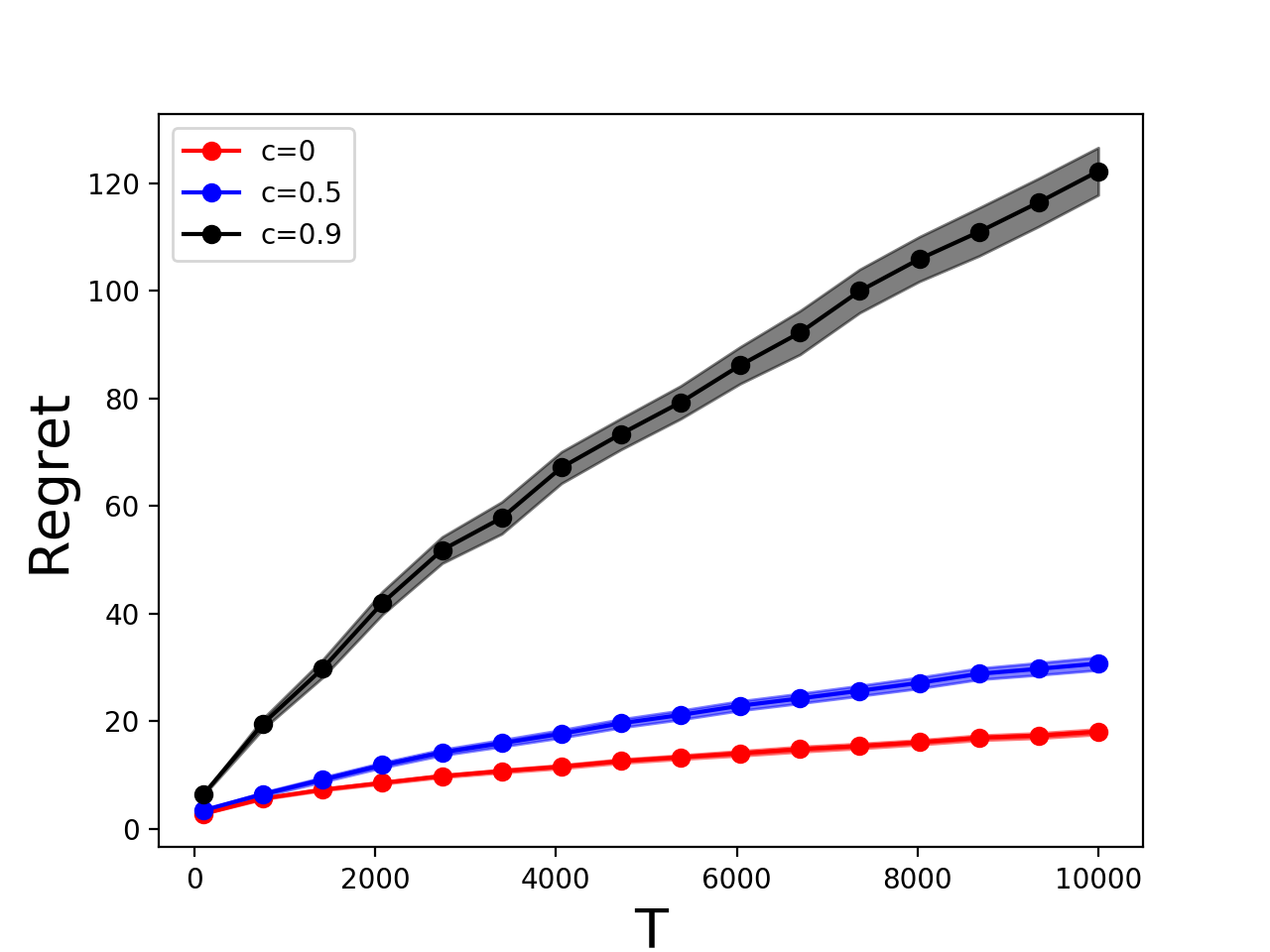}
  \end{subfigure}
  \begin{subfigure}[b]{0.49\textwidth}
    \includegraphics[width=\textwidth]{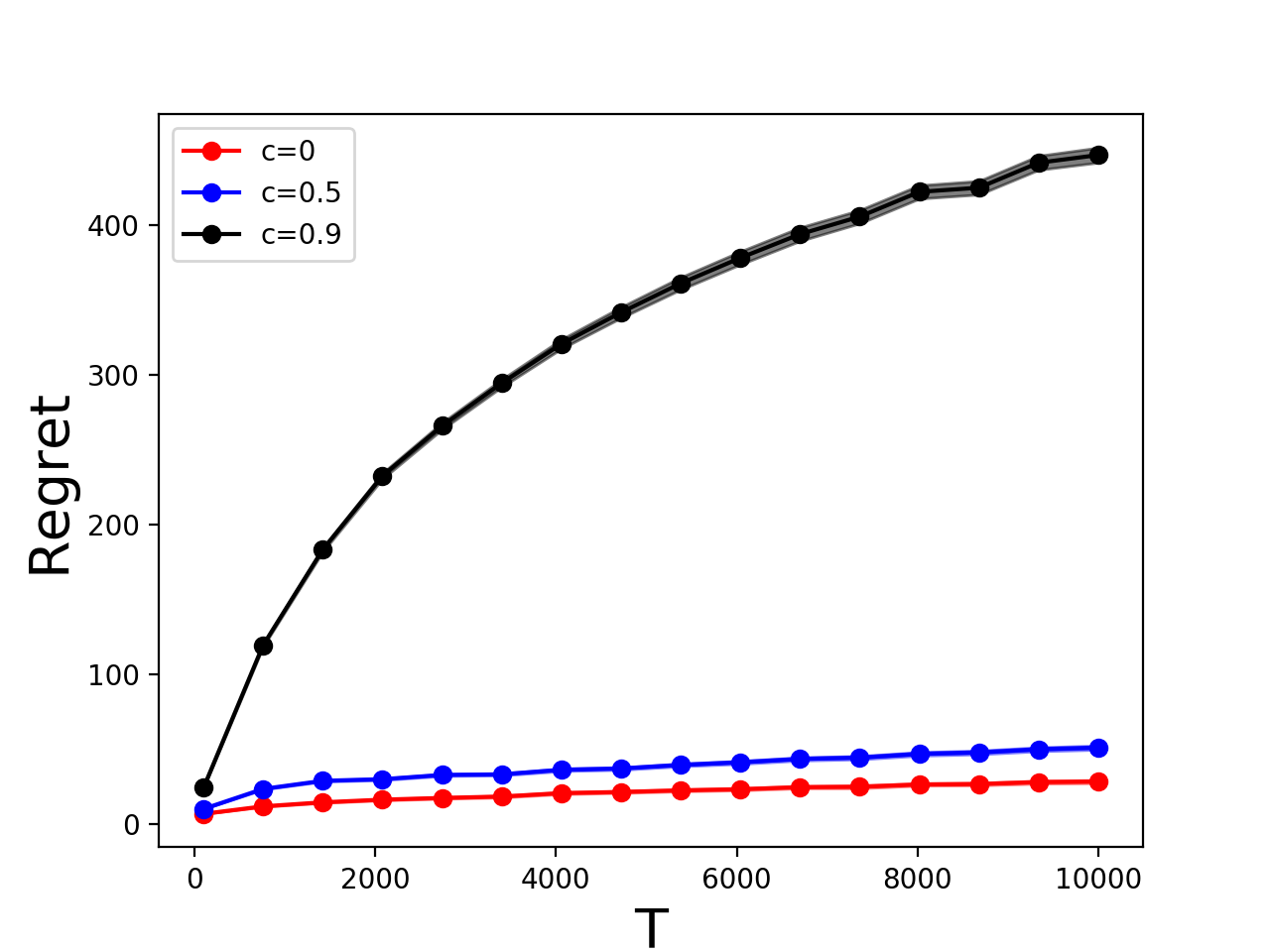}
  \end{subfigure}
  \caption{Regret versus horizon $T$ for i.i.d.~(c=0) and ergodic input for publisher 2 (left) and publisher 5 (right).\label{fig:regret}}

\centering
  \begin{subfigure}[b]{0.49\textwidth}
    \includegraphics[width=\textwidth]{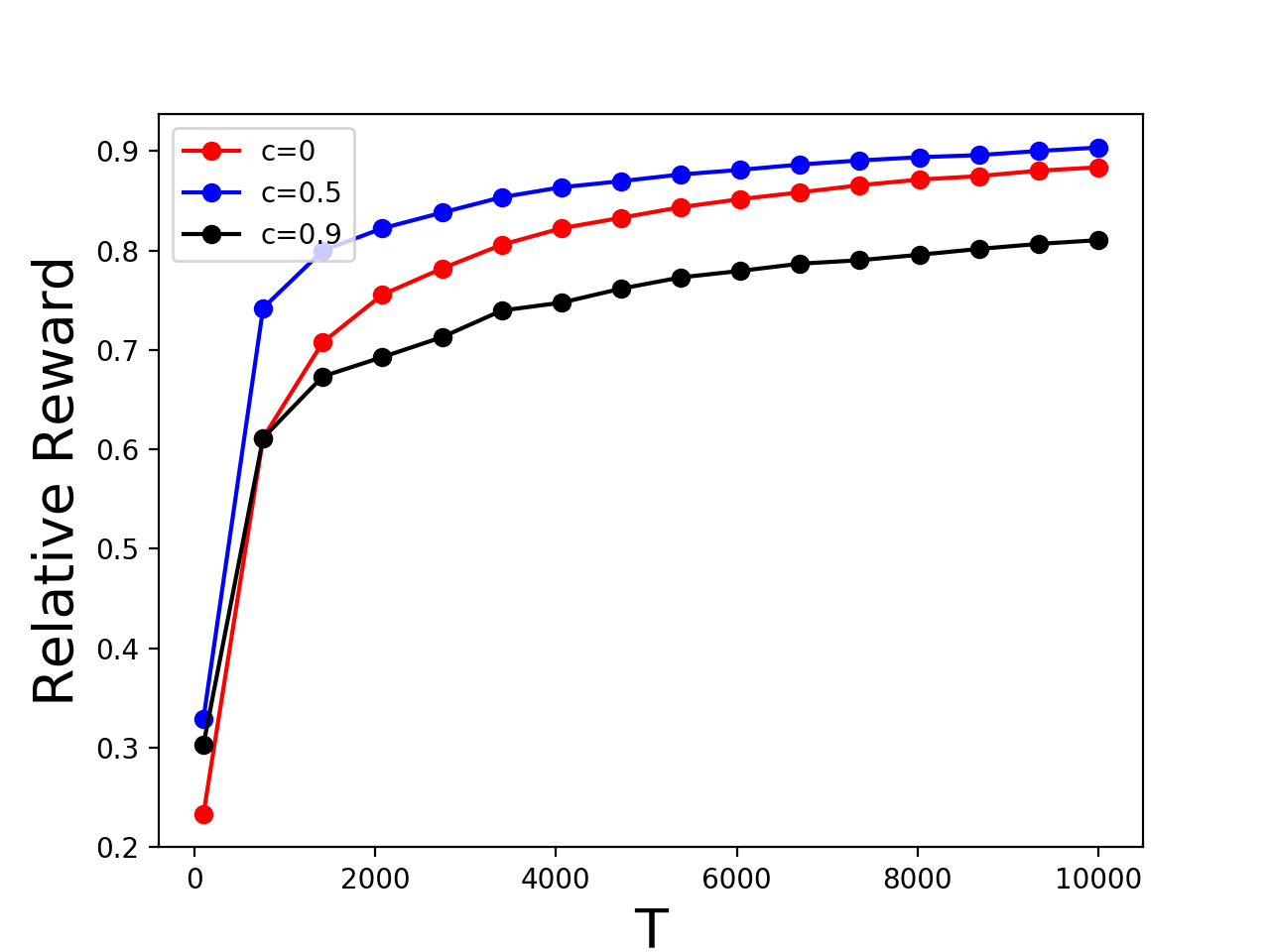}
  \end{subfigure}
  \begin{subfigure}[b]{0.49\textwidth}
    \includegraphics[width=\textwidth]{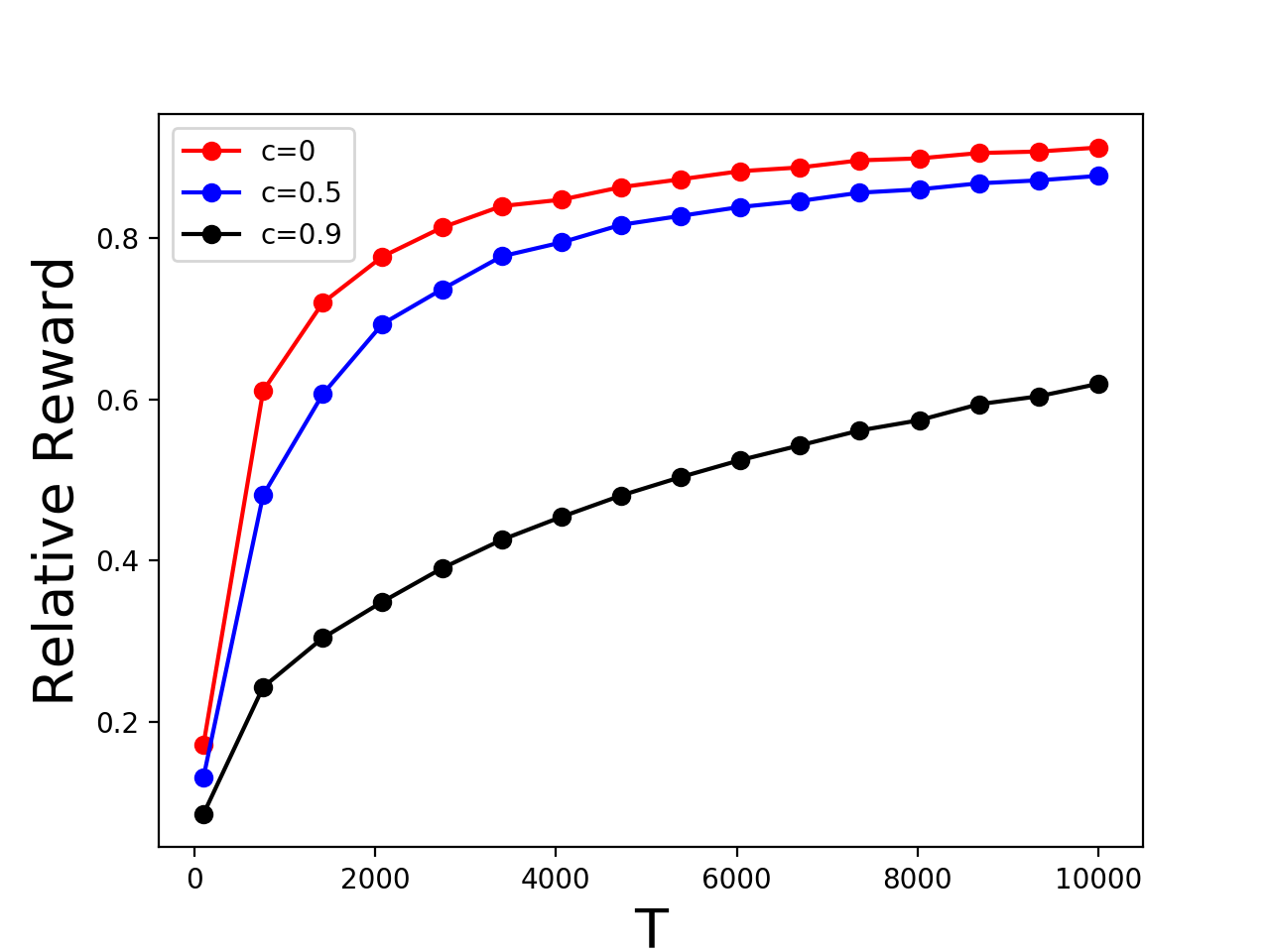}
  \end{subfigure}
  \caption{Relative reward versus horizon $T$ for i.i.d.~(c=0) and ergodic input for publisher 2 (left) and publisher 5 (right).\label{fig:relative_reward}}

\centering
  \begin{subfigure}[b]{0.49\textwidth}
    \includegraphics[width=\textwidth]{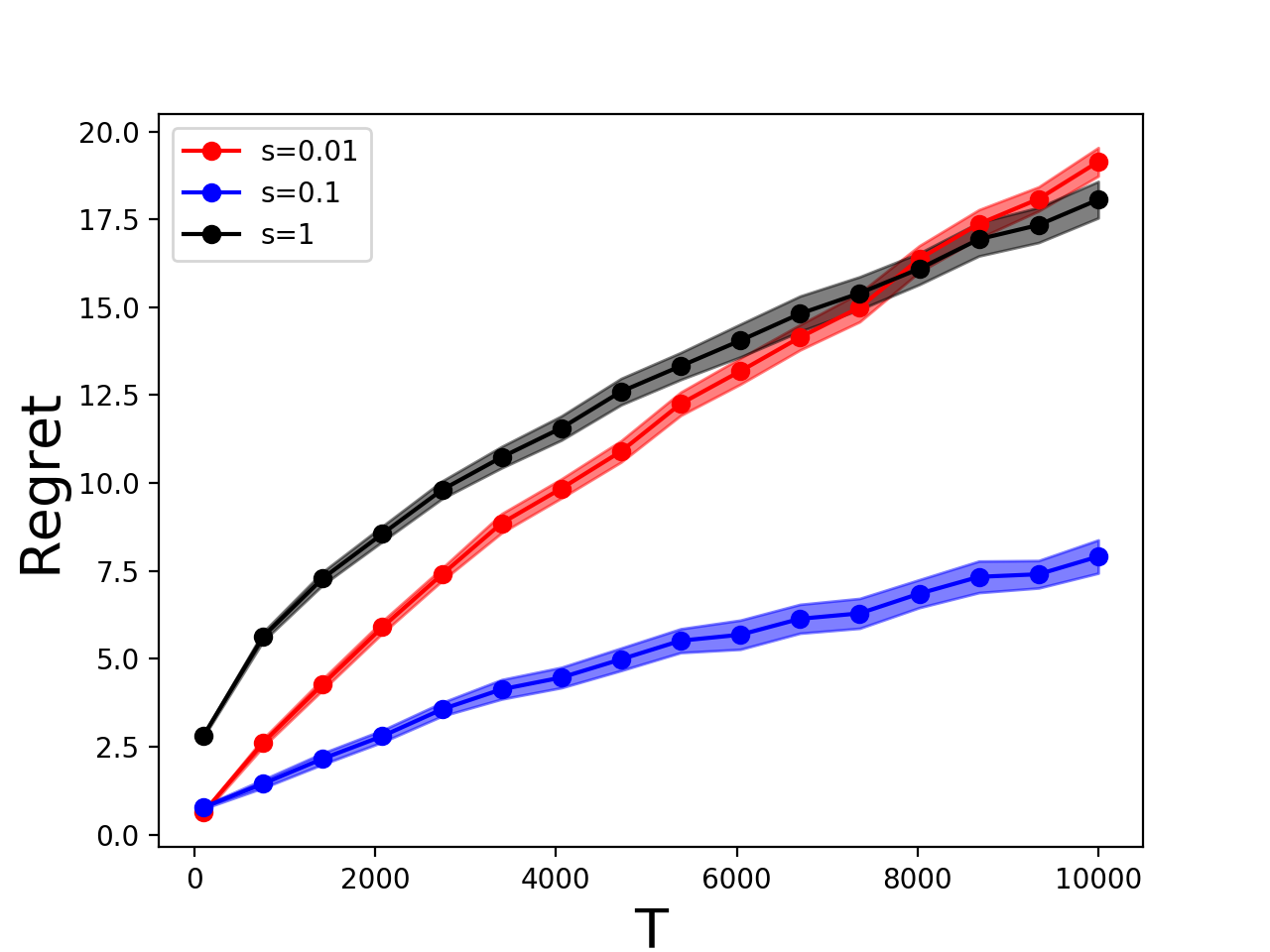}
  \end{subfigure}
  \begin{subfigure}[b]{0.49\textwidth}
    \includegraphics[width=\textwidth]{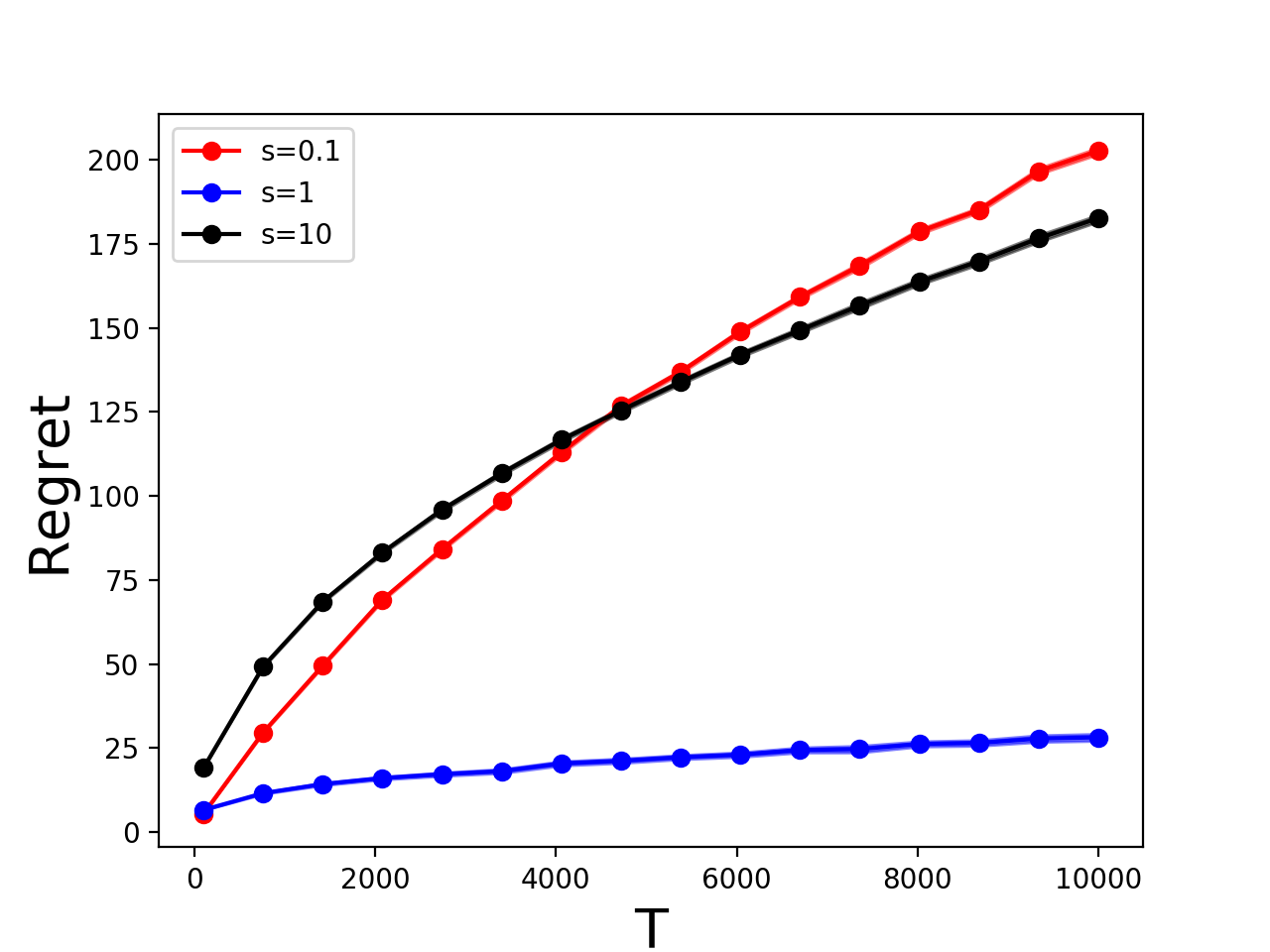}
  \end{subfigure}
  \caption{Regret versus horizon $T$ with different step-size value $\eta=s \cdot T^{-1/2}$ and i.i.d.~input for publisher 2 (left) and publisher 5 (right).\label{fig:step_size}}
\end{figure}




\vspace{0.2cm}

\end{document}